\documentclass[twoside]{article}

\pagestyle{plain}
\textwidth 14.8 cm
\textheight 20.4 cm
\oddsidemargin 1.2 cm
\evensidemargin 1.2 cm

\usepackage{tikz,stmaryrd,amssymb,amsmath,amsthm,url}

\usepackage{etoolbox}
\makeatletter
\patchcmd{\l@section}
  {\hfil}
  {\leaders\hbox{\normalfont$\m@th\mkern \@dotsep mu\hbox{.}\mkern \@dotsep mu$}\hfill}
  {}{}
\makeatother

\newtheorem{theorem}{Theorem}[section]
\newtheorem{lemma}[theorem]{Lemma}
\newtheorem{proposition}[theorem]{Proposition}
\newtheorem{corollary}[theorem]{Corollary}
\newtheorem{definition}[theorem]{Definition} 
\newtheorem{question}[theorem]{Question} 
\theoremstyle{definition}
\newtheorem{example}[theorem]{Example}

\newtheorem*{thm:sclcpl*}{Theorem~\ref{thm:sclcpl}}
\newtheorem*{thm:nfs*}{Theorem~\ref{thm:nfs}}

\newcommand{\ALPHA}{\alpha}
\newcommand{\Au}{A^{u}}

\newcommand{\baf}{\ensuremath{\mathit{bf}}}
\newcommand{\T}{\NT}
\newcommand{\NT}{\ensuremath{\mathcal{T}_A}}
\newcommand{\BF}{\ensuremath{\textit{BF}_A}}
\newcommand{\BFf}[1]{\ensuremath{\textit{BF}_{#1}}}
\newcommand{\PS}{\ensuremath{C_A}}
\newcommand{\PSf}[1]{\ensuremath{C_{#1}}}
\newcommand{\ci}{h}
\newcommand{\cj}{j}
\newcommand{\Fi}{H}
\newcommand{\Fj}{J}
\newcommand{\ri}{r}
\newcommand{\Le}{L}
\newcommand{\Ri}{R}
\newcommand{\f}{F}
\newcommand{\g}{G}
\newcommand{\fr}{\ensuremath{\mathit{se}}}
\newcommand{\rp}{\ensuremath{\mathit{rpf}}}
\newcommand{\rpt}{\ensuremath{\mathit{rp}}}
\newcommand{\rpse}{\ensuremath{\mathit{rpse}}}
\newcommand{\rpbf}{\ensuremath{\mathit{rpbf}}}
\newcommand{\con}{\ensuremath{\mathit{cf}}}
\newcommand{\crt}{\ensuremath{\mathit{cr}}}
\newcommand{\crse}{\ensuremath{\mathit{cse}}}
\newcommand{\crbf}{\ensuremath{\mathit{cbf}}}
\newcommand{\mem}{\ensuremath{\mathit{mf}}}
\newcommand{\memt}{\ensuremath{\mathit{m}}}
\newcommand{\memse}{\ensuremath{\mathit{mse}}}
\newcommand{\membf}{\ensuremath{\mathit{mbf}}}
\newcommand{\stse}{\ensuremath{\mathit{sse}}}
\newcommand{\stbf}{\ensuremath{\mathit{sbf}}}
\newcommand{\vcstse}[1]{\ensuremath{=_{\stse,#1}}}
\newcommand{\vcstbf}[1]{\ensuremath{=_{\stbf,#1}}}
\newcommand{\tr}{\ensuremath{{\sf T}}}
\newcommand{\fa}{\ensuremath{{\sf F}}}
\newcommand{\true}{\ensuremath{\textit{true}}}
\newcommand{\false}{\ensuremath{\textit{false}}}

\newcommand{\lef}{{\raisebox{0pt}{\footnotesize$\;\triangleleft~$}}}
\newcommand{\rig}{{\raisebox{0pt}{\footnotesize$~\triangleright\;$}}}
\renewcommand{\unlhd}{{\raisebox{1pt}{\footnotesize$\;\underline{\triangleleft}~$}}}
\renewcommand{\unrhd}{{\raisebox{1pt}{\footnotesize$~\underline{\triangleright}\;$}}}
\newcommand{\CP}{\ensuremath{\normalfont\axname{CP}}}
\newcommand{\CPrp}{\CP_{\mathit{rp}}(A)}
\newcommand{\CPcon}{\CPcr}
\newcommand{\CPcr}{\CP_{\mathit{cr}}(A)}
\newcommand{\CPmem}{\CP_{\mathit{mem}}}
\newcommand{\CPstat}{\CP_{\mathit{s}}}
\newcommand{\CPst}{\CP_{\mathit{st}}}

\newcommand{\axname}[1]{\textup{{\ensuremath{\textrm{#1}}}}}

\newcommand{\qedex}{\textit{End example.}}

\begin{document}

\title{Evaluation Trees for Proposition Algebra}

\author{%
    Jan A. Bergstra \quad\&\quad Alban Ponse\\[1mm]
 {\small Section Theory of Computer Science, Informatics Institute}\\[-1mm]
 {\small  Faculty of Science, University of Amsterdam}\\
 {\small \url{https://staff.science.uva.nl/{j.a.bergstra/,a.ponse/}}}
}

\date{}

\maketitle

\begin{abstract}
Proposition algebra is based on Hoare's 
conditional connective, which is a ternary connective comparable to if-then-else and used in 
the setting of propositional logic.
Conditional statements are provided with a simple semantics that is based on evaluation 
trees and that characterizes so-called free valuation congruence: 
two conditional statements are free valuation 
congruent if, and only if, they have equal evaluation trees.
Free valuation congruence is axiomatized by the four basic equational axioms of 
proposition algebra that define the conditional connective.
Valuation congruences that identify more conditional statements
than free valuation congruence are repetition-proof,
contractive, memorizing, and static valuation congruence. 
Each of these valuation congruences is characterized using a transformation 
on evaluation trees:
two conditional statements are C-valuation 
congruent if, and only if, their C-transformed evaluation trees are equal.
These transformations are simple and natural, and only for static valuation 
congruence a slightly more complex transformation is used. 
Also, each of these valuation congruences is axiomatized in proposition algebra.
A spin-off of our
approach can be called ``normalization functions for proposition algebra'': 
for each valuation congruence C considered, 
two conditional statements are C-valuation congruent if, and only if, 
the C-normalization function returns equal images.
\\[2mm]
\emph{Keywords:}
Conditional composition,
evaluation tree,
proposition algebra,
short-circuit evaluation,
short-circuit logic,
side effect
\end{abstract}

{\small \tableofcontents}
\section{Introduction}
\label{sec:Intro}
In 1985, Hoare's paper \emph{A couple of novelties in the propositional calculus}~\cite{Hoa85}
was published. In this paper the ternary connective
\(\_\lef \_\rig \_\)
is introduced as the \emph{conditional}.\footnote{To be 
 distinguished from Hoare's \emph{conditional}
 introduced in his 1985~book on CSP~\cite{Hoa85a} 
 and in his well-known 1987 paper \emph{Laws of Programming}
 \cite{HHH87} 
 for expressions $P\lef b\rig Q$ with $P$ and $Q$ denoting programs and 
 $b$ a Boolean expression.}
A more common expression for a conditional statement 
\[P\lef Q\rig R\]
is
\[\texttt{if }Q\texttt{ then }P\texttt{ else }R,\]
but in order to reason 
systematically with conditional statements, a notation
such as $P\lef Q\rig R$ is preferable. 
In a conditional statement $P\lef Q\rig R$, first $Q$ is evaluated, and depending 
on that evaluation result, 
either $P$ or $R$ is evaluated (and the other is not) and determines
the evaluation value. 
This evaluation strategy is a form of \emph{short-circuit} evaluation.\footnote{Short-circuit 
   evaluation denotes the semantics of binary propositional connectives in which the second
   argument is evaluated only if the first argument does not suffice 
   to determine the value of the expression.}
In \cite{Hoa85}, Hoare proves that propositional logic 
is characterized by eleven equational axioms, some of which employ constants 
$\tr$ and \fa\ for the truth values \true\ and \false. 

In 2011, we introduced \emph{Proposition Algebra} in~\cite{BP10} as a general approach to the 
study of the conditional: we defined several \emph{valuation congruences} and provided
equational axiomatizations of these congruences. 
The most basic and least identifying valuation congruence is \emph{free} valuation congruence, 
which is axiomatized by the axioms in Table~\ref{CP}.
\begin{table}[t]
\centering
\hrule
\begin{align*}
\label{cp1}
\tag{CP1} x \lef \tr \rig y &= x\\
\label{cp2}\tag{CP2}
x \lef \fa \rig y &= y\\
\label{cp3}\tag{CP3}
\tr \lef x \rig \fa  &= x\\
\label{cp4}\tag{CP4}
\qquad
    x \lef (y \lef z \rig u)\rig v &= 
	(x \lef y \rig v) \lef z \rig (x \lef u \rig v)
\end{align*}
\hrule
\label{CP}
\caption{The set \CP\ of equational axioms for free valuation congruence}
\end{table}
These axioms stem from~\cite{Hoa85} and define the conditional as a primitive connective.
We use the name \CP\ (for Conditional Propositions) for this set of axioms.
Interpreting a conditional statement as an if-then-else expression, 
axioms~\eqref{cp1}-\eqref{cp3} are natural, and axiom~\eqref{cp4} 
(distributivity) can be clarified by case analysis: 
if $z$ evaluates to \true\ and $y$ as well, then  $x$ determines the
result of evaluation;
if $z$ evaluates to \true\ and $y$ evaluates to \false, then  $v$ determines the
result of evaluation, and so on and so forth.
A simple example, taken from~\cite{BP10}, is the conditional statement that a pedestrian 
evaluates before crossing a road with two-way traffic driving on the right:
\[
(\textit{look-left-and-check}\:\lef\:\textit{look-right-and-check}\:\rig\fa)
\lef\:\textit{look-left-and-check}\:\rig\fa.
\]
This statement requires one, or two, or three atomic evaluations and cannot be
simplified to one that requires less.\footnote{\label{fn1}Note that 
$\textit{look-left-and-check}\:\lef(\textit{look-right-and-check}\:\lef\:
\textit{look-left-and-check}\:\rig\fa)\rig\fa$ prescribes by axioms~\eqref{cp4} and \eqref{cp2}
the same evaluation.}

In \S\ref{sec:free} we characterize free valuation congruence
with help of \emph{evaluation trees}: 
given a conditional statement, its evaluation tree 
directly represents all its evaluations (in the way a truth table does in
the case of propositional logic). 
Two conditional statements are equivalent with respect to free valuation congruence
if their evaluation trees are equal.
Evaluation trees are simple binary trees, 
proposed by Daan Staudt in~\cite{Daan} (that appeared in 2012). 
Free valuation congruence identifies less than the equivalence
defined by Hoare's axioms in~\cite{Hoa85}.
For example, the atomic
proposition $a$ and  the conditional statement $\tr\lef a\rig a$
are not equivalent with respect to free valuation congruence,
although they are  equivalent with respect to \emph{static} valuation congruence, 
which is the valuation congruence that characterizes propositional logic.

In \S\ref{sec:rp} we consider \emph{repetition-proof} valuation congruence,
a valuation congruence that identifies more than free and less than static 
valuation congruence. Repetition-proof valuation congruence is axiomatized 
by \CP\ extended with two (schematic) axioms, one of which reads 
\[x\lef a\rig (y\lef a\rig z)=x\lef a\rig (z\lef a\rig z)\]
for any atomic proposition $a$,
and thus expresses that if $a$ evaluates to \false, a consecutive evaluation
of $a$ also evaluates to \false, so the conditional statement at the $y$-position will not 
be evaluated and can be replaced by any other.
As an example, 
\[\tr\lef a\rig a=\tr\lef a\rig(\tr\lef a\rig\fa)=\tr\lef a\rig(\fa\lef a\rig\fa),\]
and the left-hand and right-hand conditional statements 
are equivalent with respect to repetition-proof valuation congruence,
but not with respect to free valuation congruence.
We characterize repetition-proof valuation congruence by 
defining a transformation on evaluation 
trees that yields \emph{repetition-proof evaluation trees}: two conditional
statements are equivalent with respect to repetition-proof valuation congruence
if, and only if, they have equal {repetition-proof evaluation trees}. 
Although this transformation on evaluation trees is simple and 
natural, our proof of the mentioned characterization 
 |which is phrased as a completeness result| 
is non-trivial and we could not find a proof that is essentially simpler. 

Valuation congruences that identify more conditional statements
than repetition-proof valuation congruence are 
contractive, memorizing, and static valuation congruence, and these are all
defined and axiomatized in~\cite{BP10}.
In \S\S\ref{sec:cr}-\ref{sec:stat}, each of these valuation congruences is 
characterized using a transformation on evaluation trees:
two conditional statements are C-valuation 
congruent if, and only if, their C-transformed evaluation trees are equal.
These transformations are simple and natural, and only for static valuation 
congruence we use a slightly more complex transformation. 

In \S\ref{sec:Conc} we discuss the general structure of the proofs of the
axiomatization results in \S\S\ref{sec:rp}-\ref{sec:stat}, each of which 
is based on a normalization function for conditional statements.
Then we end the paper with a brief digression on short-circuit logic,
an example on the use of repetition-proof valuation congruence, and some
remarks about side effects. 

\medskip

A shortened version of this paper, not covering the material in 
\S\S\ref{sec:cr}-\ref{sec:stat}, was published as~\cite{BP12a}.
However, we simplified the proof of Lemma 3.15 in~\cite{BP12a} (Lemma~\ref{la:3.14} 
in this version).

\section{Evaluation trees for free valuation congruence}
\label{sec:free}
Consider the signature
\(\Sigma_{\CP}(A)=\{\_\lef\_\rig\_\,,\tr,\fa,a\mid a\in A\}\)
with constants \tr\ and \fa\
for the truth values \true\ and \false, respectively, and constants $a$
for atomic propositions, further called \emph{atoms}, from some countable 
set $A$ containing at least two atoms. We write
\[\PS\]
for the set of closed terms, or \emph{conditional statements}, 
over the signature $\Sigma_{\CP}(A)$. Given a conditional  
statement $P\lef Q\rig R$, we  refer
to $Q$ as its \emph{central condition}.

We define the \emph{dual} $P^d$ of $P\in\PS$ as follows:
\begin{align*}
\tr^d&=\fa,
&a^d&=a \quad(\text{for $a\in A$}),\\
\fa^d&=\tr,
&(P\lef Q\rig R)^d&=R^d\lef Q^d\rig P^d.
\end{align*} 
Observe that \CP\ is a self-dual axiomatization:
when defining $x^d=x$ for each variable $x$, the dual of each axiom is also in \CP,
and hence
\[\CP\vdash P=Q ~\iff~ \CP\vdash P^d=Q^d.\]

A natural view on conditional statements in \PS\
involves short-circuit evaluation, similar to how we 
consider the evaluation of an
``$\texttt{if }y\texttt{ then }x\texttt{ else }z$'' expression.
The following definition is taken from~\cite{Daan}.

\begin{definition}
\label{def:trees}
The set \NT\ of \textbf{evaluation trees over $A$ with leaves in 
$\{\tr, \fa\}$} is defined inductively by
\begin{align*}
\tr&\in\NT,\\
\fa&\in\NT,\\
(X\unlhd a\unrhd Y)&\in\NT 
\text{ for any }X,Y \in \NT \text{  and } a\in A.
\end{align*}
The function $\_\unlhd a\unrhd\_$ is called 
\textbf{post-conditional composition over $a$}.
In the evaluation tree $X \unlhd a \unrhd Y$, 
the root is represented by $a$,
the left branch by $X$ and the right branch by $Y$. 
\end{definition}

We refer to trees in \NT\ as evaluation trees, or trees for short. 
Post-conditional composition and its notation stem from~\cite{BL}.
Evaluation trees play a crucial role in the 
main results of~\cite{Daan}.
In order to define our ``evaluation tree semantics'', we first define 
an auxiliary function on trees.
\begin{definition} [Leaf replacement]
Given evaluation trees $Y,Z\in\NT$, the \textbf{leaf replacement function}
$[\tr\mapsto Y, \fa \mapsto Z]:\NT\to\NT$ for which postfix notation 
\[X[\tr\mapsto Y, \fa \mapsto Z]\]
is adopted, is defined as follows, where $a \in A$:
\begin{align*}
\tr[\tr\mapsto Y,\fa\mapsto Z]&= Y,\\
\fa[\tr\mapsto Y,\fa\mapsto Z]&= Z,\\
(X_1\unlhd a\unrhd X_2)[\tr\mapsto Y,\fa\mapsto Z]
&=X_1[\tr\mapsto Y,\fa\mapsto Z]\unlhd a\unrhd X_2[\tr\mapsto Y,\fa\mapsto Z].
\end{align*}
\end{definition}
We note that the order in which the replacements of leaves of 
$X$ is listed
is irrelevant and adopt the convention of not listing  
identities inside the brackets, e.g., 
$X[\fa\mapsto Z]=X[\tr\mapsto \tr,\fa\mapsto Z]$.
Furthermore, repeated replacements satisfy the following equation:
\begin{align*}
\big(X[\tr\mapsto Y_1,&\fa\mapsto Z_1]\big)\;[\tr\mapsto Y_2,\fa\mapsto Z_2]\\
&=
X[\tr\mapsto Y_1[\tr\mapsto Y_2,\fa\mapsto Z_2],~
\fa\mapsto Z_1[\tr\mapsto Y_2,\fa\mapsto Z_2]].
\end{align*}

We now have the terminology and notation to 
define the interpretation of conditional statements in \PS\ as evaluation trees
by a function $se$ (abbreviating  short-circuit evaluation).

\begin{definition}
\label{def:se}
The \textbf{short-circuit evaluation function} $se : \PS \to\NT$ 
is defined as
follows, where $a\in A$:
\begin{align*}
se(\tr)&=\tr,\\
se(\fa)&=\fa,\\
se(a)&=\tr\unlhd a\unrhd \fa,\\
se(P \lef Q\rig R)&= se(Q)[\tr\mapsto se(P), \fa\mapsto se(R)].
\end{align*}
\end{definition}

\begin{example}\rm
\label{ex:fr}
The conditional statement $a\lef(\fa\lef a\rig\tr) \rig \fa$ yields 
the following evaluation tree:
\begin{align*}
se(a\lef(\fa\lef a\rig\tr) \rig \fa)
&=se(\fa\lef a\rig\tr)[\tr\mapsto se(a), \fa\mapsto se(\fa)]\\
&=(\fa\unlhd a\unrhd\tr)[\tr\mapsto se(a)]\\
&=\fa\unlhd a\unrhd(\tr\unlhd a\unrhd \fa).
\end{align*}
A more pictorial representation of this evaluation tree is the following, 
where ${\unlhd}$ yields a left branch and ${\unrhd}$ a right branch:
\[
\begin{tikzpicture}[%
      level distance=7.5mm,
      level 1/.style={sibling distance=30mm},
      level 2/.style={sibling distance=15mm},
      baseline=(current bounding box.center)]
      \node (a) {$a$}
        child {node (b1) {$\fa$}
        }
        child {node (b2) {$a$}
          child {node (d1) {$\tr$}} 
          child {node (d2) {$\fa$}}
        };
\end{tikzpicture}
\]
\qedex\end{example}

As we can see from the definition on atoms, evaluation 
continues in the left branch if an atom evaluates to \true\ and in 
the right branch if it evaluates to \false.
We shall often use the constants \tr\ and \fa\ to denote the result 
of an evaluation (instead of \true\ and \false).

\begin{definition}
\label{def:eval}
Let $P\in\PS$. An \textbf{evaluation} of $P$ is a pair 
\((\sigma, B)\)
where $\sigma\in(A\{\tr,\fa\})^*$ and $B\in\{\tr,\fa\}$,
such that if $se(P)\in\{\tr,\fa\}$, then $\sigma=\epsilon$ (the empty string)
and $B=se(P)$, and 
otherwise, 
\[\sigma=a_1B_1a_2B_2\cdots a_nB_n,\]
where $a_1a_2\cdots a_nB$ is a complete path in $se(P)$ 
and
\begin{itemize}\setlength\itemsep{-.1em}
\item[$-$]
for $i<n$, if $a_{i+1}$ is a left child of $a_i$ then $B_i=\tr$, and otherwise
$B_i=\fa$,
\item[$-$]
if $B$ is a left child of $a_n$ then $B_n=\tr$, and otherwise $B_n=\fa$.
\end{itemize}
We refer to $\sigma$ as the \textbf{evaluation path} and
to $B$ as the \textbf{evaluation result}. 
\end{definition}

So, an evaluation of a conditional statement $P$ is a complete path in $se(P)$
(from root to leaf) and contains evaluation values for all occurring atoms. 
For instance, the evaluation tree $\fa\unlhd a\unrhd(\tr\unlhd a\unrhd \fa)$
from Example~\ref{ex:fr} encodes the evaluations
$(a\tr,\fa)$, $(a\fa a\tr,\tr)$, and $(a\fa a\fa,\fa)$.
As an aside, we note that this particular evaluation tree encodes all
possible evaluations of $\neg a\;\texttt{\&\&}\;a$, where \texttt{\&\&}
is the connective that prescribes \emph{short-circuit conjunction}
(we return to this connective in \S\ref{sec:Conc}).

In turn, each evaluation tree gives rise to a \emph{unique}
conditional statement. For Example~\ref{ex:fr}, this is
$\fa\lef a\rig (\tr\lef a\rig\fa)$ (note the syntactical correspondence).

\begin{definition}
\label{def:basic}
\textbf{Basic forms over $A$} are defined by the following grammar
\[t::= \tr\mid\fa\mid t\lef a \rig t\quad\text{for $a\in A$.}\]
We write $\BF$ for the set of basic forms over $A$. 
The \textbf{depth} $d(P)$ of $P\in\BF$ is defined by 
$d(\tr) = d(\fa) = 0$ and 
$d(Q \lef a \rig R) = 1 + \max\{d(Q), d(R)\}$. 
\end{definition}

The following lemma's exploit the structure of basic forms and
are stepping stones to our first completeness result (Theorem~\ref{thm:1}).

\begin{lemma}
\label{la:2.7}
For each $P\in\PS$ there exists $Q\in\BF$ such that
$\CP\vdash P=Q$.
\end{lemma}

\begin{proof} First we establish an auxiliary result:
 if $P,Q,R$ are basic forms, then there is a basic form $S$ such that
$\CP\vdash P\lef Q\rig R=S$. This follows by structural induction on $Q$.

\smallskip

The lemma's statement follows by structural induction on $P$. The base
cases $P\in\{\tr,\fa,a\mid a\in A\}$ are trivial, and if $P= P_1\lef P_2\rig
P_3$ there exist by induction basic forms $Q_i$ such that $\CP\vdash P_i=Q_i$, hence
$\CP\vdash P_1\lef P_2\rig P_3=Q_1\lef Q_2\rig Q_3$. Now apply the auxiliary result.
\end{proof}

\begin{lemma}
\label{la:2.8}
For all basic forms $P$ and $Q$, $se(P)=se(Q)$ implies $P= Q$.
\end{lemma}

\begin{proof} By structural induction on $P$. 
The base cases $P\in\{\tr,\fa\}$ are trivial.
If $P= P_1\lef a\rig P_2$, then $Q\not\in\{\tr,\fa\}$
and $Q\ne Q_1\lef b\rig Q_2$ with $b\ne a$, so $Q= Q_1\lef a\rig Q_2$
and $se(P_i)=se(Q_i)$. By induction we find $P_i= Q_i$, and hence $P= Q$.
\end{proof}

\begin{definition}
\label{def:freevc}
\textbf{Free valuation congruence}, notation $=_\fr$, 
is defined on \PS\ as follows: 
\[P=_\fr Q~\iff~se(P)=se(Q).\]
\end{definition}

\begin{lemma}
\label{la:2.10}
Free valuation congruence is a congruence relation.
\end{lemma}

\begin{proof}
Assume $P=_\fr P'$, $Q=_\fr Q'$, and $R=_\fr R'$. 
Then $se(P\lef Q\rig R)=se(Q)[\tr\mapsto se(P),\fa\mapsto se(R)]=
se(Q')[\tr\mapsto se(P'),\fa\mapsto se(R')]=se(P'\lef Q'\rig R')$, so
$P\lef Q\rig R=_\fr P'\lef Q'\rig R'$. 
\end{proof}

\begin{theorem}[Completeness of $\CP$]
\label{thm:1}
For all $P,Q\in\PS$, 
\[\CP\vdash P=Q~\iff~ P=_\fr Q.\]
\end{theorem}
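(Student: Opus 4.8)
The statement is a biconditional, so the plan is to treat the two implications separately: soundness ($\CP\vdash P=Q\Rightarrow P=_\fr Q$) and completeness ($P=_\fr Q\Rightarrow\CP\vdash P=Q$). For soundness I would verify that each of the four axioms of $\CP$ is valid under $se$, that is, that for arbitrary conditional statements substituted for the variables the two sides have equal evaluation trees; since Lemma~\ref{la:x} tells us $=_\fr$ is a congruence (and it is an equivalence by definition), validity of the axiom instances propagates through derivations and yields $se(P)=se(Q)$ whenever $\CP\vdash P=Q$. For completeness I would reduce both terms to basic forms and then exploit that $se$ is injective on those.

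For the soundness checks, (CP1)--(CP3) are immediate from the defining clauses of $se$ together with the convention that identity replacements may be dropped: for instance $se(P\lef\tr\rig R)=\tr[\tr\mapsto se(P),\fa\mapsto se(R)]=se(P)$, and $se(\tr\lef P\rig\fa)=se(P)[\tr\mapsto\tr,\fa\mapsto\fa]=se(P)$. The only axiom needing a little care is (CP4), whose validity under $se$ follows from the repeated-replacement equation stated just before Definition~\ref{def:se}: writing the instance as $P\lef(Q\lef S\rig U)\rig V=(P\lef Q\rig V)\lef S\rig(P\lef U\rig V)$, unfolding both sides reduces each to $se(S)$ with its leaves $\tr$ and $\fa$ replaced by $se(Q)[\tr\mapsto se(P),\fa\mapsto se(V)]$ and $se(U)[\tr\mapsto se(P),\fa\mapsto se(V)]$, respectively.

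For completeness, suppose $P=_\fr Q$, i.e.\ $se(P)=se(Q)$. By Lemma~\ref{la:2.5} there are basic forms $P',Q'$ with $\CP\vdash P=P'$ and $\CP\vdash Q=Q'$. Applying the soundness direction just established gives $se(P)=se(P')$ and $se(Q)=se(Q')$, whence $se(P')=se(Q')$. Since $P'$ and $Q'$ are basic forms, Lemma~\ref{la:2.6} forces the syntactic identity $P'=Q'$, and therefore $\CP\vdash P=P'=Q'=Q$.

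The heavy lifting is already carried by the two preceding lemmas, so I do not expect a serious obstacle in assembling the argument; the one point demanding attention is the soundness of (CP4), where the nested leaf replacements on both sides must be matched correctly via the repeated-replacement identity. The conceptual key to the completeness half is the interplay between normalization (Lemma~\ref{la:2.5}) and the injectivity of $se$ on basic forms (Lemma~\ref{la:2.6}), bridged by soundness, which together convert the semantic equality $se(P)=se(Q)$ into a chain of provable equalities.
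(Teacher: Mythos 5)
Your proposal is correct and follows essentially the same route as the paper: soundness is checked axiom by axiom (with \eqref{cp4} handled via the repeated-replacement identity and Lemma~\ref{la:x} supplying the congruence property), and the converse combines normalization to basic forms (Lemma~\ref{la:2.5}), soundness, and injectivity of $se$ on basic forms (Lemma~\ref{la:2.6}). If anything, your phrasing of the $\Leftarrow$ direction --- obtaining $se(P')=se(Q')$ by applying soundness to $\CP\vdash P=P'$ and $\CP\vdash Q=Q'$ and chaining with $se(P)=se(Q)$ --- is a slightly cleaner rendering of the step the paper compresses.
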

\begin{proof}
($\Rightarrow$) By Lemma~\ref{la:2.10}, 
$=_\fr$ is a congruence relation and it easily follows
that all \CP-axioms are sound. For example, 
soundness of axiom~\eqref{cp4} follows from
\begin{align*}
se(P\lef&(Q\lef R\rig S)\rig U)\\
&=se(Q\lef R\rig S)[\tr\mapsto se(P),\fa\mapsto se(U)]\\
&=\big(se(R)[\tr\mapsto se(Q),\fa\mapsto se(S)]\big)\;[\tr\mapsto se(P),\fa\mapsto se(U)]\\
&=se(R)[\tr\mapsto se(Q)[\tr\mapsto se(P),\fa\mapsto se(U)],
\fa\mapsto \,se(S)[\tr\mapsto se(P),\fa\mapsto se(U)]]\\
&=se(R)[\tr\mapsto se(P\lef Q\rig U),\fa\mapsto se(P\lef S\rig U)]\\
&=se((P\lef Q\rig U)\lef R\rig(P\lef S\rig U)).
\end{align*}

($\Leftarrow$) Let $P=_\fr Q$. According to Lemma~\ref{la:2.7}
there exist basic forms
$P'$ and $Q'$ such that $\CP\vdash P=P'$ and $\CP\vdash Q=Q'$, so
$\CP\vdash P'=Q'$. By soundness ($\Rightarrow$) we find $P'=_\fr Q'$, so
by Lemma~\ref{la:2.8}, $P'= Q'$. Hence, $\CP\vdash P=P'= Q'=Q$.
\end{proof}

A consequence of the above results is that for each $P\in\PS$ there is a
\emph{unique} basic form $P'$ with $\CP\vdash P=P'$, 
and that for each basic form, its $se$-image has exactly the same 
syntactic structure (replacing ${\lef}$ by ${\unlhd}$, and ${\rig}$ by ${\unrhd}$).
In the remainder of this section, we make this precise.

\begin{definition}
\label{def:bf}
The \textbf{basic form function} $\baf:\PS \to\BF$ 
is defined as follows, where $a\in A$:
\begin{align*}
\baf(\tr)&= \tr,\\
\baf(\fa)&= \fa,\\
\baf(a)&=\tr\lef a\rig \fa,\\
\baf(P \lef Q\rig R)&= \baf(Q)[\tr\mapsto \baf(P), \fa\mapsto \baf(R)].
\end{align*}
Given $Q,R\in\BF$, the auxiliary function $[\tr\mapsto Q, \fa\mapsto R]:\BF\to\BF$ 
for which post-fix 
notation $P[\tr\mapsto Q, \fa\mapsto R]$ is adopted,
is defined as follows:
\begin{align*}
\tr[\tr\mapsto Q, \fa\mapsto R]&=Q,\\
\fa[\tr\mapsto Q, \fa\mapsto R]&=R,\\
(P_1\lef a\rig P_2)[\tr\mapsto Q, \fa\mapsto R]
&=P_1[\tr\mapsto Q, \fa\mapsto R]\lef a\rig P_2[\tr\mapsto Q, \fa\mapsto R].
\end{align*}
(The notational overloading with the leaf replacement functions on evaluation trees 
is harmless).
\end{definition}

So, for given $Q,R\in \BF$, the auxiliary function $[\tr\mapsto Q,\fa\mapsto R]$
applied to $P\in\BF$ (thus, $P[\tr\mapsto Q,\fa\mapsto R]$)
replaces all \tr-occurrences in $P$ by $Q$, and all \fa-occurrences 
in $P$ by $R$. 
The following two lemma's imply that $\baf$ is a \emph{normalization function}.
 
\begin{lemma}
\label{la:2.13}
For all $P\in\PS$, $\baf(P)$ is a basic form.
\end{lemma}

\begin{proof}
By structural induction. The base cases are trivial. For the inductive case we find 
$\baf(P\lef Q\rig R)=\baf(Q)[\tr\mapsto \baf(P), \fa\mapsto \baf(R)]$, so by induction,
$\baf(P)$, $\baf(Q)$, and $\baf(R)$ are basic forms. 
Furthermore, replacing all \tr-occurrences and \fa-occurrences in $\baf(Q)$ by basic forms 
$\baf(P)$ and $\baf(R)$, respectively, yields a basic form.
\end{proof}

\begin{lemma}
\label{la:2.14}
For each basic form $P$, $\baf(P)=P$.
\end{lemma}

\begin{proof}
By structural induction on $P$.
\end{proof}

\begin{definition}
\label{def:freevca}
The binary relation $=_\baf$ on \PS\
is defined as follows: 
\[P=_\baf Q~\iff~\baf(P)=\baf(Q).\]
\end{definition}

\begin{lemma}
\label{la:2.16}
The relation $=_\baf$ is a congruence relation.
\end{lemma}

\begin{proof}
Assume $P=_\baf P'$, $Q=_\baf Q'$, and $R=_\baf R'$. 
Then $\baf(P\lef Q\rig R)=\baf(Q)[\tr\mapsto \baf(P),\fa\mapsto \baf(R)]=
\baf(Q')[\tr\mapsto \baf(P'),\fa\mapsto \baf(R')]=\baf(P'\lef Q'\rig R')$, so
$P\lef Q\rig R=_\baf P'\lef Q'\rig R'$. 
\end{proof}

Before proving that \CP\ is an axiomatization of the relation $=_\baf$,
we  show that each instance of the axiom~\eqref{cp4}
satisfies $=_\baf$.

\begin{lemma}
\label{la:2.17}
For all $P,P_1,P_2, Q_1, Q_2\in\PS$,
\begin{align*}
\baf(Q_1\lef(P_1\lef P\rig P_2)\rig Q_2)&
=\baf((Q_1\lef P_1\rig Q_2)\lef P\rig(Q_1\lef P_2\rig Q_2)).
\end{align*}
\end{lemma}

\begin{proof}
By definition, the lemma's statement is equivalent with 
\begin{align}
\nonumber
\big(\baf(P)
&[\tr\mapsto \baf(P_1),\fa\mapsto\baf(P_2)]\big)\;[\tr\mapsto\baf(Q_1),\fa\mapsto\baf(Q_2)]\\
\label{eq:bf}
&=\baf(P)[\tr\mapsto\baf(Q_1\lef P_1\rig Q_2),\fa\mapsto\baf(Q_1\lef P_2\rig Q_2)].
\end{align}
By Lemma~\ref{la:2.13}, $\baf(P)$, $\baf(P_i)$,and $\baf(Q_i)$ are basic forms.
We prove~\eqref{eq:bf} by structural induction on the form that $\baf(P)$ can have.
If $\baf(P)=\tr$, then 
\begin{align*}
\big(\tr[\tr\mapsto \baf(P_1),\fa\mapsto\baf(P_2)]\big)
&\;[\tr\mapsto\baf(Q_1),\fa\mapsto\baf(Q_2)]
=\baf(P_1)[\tr\mapsto\baf(Q_1),\fa\mapsto\baf(Q_2)] 
\end{align*}
and 
\begin{align*}
\tr[\tr\mapsto\baf(Q_1\lef P_1\rig Q_2),\fa\mapsto\baf(Q_1\lef P_2\rig Q_2)]
&=\baf(Q_1\lef P_1\rig Q_2)\\
&=\baf(P_1)[\tr\mapsto\baf(Q_1),\fa\mapsto\baf(Q_2)].
\end{align*}
If $\baf(P)=\fa$, then~\eqref{eq:bf} follows in a similar way. 

The inductive case $\baf(P)= R_1\lef a \rig R_2$ is trivial (by definition of 
the last defining clause of 
the auxiliary functions $[\tr\mapsto Q,\fa\mapsto R]$
in Definition~\ref{def:bf}). 
\end{proof}

\begin{theorem}
\label{thm:1a}
For all $P,Q\in\PS$, 
\(\CP\vdash P=Q~\iff~ P=_\baf Q.\)
\end{theorem}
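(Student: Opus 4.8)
The plan is to establish both implications separately, mirroring the structure of the proof of Theorem~\ref{thm:1} but with $\baf$ playing the role of $se$.

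For the soundness direction ($\Rightarrow$), I would show that $=_\baf$ is a congruence satisfying every axiom of $\CP$, so that it is preserved under $\CP$-derivability. Congruence is already available from Lemma~\ref{la:rephrase}, so only the four axioms remain. Axioms~\eqref{cp1} and~\eqref{cp2} follow by unfolding the defining clause $\baf(P\lef Q\rig R)=\baf(Q)[\tr\mapsto\baf(P),\fa\mapsto\baf(R)]$ with $Q\in\{\tr,\fa\}$, which collapses to $\baf(P)$ and $\baf(R)$ respectively. Axiom~\eqref{cp3} follows because $\baf(\tr\lef P\rig\fa)=\baf(P)[\tr\mapsto\tr,\fa\mapsto\fa]=\baf(P)$, the replacement being the identity. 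Axiom~\eqref{cp4} is precisely the content of Lemma~\ref{la:nieuw}. Since each axiom is sound and $=_\baf$ is a congruence, $\CP\vdash P=Q$ implies $P=_\baf Q$.

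For the completeness direction ($\Leftarrow$), I would repeat the normalization argument of Theorem~\ref{thm:1}. Given $P=_\baf Q$, Lemma~\ref{la:2.5} yields basic forms $P'$ and $Q'$ with $\CP\vdash P=P'$ and $\CP\vdash Q=Q'$. By the soundness just established, $\baf(P)=\baf(P')$ and $\baf(Q)=\baf(Q')$, whence $\baf(P')=\baf(Q')$ from $P=_\baf Q$. Lemma~\ref{la:2.12} then gives $P'=\baf(P')=\baf(Q')=Q'$, so $P'$ and $Q'$ are literally the same basic form, and $\CP\vdash P=P'=Q'=Q$.

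The argument is essentially routine once the preceding lemmas are in place: the genuine work has been offloaded into Lemma~\ref{la:nieuw} (soundness of distributivity for $\baf$) and Lemma~\ref{la:2.12} (idempotence of $\baf$ on basic forms). The step I would flag as the linchpin is the appeal to Lemma~\ref{la:2.12}: it is what upgrades $\baf$-equality of the two normal forms to syntactic identity, and that upgrade is exactly the ingredient that makes the normalization argument complete rather than merely sound. No separate analogue of Lemma~\ref{la:2.6} is needed here, since the idempotence of $\baf$ on basic forms does the corresponding job directly.
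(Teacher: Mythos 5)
Your proposal is correct and follows essentially the same route as the paper: soundness via Lemma~\ref{la:rephrase} and Lemma~\ref{la:nieuw} (with the easy verifications for \eqref{cp1}--\eqref{cp3}), and completeness via normalization with Lemma~\ref{la:2.5} followed by Lemma~\ref{la:2.12} to upgrade $\baf$-equality of the basic forms to syntactic identity. The only difference is that you spell out the routine checks for the first three axioms, which the paper leaves implicit.
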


\begin{proof}
($\Rightarrow$) By Lemma~\ref{la:2.16}, 
$=_\baf$ is a congruence relation and it easily follows
that closed instances of the \CP-axioms \eqref{cp1}-\eqref{cp3}
satisfy $=_\baf$. By Lemma~\ref{la:2.17} it follows that closed instances of 
axiom~\eqref{cp4} also satisfy $=_\baf$.

\smallskip

($\Leftarrow$) Assume $P=_\baf Q$. According to Lemma~\ref{la:2.7},
there exist basic forms
$P'$ and $Q'$ such that $\CP\vdash P=P'$ and $\CP\vdash Q=Q'$, so
$\CP\vdash P'=Q'$. By $\Rightarrow$ it follows that $P'=_\baf Q'$, which implies
by Lemma~\ref{la:2.14} that $P'= Q'$. Hence, $\CP\vdash P=P'= Q'=Q$.
\end{proof}

\begin{corollary}
\label{cor:1}
For all $P\in\PS$,~ $P=_\baf\baf(P)$ ~and~ $P=_\fr\baf(P)$.
\end{corollary}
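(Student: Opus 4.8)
The plan is to derive both equivalences directly from the normalization results and completeness theorems already in place, since $\baf$ has been shown to act as a normalization function. Both statements reduce to short chains of previously established facts, so no fresh induction is strictly required.

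For the first claim $P=_\baf\baf(P)$, I would unfold Definition~\ref{def:freevca}: the assertion is exactly $\baf(\baf(P))=\baf(P)$. By Lemma~\ref{la:bf}, $\baf(P)$ is a basic form, and by Lemma~\ref{la:2.12} every basic form is a fixed point of $\baf$. Applying the latter to the basic form $\baf(P)$ gives $\baf(\baf(P))=\baf(P)$, which is precisely what Definition~\ref{def:freevca} requires for $P=_\baf\baf(P)$.

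For the second claim $P=_\fr\baf(P)$, I would route through provability in \CP\ by chaining the two completeness theorems. Having just obtained $P=_\baf\baf(P)$, Theorem~\ref{thm:1a} yields $\CP\vdash P=\baf(P)$. Then Theorem~\ref{thm:1} (completeness of \CP\ for $=_\fr$) gives $P=_\fr\baf(P)$, that is, $se(P)=se(\baf(P))$.

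There is no genuine obstacle here; the corollary is essentially bookkeeping once the normalization lemmas and the two completeness theorems are available. The only point meriting a little care is that the second equivalence is obtained semantically-to-semantically by passing through syntactic derivability in \CP, rather than by a direct calculation on evaluation trees. A direct proof of $se(P)=se(\baf(P))$ by structural induction on $P$ is also feasible, but it would require re-establishing the compatibility of the tree-replacement operator with the basic-form replacement of Definition~\ref{def:bf}; the route via Theorems~\ref{thm:1} and~\ref{thm:1a} is therefore cleaner and is the one I would adopt.
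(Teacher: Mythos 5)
Your proposal is correct and follows exactly the paper's own argument: Lemma~\ref{la:bf} plus Lemma~\ref{la:2.12} give $\baf(\baf(P))=\baf(P)$, hence $P=_\baf\baf(P)$, and then Theorem~\ref{thm:1a} followed by Theorem~\ref{thm:1} yields $P=_\fr\baf(P)$. No differences worth noting.
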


\begin{proof}
By Lemma~\ref{la:2.13} and Lemma~\ref{la:2.14}, $\baf(P)=\baf(\baf(P))$, 
thus $P=_\baf \baf(P)$. By Theorem~\ref{thm:1a}, $\CP\vdash P=\baf(P)$, and
by Theorem~\ref{thm:1}, $P=_\fr\baf(P)$.
\end{proof}

\section{Evaluation trees for repetition-proof valuation congruence}
\label{sec:rp}
In~\cite{BP10} we defined \emph{repetition-proof} \CP\ as the 
extension of the axiom set \CP\ with the following two axiom schemes,
where $a$ ranges over $A$: 
\begin{align*}
\label{CPrp1}
\tag{CPrp1}
(x\lef a\rig y)\lef a\rig z&=(x\lef a\rig x)\lef a\rig z,\\
\label{CPrp2}
\tag{CPrp2}
x\lef a\rig (y\lef a\rig z)&=x\lef a\rig (z\lef a \rig z).
\end{align*}
We write
\(\CPrp\)
for this extension. These axiom schemes characterize that for each atom $a$, 
a consecutive evaluation of $a$ yields the same result, so in both cases
the conditional statement at the $y$-position will not be evaluated and can 
be replaced by any other. Note that~\eqref{CPrp1}
and \eqref{CPrp2} are each others dual. 

We define a proper subset of basic forms with the property that each
propositional statement can be proved equal to such a basic form.

\begin{definition}
\label{def:3.1}
\textbf{Rp-basic forms} are inductively defined:
\begin{itemize}\setlength\itemsep{-.1em}
\item 
\tr\ and \fa\ are rp-basic forms, and
\item 
 $P_1\lef a\rig P_2$ is an rp-basic form if $P_1$
 and $P_2$ are rp-basic forms, and if $P_i$ is not equal to \tr\ or \fa,
 then either the central condition in $P_i$ is different from $a$, or $P_i$ is of the form
 $Q_i\lef a\rig Q_i$. 
\end{itemize}
\end{definition}

It will turn out useful to define a function that transforms conditional statements
into rp-basic forms,
and that is comparable to the function $\baf$.

\begin{definition}
\label{def:3.2}
The \textbf{rp-basic form function}
$\rpbf:\PS\to \PS$
is defined by
\begin{align*}
\rpbf(P)&=\rp(\baf(P)).
\end{align*}
The auxiliary function $\rp:\BF\to\BF$ is defined as follows:
\begin{align*}
\rp(\tr)&=\tr,\\
\rp(\fa)&=\fa,\\
\rp(P\lef a\rig Q)&=\rp(f_a(P))\lef a\rig\rp(g_a(Q)).
\end{align*}
For $a\in A$, the auxiliary functions
$f_a: \BF\to\BF$ and $g_a: \BF\to\BF$ 
are defined by
\begin{align*}
f_a(\tr)&=\tr,\\
f_a(\fa)&=\fa,\\
f_a(P\lef b\rig Q)&=\begin{cases}
f_a(P)\lef a\rig  f_a(P)&\text{if } b=a,\\
P\lef b\rig Q&\text{otherwise},
\end{cases}
\end{align*}
and
\begin{align*}
g_a(\tr)&=\tr,\\
g_a(\fa)&=\fa,\\
g_a(P\lef b\rig Q)&=\begin{cases}
g_a(Q)\lef a\rig  g_a(Q)&\text{if } b=a,\\
P\lef b\rig Q&\text{otherwise}.
\end{cases}
\end{align*}
\end{definition}

Thus, $\rpbf$ maps a conditional statement $P$ to $\baf(P)$ and then
transforms $\baf(P)$ according to the auxiliary functions $\rp$, $f_a$, and $g_a$.

\begin{lemma}
\label{la:3.3}
For all $a\in A$ and $P\in\BF$, 
\[\text{$g_a(f_a(P))=f_a(f_a(P))=f_a(P)$
and  $f_a(g_a(P))=g_a(g_a(P))=g_a(P)$.}\]
\end{lemma}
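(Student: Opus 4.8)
The plan is to prove the six claimed identities by structural induction on $P\in\BF$, handling the two base cases $P\in\{\tr,\fa\}$ directly and then analyzing the inductive case $P=P_1\lef b\rig P_2$ by distinguishing whether $b=a$ or $b\ne a$. The key observation that makes the whole lemma work is that $f_a$ and $g_a$ are essentially \emph{idempotent projections} that eliminate $a$ from the central-condition position at the top level: once $f_a$ has rewritten a term so that its central condition is either different from $a$ or the term is of the form $Q\lef a\rig Q$, applying $f_a$ or $g_a$ again leaves it unchanged. First I would record that on the two constants all six functions act as the identity, so every equation holds trivially there.

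For the inductive step I would split on the central condition. If $b\ne a$, then by definition $f_a(P)=g_a(P)=P_1\lef b\rig P_2=P$, and likewise any outer application of $f_a$ or $g_a$ leaves this term fixed (since its central condition is still $b\ne a$); hence all six equalities reduce to $P=P$ and hold immediately, \emph{without} needing the induction hypothesis. The interesting case is $b=a$. Here $f_a(P)=f_a(P_1)\lef a\rig f_a(P_1)$, a term of the shape $Q\lef a\rig Q$. For such a term I would compute directly from the definitions that both $f_a$ and $g_a$ leave it unchanged: applying $f_a$ gives $f_a(f_a(P_1))\lef a\rig f_a(f_a(P_1))$, which by the induction hypothesis $f_a(f_a(P_1))=f_a(P_1)$ equals $f_a(P)$ again, and similarly $g_a$ applied to $f_a(P_1)\lef a\rig f_a(P_1)$ yields $g_a(f_a(P_1))\lef a\rig g_a(f_a(P_1))$, which by the induction hypothesis $g_a(f_a(P_1))=f_a(P_1)$ also equals $f_a(P)$. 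This settles $g_a(f_a(P))=f_a(f_a(P))=f_a(P)$, and the dual computation (using $g_a(P)=g_a(P_2)\lef a\rig g_a(P_2)$ and the induction hypotheses on $P_2$) settles $f_a(g_a(P))=g_a(g_a(P))=g_a(P)$.

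The main obstacle to watch for is the \emph{bookkeeping of which inductive hypotheses are invoked}: the six stated equalities are mutually entangled, so in the $b=a$ case I must be careful that the recursive calls land on the strictly smaller subterm $P_1$ (for the $f_a$-equations) or $P_2$ (for the $g_a$-equations) before I am entitled to use the corresponding hypothesis. Concretely, after one application of $f_a$ the term $f_a(P)$ has the form $R\lef a\rig R$ with $R=f_a(P_1)$, and the crucial sublemma is that for \emph{any} basic form $R$ one has $f_a(R\lef a\rig R)=f_a(R)\lef a\rig f_a(R)$ and $g_a(R\lef a\rig R)=g_a(R)\lef a\rig g_a(R)$; combined with the hypotheses $f_a(f_a(P_1))=f_a(P_1)$ and $g_a(f_a(P_1))=f_a(P_1)$ this collapses everything back to $f_a(P)$. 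Because each step is a direct unfolding of the definitions in Definition~\ref{def:3.2}, no genuinely hard argument is needed; the care lies entirely in keeping the $f$/$g$ duality and the induction targets straight, and I expect the whole proof to fit in a short paragraph once the $b=a$ computation is written out.
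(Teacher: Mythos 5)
Your proposal is correct and follows essentially the same route as the paper's proof: structural induction on $P$, trivial base cases, and a case split on whether the central condition equals $a$, with the $b=a$ case resolved by unfolding the definitions of $f_a$ and $g_a$ on terms of the shape $R\lef a\rig R$ and invoking the induction hypotheses $f_a(f_a(P_1))=f_a(P_1)$ and $g_a(f_a(P_1))=f_a(P_1)$ (dually for $P_2$). The only presentational difference is that you isolate the unfolding step as an explicit sublemma, which the paper performs inline.
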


\begin{proof}
By structural induction on $P$.  
The base cases $P\in\{\tr,\fa\}$
are trivial. For the inductive case $P= Q\lef b\rig R$ we have to distinguish
the cases $b=a$ and $b\ne a$. If $b=a$, then 
\begin{align*}
g_a(f_a(Q\lef a\rig R))
&=g_a(f_a(Q))\lef a\rig g_a(f_a(Q))\\
&=f_a(Q)\lef a\rig f_a(Q)
&&\text{by IH}\\
&=f_a(Q\lef a\rig R),
\end{align*}
and $f_a(f_a(Q\lef a\rig R))=f_a(Q\lef a\rig R)$ follows in a similar way.
If $b\ne a$, then $f_a(P)=g_a(P)=P$, and hence $g_a(f_a(P))=f_a(f_a(P))=f_a(P)$.

\smallskip

The second pair of equalities can be derived in a similar way.
\end{proof}

In order to prove that for all $P\in\PS$, $\rpbf(P)$ is an rp-basic form,
we use the following auxiliary lemma.

\newpage

\begin{lemma}
\label{la:3.4} 
For all $a\in A$ and $P\in\BF$, $d(P)\geq d(f_a(P))$ and $d(P)\geq d(g_a(P))$.
\end{lemma}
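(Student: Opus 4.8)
The plan is to prove Lemma~\ref{la:3.4} by structural induction on $P$, treating the two inequalities $d(P)\geq d(f_a(P))$ and $d(P)\geq d(g_a(P))$ in parallel, since $f_a$ and $g_a$ are defined symmetrically (one recurses into the left subtree, the other into the right). The base cases $P\in\{\tr,\fa\}$ are immediate, because $f_a$ and $g_a$ act as the identity there and $d(\tr)=d(\fa)=0$. The inductive case $P=Q\lef b\rig R$ splits according to whether $b=a$, exactly as in the definition of $f_a$ and of Lemma~\ref{la:3.3}.

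For the inductive case I would first dispose of the easy subcase $b\neq a$: here $f_a(P)=g_a(P)=P$ by definition, so both inequalities hold with equality. The interesting subcase is $b=a$, where $f_a(Q\lef a\rig R)=f_a(Q)\lef a\rig f_a(Q)$. Then
\begin{align*}
d(f_a(Q\lef a\rig R))&=1+\max\{d(f_a(Q)),d(f_a(Q))\}\\
&=1+d(f_a(Q))\\
&\leq 1+d(Q)\\
&\leq 1+\max\{d(Q),d(R)\}=d(Q\lef a\rig R),
\end{align*}
where the first inequality is the induction hypothesis applied to $Q$. The argument for $g_a$ in this subcase is entirely symmetric, using $g_a(Q\lef a\rig R)=g_a(R)\lef a\rig g_a(R)$ and the induction hypothesis applied to $R$.

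I do not expect a genuine obstacle here, since the lemma is a straightforward monotonicity-type statement: the crucial observation is simply that when $b=a$ the functions $f_a$ and $g_a$ \emph{duplicate} one of the two subtrees (discarding the other), and duplicating a subtree of depth $k$ into a node yields depth $1+k$, which cannot exceed $1+\max\{d(Q),d(R)\}$. The only point requiring a moment's care is that $f_a(Q)\lef a\rig f_a(Q)$ has depth $1+d(f_a(Q))$ rather than $d(f_a(Q))$, so one must retain the outer $a$-node in the depth count and then absorb it against the $1+\max$ on the right-hand side; the induction hypothesis $d(Q)\geq d(f_a(Q))$ closes the gap. Because both halves follow the same template, I would state the $f_a$ computation in full and remark that the $g_a$ case is dual, mirroring the style already used for Lemma~\ref{la:3.3}.
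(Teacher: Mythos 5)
Your proof is correct and follows essentially the same route as the paper: structural induction on $P$, trivial base cases, the split on $b=a$ versus $b\neq a$, and in the $b=a$ subcase the chain $d(f_a(Q\lef a\rig R))=1+d(f_a(Q))\leq 1+d(Q)\leq 1+\max\{d(Q),d(R)\}$ (the paper writes the same inequalities in the opposite direction). No gaps.
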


\begin{proof}
Fix some $a\in A$.
We prove these inequalities by structural induction on $P$.  
The base cases $P\in\{\tr,\fa\}$ are trivial. 
For the inductive case $P= Q\lef b\rig R$ 
we have to distinguish the cases $b=a$ and $b\ne a$. 
If $b=a$, then 
\begin{align*}
d(Q\lef a\rig R)
&=1+\max\{d(Q),d(R)\}\\
&\geq 1+ d(Q)\\
&\geq 1+ d(f_a(Q))
&&\text{by IH}\\
&=d(f_a(Q)\lef a\rig f_a(Q))\\
&=d(f_a(Q\lef a\rig R)),
\end{align*}
and $d(Q\lef a\rig R)\geq d(g_a(Q\lef a\rig R))$ follows in a similar way.

If $b\ne a$, then
$f_a(P)=g_a(P)=P$, and hence $d(P)\geq d(f_a(P))$
and $d(P)\geq d(g_a(P))$.
\end{proof}

\begin{lemma}
\label{la:3.5}
For all $P\in\PS$, $\rpbf(P)$ is an rp-basic form.
\end{lemma}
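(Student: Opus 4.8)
The plan is to reduce the statement to a claim about $\rp$ alone. Since $\rpbf(P)=\rp(\baf(P))$ and $\baf(P)\in\BF$ by Lemma~\ref{la:bf}, it suffices to prove that $\rp(P)$ is an rp-basic form for every basic form $P$. I would prove this by induction on the depth $d(P)$. The base cases $P\in\{\tr,\fa\}$ are immediate, since $\rp$ leaves them fixed and $\tr,\fa$ are rp-basic forms by definition.

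For the inductive step, write $P=P_1\lef a\rig P_2$ with $P_1,P_2\in\BF$, so that $\rp(P)=\rp(f_a(P_1))\lef a\rig\rp(g_a(P_2))$. There are two things to check: that the two children are rp-basic forms, and that each child satisfies the side condition in Definition~\ref{def:3.1}. For the first, I would invoke Lemma~\ref{la:3.4}, which gives $d(f_a(P_1))\leq d(P_1)<d(P)$ and $d(g_a(P_2))\leq d(P_2)<d(P)$; since $f_a(P_1)$ and $g_a(P_2)$ are again basic forms, the induction hypothesis applies and yields that $\rp(f_a(P_1))$ and $\rp(g_a(P_2))$ are rp-basic forms.

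The heart of the argument is verifying the side condition, which I would do by a case analysis on the left child (the right child being symmetric, with $g_a$ in place of $f_a$). If $P_1\in\{\tr,\fa\}$, then $f_a(P_1)=P_1$ and $\rp(f_a(P_1))\in\{\tr,\fa\}$, so there is nothing to check. If $P_1=Q_1\lef b\rig R_1$ with $b\neq a$, then $f_a(P_1)=P_1$, so $\rp(f_a(P_1))=\rp(f_b(Q_1))\lef b\rig\rp(g_b(R_1))$ has central condition $b\neq a$, as required. The remaining case $P_1=Q_1\lef a\rig R_1$ is where the definition of $f_a$ and Lemma~\ref{la:3.3} do the real work: here $f_a(P_1)=f_a(Q_1)\lef a\rig f_a(Q_1)$, hence
\[\rp(f_a(P_1))=\rp(f_a(f_a(Q_1)))\lef a\rig\rp(g_a(f_a(Q_1)))=\rp(f_a(Q_1))\lef a\rig\rp(f_a(Q_1)),\]
where the second equality uses $f_a(f_a(Q_1))=g_a(f_a(Q_1))=f_a(Q_1)$ from Lemma~\ref{la:3.3}. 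Thus $\rp(f_a(P_1))$ has the form $Q\lef a\rig Q$, which is exactly the alternative permitted by Definition~\ref{def:3.1}.

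The step I expect to require the most care is the choice of induction measure: because $\rp$ recurses on $f_a(P_1)$ rather than on $P_1$ itself, a naive structural induction does not supply a hypothesis about $\rp(f_a(P_1))$, and this is precisely why Lemma~\ref{la:3.4} (bounding the depth of $f_a(P)$ and $g_a(P)$) is needed to license induction on $d(P)$. The verification of the side condition then hinges on the idempotency-type identities of Lemma~\ref{la:3.3}, which guarantee that collapsing a node whose condition repeats $a$ produces two identical subtrees.
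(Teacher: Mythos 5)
Your proposal is correct and follows essentially the same route as the paper: an induction on the depth $d(P)$ for the auxiliary claim that $\rp$ maps basic forms to rp-basic forms, with Lemma~\ref{la:3.4} licensing the induction hypothesis for $\rp(f_a(P_1))$ and $\rp(g_a(P_2))$, and a three-way case split on the child (using Lemma~\ref{la:3.3} in the repeated-atom case) to verify the side condition of Definition~\ref{def:3.1}. The only cosmetic difference is that you pass from $\rpbf(P)$ to the auxiliary claim directly via Lemma~\ref{la:bf}, whereas the paper wraps this final step in a short structural induction; both are fine.
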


\begin{proof}
We first prove an auxiliary result:
\begin{equation}
\label{aux1}
\text{For all $P\in\BF$, $\rp(P)$ is an rp-basic form.}
\end{equation}
This follows by induction on the depth $d(P)$ of $P$. 
If $d(P)=0$, then $P\in\{\tr,\fa\}$, and hence $\rp(P)=P$ is an rp-basic form.
For the inductive case $d(P)=n+1$ it must be the case that $P= Q\lef a\rig R$.
We find
\begin{align*}
\rp(Q\lef a\rig R)
&=\rp(f_a(Q))\lef a\rig \rp(g_a(R)),
\end{align*}
which is an rp-basic form because
\begin{itemize}\setlength\itemsep{-.1em}
\item[$-$] 
 by Lemma~\ref{la:3.4}, $f_a(Q)$ and $g_a(R)$ are basic forms with depth smaller 
than or equal to $n$,
so by the induction hypothesis, $\rp(f_a(Q))$ and $\rp(g_a(R))$ are rp-basic forms,
\item[$-$] 
$\rp(f_a(Q))$ and $\rp(g_a(R))$ both satisfy the following  property: if the central
condition (if present) is $a$, then the outer arguments are equal. We show this first for 
$\rp(f_a(Q))$ by a case distinction on the form of $Q$:
\begin{enumerate}\setlength\itemsep{-.1em}
\item
If $Q\in\{\tr,\fa\}$, then $\rp(f_a(Q))=Q$, so there is nothing to prove.
\item 
If $Q= Q_1\lef a \rig Q_2$, then $f_a(Q)=f_a(Q_1)\lef a \rig f_a(Q_1)$ and thus
by Lemma~\ref{la:3.3}, 
\\
$\rp(f_a(Q))=\rp(f_a(Q_1))\lef a \rig \rp(f_a(Q_1))$.
\item
If $Q= Q_1\lef b \rig Q_2$ with $b\ne a$, then $f_a(Q)=Q_1\lef b\rig Q_2$, so 
\\
$\rp(f_a(Q))=\rp(f_b(Q_1))\lef b\rig \rp(g_b(Q_2))$ and there is nothing to prove.
\end{enumerate}
The fact that $\rp(g_a(R))$ satisfies this property follows in a similar way.
\end{itemize}
This finishes the proof of~\eqref{aux1}. 

\smallskip

The lemma's statement now follows by structural induction: 
the base cases (comprising a single atom $a$) are again trivial, 
and for the inductive case, 
\[\rpbf(P\lef Q\rig R)=\rp(\baf(P\lef Q\rig R))= \rp(S)\]
for some basic form $S$
by Lemma~\ref{la:2.13}, and by auxiliary result~\eqref{aux1}, $\rp(S)$
is an rp-basic form.
\end{proof}

The following, somewhat technical result is used in Proposition~\ref{prop:3.7} and
Lemma~\ref{la:3.8}.

\begin{lemma}
\label{la:3.6}
If $Q\lef a\rig R$ is an rp-basic form, then $Q=\rp(Q)=\rp(f_a(Q))$ 
and $R=\rp(R)=\rp(g_a(R))$.
\end{lemma}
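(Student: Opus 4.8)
The plan is to deduce the statement from a single stronger claim proved by simultaneous induction, because the four equalities are mutually dependent and none of them can be established in isolation. Observe first that, by Definition~\ref{def:3.1}, whenever $Q\lef a\rig R$ is an rp-basic form, both $Q$ and $R$ are themselves rp-basic forms, and each of them is either $\tr$, or $\fa$, or has a central condition different from $a$, or has the form $S\lef a\rig S$ for an rp-basic form $S$. This is exactly the shape of the two subterms on which $\rp$, $f_a$ and $g_a$ act in the defining clause $\rp(P\lef a\rig Q)=\rp(f_a(P))\lef a\rig\rp(g_a(Q))$.

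I would prove, by structural induction on an rp-basic form $P$, the conjunction of: (a) $\rp(P)=P$; and (b) for every atom $a$, if $P\in\{\tr,\fa\}$ or $P=S\lef a\rig S$ for some $S$, then $\rp(f_a(P))=P$ and $\rp(g_a(P))=P$. The remaining possibility in (b), namely $P$ having central condition different from $a$, is deliberately left out: there $f_a(P)=g_a(P)=P$ by definition, so that subcase is subsumed by (a). The base cases $P\in\{\tr,\fa\}$ of both (a) and (b) are immediate from the defining clauses of $\rp$, $f_a$ and $g_a$.

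For the inductive step of (a) with $P=P_1\lef c\rig P_2$, I unfold $\rp(P)=\rp(f_c(P_1))\lef c\rig\rp(g_c(P_2))$, so it suffices to show $\rp(f_c(P_1))=P_1$ and $\rp(g_c(P_2))=P_2$. Since $P$ is rp-basic, each $P_i$ is a legal child under $c$, and these two equalities are precisely instances of the induction hypothesis for the smaller terms $P_i$: when the central condition differs from $c$ they follow from (a), and when $P_i\in\{\tr,\fa\}$ or $P_i=S\lef c\rig S$ they follow from (b). This interdependence is exactly why (a) must be carried together with (b).

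The crux is the inductive step of (b), the case $P=S\lef a\rig S$. Here $f_a(P)=f_a(S)\lef a\rig f_a(S)$, and applying $\rp$ gives $\rp(f_a(f_a(S)))\lef a\rig\rp(g_a(f_a(S)))$; by Lemma~\ref{la:3.3} the inner compositions collapse to $f_a(S)$, leaving $\rp(f_a(S))\lef a\rig\rp(f_a(S))$. It then remains to show $\rp(f_a(S))=S$, which is again an instance of (a)/(b) for the strictly smaller rp-basic child $S$ (which, by rp-basicness of $P$, is a legal child under $a$). The equality $\rp(g_a(P))=P$ is obtained symmetrically, using the dual half of Lemma~\ref{la:3.3}. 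I expect this self-referential reduction — establishing (b) for $S\lef a\rig S$ via (b) for $S$ — to be the main obstacle, and it is exactly what forces the inductive formulation over a direct computation. The lemma then follows at once: $Q$ and $R$ are rp-basic forms, so (a) yields $Q=\rp(Q)$ and $R=\rp(R)$, while (b), together with the omitted central-condition case handled by (a), yields $\rp(f_a(Q))=Q$ and $\rp(g_a(R))=R$.
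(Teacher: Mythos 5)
Your proof is correct and follows essentially the same route as the paper's: a structural induction over rp-basic forms that exploits the shape constraints of Definition~\ref{def:3.1} and invokes Lemma~\ref{la:3.3} at the crux case $P=S\lef a\rig S$ to collapse $f_a(f_a(S))$ and $g_a(f_a(S))$ to $f_a(S)$. The only (cosmetic) difference is organizational: the paper runs a simultaneous induction on the pair $Q,R$ and first proves the auxiliary identity $f_a(Q)=g_a(Q)$ for legal children, whereas you fold everything into a single induction with the strengthened conjunction (a)\,$\wedge$\,(b) and handle $f_a$ and $g_a$ separately, which lets you dispense with that auxiliary result.
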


\begin{proof}
We first prove an auxiliary result:
\begin{equation}
\label{eq:corr}
\text{If $Q\lef a\rig R$ is an rp-basic form, then $f_a(Q)=g_a(Q)$ and $f_a(R)=g_a(R)$.}
\end{equation}
We prove both equalities by simultaneous induction on the structure of $Q$ and $R$.
The base case, thus $Q,R\in\{\tr,\fa\}$,
is trivial.
If $Q= Q_1\lef a\rig Q_1$ and $R= R_1\lef a\rig R_1$, then
$Q$ and $R$ are rp-basic forms with central condition $a$, so
\begin{align*}
f_a(Q)
&= f_a(Q_1)\lef a\rig f_a(Q_1)\\
&= g_a(Q_1)\lef a\rig g_a(Q_1)
&&\text{by IH}\\
&=g_a(Q),
\end{align*}
and the equality for $R$ follows in a similar way.
If $Q= Q_1\lef a\rig Q_1$ and $R\ne R_1\lef a\rig R_1$,
then $f_a(R)=g_a(R)=R$, and the result follows as above. All remaining cases
follow in a similar way, which finishes the proof of~\eqref{eq:corr}.

\smallskip

We now prove the lemma's statement by simultaneous induction
on the structure of $Q$ and $R$.
The base case, thus $Q,R\in\{\tr,\fa\}$, is again trivial.
If $Q= Q_1\lef a\rig Q_1$ and $R= R_1\lef a\rig R_1$, then
by auxiliary result~\eqref{eq:corr},
\begin{align*}
\rp(Q)&= \rp(f_a(Q_1))\lef a\rig \rp(f_a(Q_1)),
\end{align*}
and by induction, $Q_1=\rp(Q_1)=\rp(f_a(Q_1))$.
Hence, $\rp(Q)=Q_1\lef a\rig Q_1$, and
\begin{align*}
\rp(f_a(Q))&=\rp(f_a(f_a(Q_1)))\lef a\rig\rp(g_a(f_a(Q_1)))\\
&=\rp(f_a(Q_1))\lef a\rig\rp(f_a(Q_1))
&&\text{by Lemma~\ref{la:3.3}}\\
&=Q_1\lef a\rig Q_1,
\end{align*}
and the equalities for $R$ follow in a similar way.

If $Q= Q_1\lef a\rig Q_1$ and $R\ne R_1\lef a\rig R_1$,
the lemma's equalities follow in a similar way, although a bit simpler
because $g_a(R)=f_a(R)=R$.

For all remaining cases, the lemma's equalities  follow in a similar way.
\end{proof}

\begin{proposition}[{\normalfont $\rpbf$} is a normalization function]
\label{prop:3.7}
For all $P\in\PS$, $\rpbf(P)$ is an rp-basic form, and
for each rp-basic form $P$, $\rpbf(P)= P$.
\end{proposition}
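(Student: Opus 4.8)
The plan is to observe that the proposition splits into two halves, each of which is essentially delivered by a lemma already proved. The first half---that $\rpbf(P)$ is an rp-basic form for every $P\in\PS$---is verbatim the statement of Lemma~\ref{la:3.5}, so I would simply invoke it.

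For the second half I would first reduce $\rpbf$ to its auxiliary function $\rp$. Since every rp-basic form is in particular a basic form (Definition~\ref{def:3.1} only refines the grammar of Definition~\ref{def:basic} by imposing side conditions), Lemma~\ref{la:2.12} gives $\baf(P)=P$ for any rp-basic form $P$, whence $\rpbf(P)=\rp(\baf(P))=\rp(P)$. It therefore suffices to prove $\rp(P)=P$ for every rp-basic form $P$.

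I would establish $\rp(P)=P$ by a case distinction on the shape of the rp-basic form $P$. The base cases $P\in\{\tr,\fa\}$ are immediate from the defining clauses of $\rp$. In the remaining case $P=Q\lef a\rig R$, the definition of $\rp$ unfolds to $\rp(Q\lef a\rig R)=\rp(f_a(Q))\lef a\rig\rp(g_a(R))$, and here Lemma~\ref{la:3.6} applies directly: since $Q\lef a\rig R$ is itself an rp-basic form, the lemma yields $\rp(f_a(Q))=Q$ and $\rp(g_a(R))=R$, so that $\rp(P)=Q\lef a\rig R=P$. Note that no appeal to an induction hypothesis on $Q$ and $R$ is needed, because Lemma~\ref{la:3.6} already absorbs the recursion.

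I do not expect a genuine obstacle here, precisely because Lemmas~\ref{la:3.5} and~\ref{la:3.6} have done the technical work; the proposition then reads almost as a corollary packaging the ``idempotence on normal forms'' statement that characterizes a normalization function. The only points requiring care are bookkeeping: confirming that the inclusion of rp-basic forms into basic forms legitimizes the appeal to Lemma~\ref{la:2.12}, and matching the two arguments produced by the outer defining clause of $\rp$ with the two identities supplied by Lemma~\ref{la:3.6}.
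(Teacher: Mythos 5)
Your proposal is correct and follows essentially the same route as the paper's own proof: the first half is Lemma~\ref{la:3.5}, the second half reduces via Lemma~\ref{la:2.12} to showing $\rp(P)=P$ for rp-basic $P$, which is then settled by a case distinction whose inductive case is discharged by Lemma~\ref{la:3.6}. The extra bookkeeping remarks (rp-basic forms being basic forms, and no induction hypothesis being needed because Lemma~\ref{la:3.6} absorbs the recursion) are accurate and consistent with the paper.
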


\begin{proof}
The first statement is Lemma~\ref{la:3.5}. For the second statement, it suffices by
Lemma~\ref{la:2.14} to prove that
for each rp-basic form $P$, $\rp(P)=P$.
This follows by case distinction on $P$.
The cases $P\in\{\tr,\fa\}$ follow immediately, and otherwise $P= Q\lef a\rig R$, 
and thus  $\rp(P)=\rp(f_a(Q))\lef a\rig\rp(g_a(R))$. By Lemma~\ref{la:3.6}, 
$\rp(f_a(Q))=Q$ and $\rp(g_a(R))=R$, hence $\rp(P)=P$.
\end{proof}

\begin{lemma}
\label{la:3.8}
For all $P\in\BF$, $\CPrp\vdash P=\rp(P)$.
\end{lemma}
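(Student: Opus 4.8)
The plan is to prove $\CPrp\vdash P=\rp(P)$ by induction, but \emph{structural} induction on $P$ does not go through directly. In the inductive case $P=P_1\lef a\rig P_2$ we have $\rp(P)=\rp(f_a(P_1))\lef a\rig\rp(g_a(P_2))$, so the recursive calls of $\rp$ land on $f_a(P_1)$ and $g_a(P_2)$, which are in general not subterms of $P$. I would therefore run the induction on the \emph{depth} $d(P)$ and invoke Lemma~\ref{la:3.4} to guarantee $d(f_a(P_1))\le d(P_1)<d(P)$ and $d(g_a(P_2))\le d(P_2)<d(P)$; this licenses applying the induction hypothesis to the basic forms $f_a(P_1)$ and $g_a(P_2)$.

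The heart of the argument is an auxiliary claim that isolates the two new axioms: for all $X\in\BF$, $a\in A$ and $Y\in\PS$,
\[\CPrp\vdash X\lef a\rig Y=f_a(X)\lef a\rig Y\quad\text{and}\quad\CPrp\vdash Y\lef a\rig X=Y\lef a\rig g_a(X).\]
I would prove the first equality by structural induction on $X$; the cases $X\in\{\tr,\fa\}$ and $X=X_1\lef b\rig X_2$ with $b\ne a$ are immediate since then $f_a(X)=X$. The only real case is $X=X_1\lef a\rig X_2$, handled by
\begin{align*}
(X_1\lef a\rig X_2)\lef a\rig Y
&=(X_1\lef a\rig X_1)\lef a\rig Y&&\text{by \eqref{CPrp1}},\\
&=(f_a(X_1)\lef a\rig X_1)\lef a\rig Y&&\text{by IH},\\
&=(f_a(X_1)\lef a\rig f_a(X_1))\lef a\rig Y&&\text{by \eqref{CPrp1}},\\
&=f_a(X)\lef a\rig Y.
\end{align*}
The second equality of the claim is proved by a symmetric induction, with \eqref{CPrp2} and $g_a$ replacing \eqref{CPrp1} and $f_a$.

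With the claim available, the depth induction is short. In the inductive case, two applications of the claim (first with $X:=P_1$, $Y:=P_2$, then with $X:=P_2$, $Y:=f_a(P_1)$) give $\CPrp\vdash P_1\lef a\rig P_2=f_a(P_1)\lef a\rig g_a(P_2)$. The depth induction hypothesis, applicable by Lemma~\ref{la:3.4}, yields $\CPrp\vdash f_a(P_1)=\rp(f_a(P_1))$ and $\CPrp\vdash g_a(P_2)=\rp(g_a(P_2))$, and congruence of provable equality then gives $\CPrp\vdash f_a(P_1)\lef a\rig g_a(P_2)=\rp(f_a(P_1))\lef a\rig\rp(g_a(P_2))=\rp(P)$; chaining the equalities yields $\CPrp\vdash P=\rp(P)$. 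I expect the main obstacle to be the auxiliary claim rather than the final assembly: the point is that a single use of \eqref{CPrp1} only collapses the \emph{outer} occurrence of $a$, so that restoring $f_a(X_1)$ in \emph{both} branches forces one to feed the induction hypothesis back into the left branch and then apply \eqref{CPrp1} a second time, exactly as in the display above.
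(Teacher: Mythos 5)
Your proof is correct, but it takes a genuinely different route from the paper's. The paper proves Lemma~\ref{la:3.8} by structural induction on $P$ with a four-way case split on the shapes of $P_1$ and $P_2$: in each non-trivial case it applies \eqref{CPrp1} or \eqref{CPrp2} once to duplicate a branch, invokes the structural induction hypothesis to replace subterms $R_1$, $P_2$, etc.\ by their $\rp$-images, and then uses the technical Lemma~\ref{la:3.6} (together with auxiliary result~\eqref{aux1} from the proof of Lemma~\ref{la:3.5}) to rewrite $\rp(R_1)$ as $\rp(f_a(R_1))$ and $\rp(P_2)$ as $\rp(g_a(P_2))$, so that the result folds back into $\rp(P)$. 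You instead isolate the purely equational facts $\CPrp\vdash X\lef a\rig Y=f_a(X)\lef a\rig Y$ and its dual for $g_a$ (each needing \emph{two} applications of the relevant axiom scheme, with the induction hypothesis fed back in between --- your display for the case $X=X_1\lef a\rig X_2$ is correct, and quantifying the claim over all $Y$ is exactly what makes the instantiation $Y:=X_1$ legitimate), and then run the main induction on $d(P)$, using Lemma~\ref{la:3.4} to license the induction hypothesis at $f_a(P_1)$ and $g_a(P_2)$. Your diagnosis that a naive structural induction stumbles because $f_a(P_1)$ and $g_a(P_2)$ are not subterms of $P$ is accurate; the paper sidesteps this not by changing the induction measure but by detouring through rp-basic forms via Lemma~\ref{la:3.6}. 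The trade-off: the paper's argument reuses machinery it needs anyway (Lemma~\ref{la:3.6} also serves Proposition~\ref{prop:3.7}), while yours is more self-contained, avoids the rp-basic-form apparatus entirely, and makes explicit exactly where and how often the new axioms \eqref{CPrp1} and \eqref{CPrp2} are used.
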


\begin{proof}
We apply structural induction on $P$.
The base cases $P\in\{\tr,\fa\}$ are trivial.
Assume $P= P_1\lef a\rig P_2$. By induction $\CPrp\vdash P_i=\rp(P_i)$. 
We proceed by a case distinction on the form 
that $P_1$ and $P_2$ can have: 
\begin{enumerate}\setlength\itemsep{-.1em}
\item 
If $P_i\in\{\tr,\fa,Q_i\lef b_i\rig Q_i'\}$ with $b_i\ne a$,
then $f_a(P_1)=P_1$ and $g_a(P_2)=P_2$, and hence
$\rp(P)= \rp(P_1)\lef a\rig \rp(P_2)$, and thus
$\CPrp\vdash P=\rp(P)$. 
\item 
If $P_1= R_1\lef a\rig R_2$ and $P_2\in\{\tr,\fa,Q'\lef b\rig Q''\}$ with $b\ne a$,
then $g_a(P_2)=P_2$ and by auxiliary result~\eqref{aux1} in the proof of Lemma~\ref{la:3.5}, 
$\rp(R_1)$ and $\rp(P_2)$ are rp-basic forms.
We derive
\begin{align*}
\CPrp&\vdash
P=(R_1\lef a\rig R_2)\lef a\rig P_2\\
&=(R_1\lef a\rig R_1)\lef a\rig P_2
&&\text{by~\eqref{CPrp1}}\\
&=(\rp(R_1)\lef a\rig \rp(R_1))\lef a\rig\rp(P_2)
&&\text{by IH}\\
&=(\rp(f_a(R_1))\lef a\rig \rp(f_a(R_1)))\lef a\rig\rp(g_a(P_2))
&&\text{by Lemma~\ref{la:3.6}}\\
&=\rp(f_a(R_1\lef a \rig R_2))\lef a\rig\rp(g_a(P_2))
\\
&=\rp((R_1\lef a \rig R_2)\lef a\rig P_2)
\\
&=\rp(P).
\end{align*}
\item 
If $P_1\in\{\tr,\fa,Q'\lef b\rig Q''\}$ with $b\ne a$ and 
$P_2= S_1\lef a\rig S_2$, we can proceed as in the previous case, but now
using axiom scheme~\eqref{CPrp2} and the identity $f_a(P_1)=P_1$, and the fact
that $\rp(P_1)$ and $\rp(S_2)$ are rp-basic forms.
\item 
If $P_1= R_1\lef a\rig R_2$ and $P_2= S_1\lef a\rig S_2$,
we can proceed as in two previous cases, now using both~\eqref{CPrp1} and \eqref{CPrp2},
and the fact that $\rp(R_1)$ and $\rp(S_2)$ are rp-basic forms.
\end{enumerate}
\end{proof}

\begin{theorem}
\label{thm:3.9}
For all $P\in\PS$, $\CPrp\vdash P=\rpbf(P)$.
\end{theorem}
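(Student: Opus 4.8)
The plan is to derive the identity by composing two facts that are already available, following the two-stage structure of the definition $\rpbf(P)=\rp(\baf(P))$. First I would split the desired equality into the two legs of this composition: from $P$ to its basic form $\baf(P)$, and from $\baf(P)$ to its rp-basic form $\rp(\baf(P))$. The key observation is that $\CPrp$ extends $\CP$, so any $\CP$-provable equation is also $\CPrp$-provable, and transitivity of $=$ lets me glue the two legs together.

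For the first leg, I would invoke Corollary~\ref{cor:1} (equivalently Theorem~\ref{thm:1a}), which gives $\CP\vdash P=\baf(P)$, and then weaken to $\CPrp\vdash P=\baf(P)$ since $\CP\subseteq\CPrp$. For the second leg, I would use Lemma~\ref{la:bf} to note that $\baf(P)$ is a basic form, so that Lemma~\ref{la:3.8}, applied to $\baf(P)\in\BF$, yields
\[
\CPrp\vdash \baf(P)=\rp(\baf(P))=\rpbf(P).
\]

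Chaining these by transitivity gives $\CPrp\vdash P=\baf(P)=\rpbf(P)$, as required. In effect the theorem is a corollary of Lemma~\ref{la:3.8} together with the completeness result for $\CP$.

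I do not expect any genuine obstacle here: the substantive work has already been done in Lemma~\ref{la:3.8}, whose proof handled the case analysis on the central conditions and the appeals to the schemes \eqref{CPrp1} and \eqref{CPrp2}. The only points that require care are purely bookkeeping: confirming that $\baf(P)$ lies in $\BF$ so that Lemma~\ref{la:3.8} is applicable, and explicitly using the inclusion $\CP\subseteq\CPrp$ to transport the $\CP$-derivation of $P=\baf(P)$ into $\CPrp$. Neither of these involves any further calculation.
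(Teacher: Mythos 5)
Your proof is correct and follows essentially the same route as the paper: obtain $\CPrp\vdash P=\baf(P)$ from Theorem~\ref{thm:1a} and Corollary~\ref{cor:1} (using $\CP\subseteq\CPrp$), then apply Lemma~\ref{la:3.8} to the basic form $\baf(P)$ and chain by transitivity. The only difference is that you make the bookkeeping steps (Lemma~\ref{la:bf} and the inclusion of axiom sets) explicit, which the paper leaves implicit.
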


\begin{proof}
By  Theorem~\ref{thm:1a} and Corollary~\ref{cor:1} we find
$\CPrp\vdash P=\baf(P)$. By Lemma~\ref{la:3.8},
$\CPrp\vdash \baf(P)=\rp(\baf(P))$, and $\rp(\baf(P))=\rpbf(P)$.
\end{proof}

\begin{definition}
\label{def:3.10}
The binary relation $=_\rpbf$ on \PS\
is defined as follows: 
\[P=_\rpbf Q~\iff~\rpbf(P)=\rpbf(Q).\]
\end{definition}

\begin{theorem}
\label{thm:3.11}
For all $P,Q\in\PS$, $\CPrp\vdash P=Q~\iff P=_\rpbf Q$.
\end{theorem}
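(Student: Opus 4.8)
The plan is to prove the two implications separately, following the template of the soundness/completeness proof of Theorem~\ref{thm:1a}. The completeness direction ($\Leftarrow$) is immediate from the normalization result of Theorem~\ref{thm:3.9}, whereas the soundness direction ($\Rightarrow$) requires showing that $=_\rpbf$ is a congruence that validates every closed instance of the axioms of $\CPrp$; establishing the congruence property is the crux.

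For $\Leftarrow$, assume $P=_\rpbf Q$, i.e.\ $\rpbf(P)=\rpbf(Q)$ as conditional statements. By Theorem~\ref{thm:3.9} we have $\CPrp\vdash P=\rpbf(P)$ and $\CPrp\vdash Q=\rpbf(Q)$, and since $\rpbf(P)$ and $\rpbf(Q)$ are syntactically identical, transitivity yields $\CPrp\vdash P=Q$. For $\Rightarrow$, I would argue by induction on the length of a derivation of $P=Q$, so that it suffices to check that $=_\rpbf$ is an equivalence relation (immediate from its definition), that it is a congruence, and that every closed instance of each axiom satisfies $=_\rpbf$. Axiom validity splits into two parts. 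For instances of \eqref{cp1}--\eqref{cp4}, Theorem~\ref{thm:1a} gives $\CP\vdash\text{LHS}=\text{RHS}$, hence $\baf(\text{LHS})=\baf(\text{RHS})$; applying the function $\rp$ to these equal basic forms yields $\rpbf(\text{LHS})=\rpbf(\text{RHS})$.

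For instances of \eqref{CPrp1} and \eqref{CPrp2} I would compute both normal forms directly. Writing $p,q,s$ for the $\baf$-images of the arguments, the clauses of $\baf$ show that the two sides of \eqref{CPrp1} have $\baf$-images $(p\lef a\rig q)\lef a\rig s$ and $(p\lef a\rig p)\lef a\rig s$; one then checks that $\rp$ collapses these to a common rp-basic form, using $f_a(p\lef a\rig q)=f_a(p)\lef a\rig f_a(p)=f_a(p\lef a\rig p)$ together with the absorption identities of Lemma~\ref{la:3.3}, so that both reduce to $(\rp(f_a(p))\lef a\rig\rp(f_a(p)))\lef a\rig\rp(g_a(s))$. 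The case of \eqref{CPrp2} is dual.

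The main obstacle is the congruence of $=_\rpbf$. Unlike $\baf$, the function $\rpbf=\rp\circ\baf$ is not manifestly compositional, since $\rp$ does not distribute over leaf replacement, so the one-line argument of Lemma~\ref{la:rephrase} does not carry over. I would reduce congruence to the single compatibility identity
\[\rpbf(P\lef Q\rig R)=\rpbf(\rpbf(P)\lef\rpbf(Q)\rig\rpbf(R))\qquad(P,Q,R\in\PS),\]
from which congruence in all three argument positions follows at once by replacing one argument with a $=_\rpbf$-equivalent one. Using that each $\rpbf(X)$ is an rp-basic form and hence fixed by $\baf$ (Lemma~\ref{la:2.12}), this identity unfolds to a statement purely about $\rp$ and leaf replacement on basic forms, namely $\rp(q[\tr\mapsto p,\fa\mapsto r])=\rp\big(\rp(q)[\tr\mapsto\rp(p),\fa\mapsto\rp(r)]\big)$ for all basic forms $p,q,r$. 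I expect this to be provable by induction on $q$ with a case distinction on its central condition, the essential auxiliary facts being the absorption identities of Lemma~\ref{la:3.3} and subsidiary identities of the form $\rp(f_a(p))=\rp(f_a(\rp(p)))$ together with their $g_a$-dual. Verifying these interaction lemmas between $\rp$, $f_a$, $g_a$ and leaf replacement is where the real work lies, and I would not expect a shortcut that avoids them.
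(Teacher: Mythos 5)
Your $\Leftarrow$ direction is essentially the paper's (Theorem~\ref{thm:3.9} plus transitivity), but your $\Rightarrow$ direction takes a genuinely different and substantially harder route. The paper never proves soundness of the $\CPrp$-axioms for $=_\rpbf$ directly: it derives $\CPrp\vdash\rpbf(P)=\rpbf(Q)$ from Theorem~\ref{thm:3.9}, then invokes two results imported from~\cite{BP10} (completeness of $\CPrp$ for the valuation congruence $=_\rp$ defined there, and the fact that $=_\rp$ separates distinct rp-basic forms), together with Lemma~\ref{la:3.5}, to conclude that $\rpbf(P)$ and $\rpbf(Q)$ are syntactically identical. In particular, the congruence property of $=_\rpbf$ is obtained only \emph{afterwards}, as a corollary of the theorem, and is never attacked head-on. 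Your plan would make the proof self-contained, which is a real gain, and your treatment of the axiom instances is correct: the \CP-instances via Theorem~\ref{thm:1a}, and the \eqref{CPrp1}/\eqref{CPrp2}-instances by the computation with $f_a$ and the absorption identities of Lemma~\ref{la:3.3}.

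The gap is that the crux, the compatibility identity $\rp(q[\tr\mapsto p,\fa\mapsto r])=\rp\big(\rp(q)[\tr\mapsto\rp(p),\fa\mapsto\rp(r)]\big)$, is only conjectured, and it is essentially equivalent to the congruence property you want (given the theorem it follows from $X=_\rpbf\rpbf(X)$ plus congruence; without the theorem it carries all the difficulty). The identity is true, but the induction you sketch (``on $q$ with a case distinction on its central condition'') does not close as stated: for $q=q_1\lef a\rig q_2$ the left-hand side unfolds to $\rp(f_a(q_1[\tr\mapsto p,\fa\mapsto r]))\lef a\rig\rp(g_a(q_2[\tr\mapsto p,\fa\mapsto r]))$ while the right-hand side unfolds to $\rp(f_a(\rp(f_a(q_1))[\tr\mapsto\rp(p),\fa\mapsto\rp(r)]))\lef a\rig\rp(g_a(\rp(g_a(q_2))[\tr\mapsto\rp(p),\fa\mapsto\rp(r)]))$, which is not an instance of the plain induction hypothesis. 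You must strengthen the statement to its $f_a$- and $g_a$-decorated variants, and since $f_a$ and $g_a$ duplicate subterms you must induct on depth via Lemma~\ref{la:3.4} (as in the proof of Lemma~\ref{la:3.5}) rather than on structure; you also need $\rp\circ\rp=\rp$ (from~\eqref{aux1} and Proposition~\ref{prop:3.7}) besides the subsidiary identities $\rp(f_a(p))=\rp(f_a(\rp(p)))$ that you do name. None of this is impossible, but as it stands the hardest step of your argument is a statement of intent, whereas the paper's detour through~\cite{BP10} discharges it completely.
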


\begin{proof}
($\Rightarrow$) Assume $\CPrp\vdash P=Q$. By Theorem~\ref{thm:3.9},
$\CPrp\vdash \rpbf(P)=\rpbf(Q)$. 
In~\cite{BP10} the following two
statements are proved (Theorem 6.3 and an auxiliary result in its proof),
where $=_\textit{rp}$ is a binary relation on $\PS$:
\begin{enumerate}\setlength\itemsep{-.1em}
\item
For all $P,Q\in\PS$,\quad
$\CPrp\vdash P=Q ~\iff~ P=_\textit{rp} Q$.
\item
For all rp-basic forms $P$ and $Q,\quad P=_\textit{rp} Q ~\Rightarrow~ P= Q$.
\end{enumerate}
By Lemma~\ref{la:3.5} 
these statements imply $\rpbf(P)=\rpbf(Q)$, that is, $P=_\rpbf Q$.

\smallskip

($\Leftarrow$)
Assume $P=_\rpbf Q$. By Theorem~\ref{thm:3.9}, $\CPrp\vdash P=Q$.
\end{proof}

So, the relation $=_\rpbf$ is axiomatized by $\CPrp$, and is thus a congruence.
With this observation in mind, we
define a transformation on evaluation trees that mimics the function
$\rpbf$, and prove that equality of two such transformed trees characterizes 
the congruence that is axiomatized by $\CPrp$.

\begin{definition}
\label{def:3.12}
The unary 
\textbf{repetition-proof evaluation function} 
\[\rpse:\PS\to \T\]
yields \textbf{repetition-proof evaluation trees} and is defined by
\begin{align*}
\rpse(P)&=\rpt(se(P)).
\end{align*}
The auxiliary function $\rpt:\T\to\T$ is defined as follows ($a\in A$):
\begin{align*}
\rpt(\tr)&=\tr,\\
\rpt(\fa)&=\fa,\\
\rpt(X\unlhd a\unrhd Y)&=\rpt(\f_a(X))\unlhd a\unrhd\rpt(\g_a(Y)).
\end{align*}
For $a\in A$, the auxiliary functions $\f_a: \T\to\T$ and $\g_a: \T\to\T$ 
are defined by
\begin{align*}
\f_a(\tr)&=\tr,\\ 
\f_a(\fa)&=\fa,\\
\f_a(X\unlhd b\unrhd Y)&=
\begin{cases}
\f_a(X)\unlhd a\unrhd \f_a(X)
&\text{if $b=a$},\\
X\unlhd b\unrhd Y
&\text{otherwise},
\end{cases}
\end{align*}
and
\begin{align*}
\g_a(\tr)&=\tr,\\
\g_a(\fa)&=\fa,\\
\g_a(X\unlhd b\unrhd Y)&=
\begin{cases}
\g_a(Y)\unlhd a\unrhd \g_a(Y)
&\text{if }b= a,\\
X\unlhd b\unrhd Y
&\text{otherwise}.
\end{cases}
\end{align*}
\end{definition}

\begin{example}\rm
\label{ex:rp}
Let $P=a\lef(\fa\lef a\rig\tr) \rig \fa$. 
We depict $se(P)$ (as in Example~\ref{ex:fr})
and the repetition-proof evaluation tree $\rpse(P)=\fa\unlhd a\unrhd(\fa\unlhd a\unrhd\fa)$:
\[
\begin{array}[t]{ll}
\begin{array}[t]{l}
\\[-4mm]
\begin{tikzpicture}[%
      level distance=7.5mm,
      level 1/.style={sibling distance=30mm},
      level 2/.style={sibling distance=15mm},
      baseline=(current bounding box.center)]
      \node (a) {$a$}
        child {node (b1) {$\fa$}
        }
        child {node (b2) {$a$}
          child {node (d1) {$\tr$}} 
          child {node (d2) {$\fa$}}
        };
\end{tikzpicture}
\end{array}
&\qquad
\begin{array}[t]{l}
\\[-4mm]
\qquad
\begin{tikzpicture}[%
      level distance=7.5mm,
      level 1/.style={sibling distance=30mm},
      level 2/.style={sibling distance=15mm},
      baseline=(current bounding box.center)]
      \node (a) {$a$}
        child {node (b1) {$\fa$}
        }
        child {node (b2) {$a$}
          child {node (d1) {$\fa$}} 
          child {node (d2) {$\fa$}}
        };
\end{tikzpicture}
\\[-1mm]
\end{array}\end{array}
\]
\qedex\end{example} 
The similarities between $\rpse$ and the function $\rpbf$ can be exploited:
We use the following lemma in the proof of this section's last completeness result.

\begin{lemma}
\label{la:3.14}
For all $P\in\BF$, $\rpt(se(P))=se(\rp(P))$.
\end{lemma}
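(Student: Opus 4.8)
The plan is to prove the commutation identity $\rpt(se(P))=se(\rp(P))$ for all basic forms $P\in\BF$ by structural induction on $P$, following exactly the recursive structure shared by $\rpt$ and $\rp$. The base cases $P\in\{\tr,\fa\}$ are immediate, since all four functions fix $\tr$ and $\fa$ and $se$ maps these constants to the corresponding leaves. For the inductive step $P=P_1\lef a\rig P_2$, I would unfold both sides using the defining clauses: on the left, $\rpt(se(P))=\rpt\big(se(P_1)\unlhd a\unrhd se(P_2)\big)=\rpt(\f_a(se(P_1)))\unlhd a\unrhd\rpt(\g_a(se(P_2)))$, where I use that $se(P_1\lef a\rig P_2)=se(a)[\tr\mapsto se(P_1),\fa\mapsto se(P_2)]=se(P_1)\unlhd a\unrhd se(P_2)$; on the right, $se(\rp(P))=se(\rp(f_a(P_1))\lef a\rig\rp(g_a(P_2)))=se(\rp(f_a(P_1)))\unlhd a\unrhd se(\rp(g_a(P_2)))$.

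After this unfolding, matching the two sides reduces to proving that the tree-level and term-level auxiliary functions commute with $se$, namely the two auxiliary claims
\begin{align*}
\f_a(se(P))&=se(f_a(P)),
&\g_a(se(P))&=se(g_a(P))
\end{align*}
for all $P\in\BF$ and $a\in A$. Given these, the left branch $\rpt(\f_a(se(P_1)))=\rpt(se(f_a(P_1)))$, and a subtlety arises: I cannot directly apply the induction hypothesis to $f_a(P_1)$ because $f_a(P_1)$ need not be a structural subterm of $P$. I would handle this either by strengthening the induction to be on depth $d(P)$ (using Lemma~\ref{la:3.4}, which guarantees $d(f_a(P_1)),d(g_a(P_2))\le d(P_1),d(P_2)<d(P)$), or by proving the two auxiliary claims as a separate lemma first and then invoking the induction hypothesis on $P_1$ and $P_2$ to get $\rpt(se(f_a(P_1)))=se(\rp(f_a(P_1)))$ via the commutation combined with the main statement applied at smaller depth. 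The cleaner route is to switch the overall induction to depth and chain: $\rpt(\f_a(se(P_1)))=\rpt(se(f_a(P_1)))=se(\rp(f_a(P_1)))$, the last step by the induction hypothesis since $d(f_a(P_1))\le d(P_1)<d(P)$. The right branch is symmetric, using $\g_a$ and $g_a$.

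The main obstacle is establishing the two commutation claims $\f_a(se(P))=se(f_a(P))$ and $\g_a(se(P))=se(g_a(P))$, which themselves go by structural induction on $P$. The base cases are trivial. For $P=Q\lef b\rig R$ with $se(P)=se(Q)\unlhd b\unrhd se(R)$, the case $b\ne a$ is immediate since both $f_a$ and $\f_a$ act as the identity. The interesting case is $b=a$: then $f_a(P)=f_a(Q)\lef a\rig f_a(Q)$, so $se(f_a(P))=se(f_a(Q))\unlhd a\unrhd se(f_a(Q))$, while $\f_a(se(P))=\f_a(se(Q))\unlhd a\unrhd\f_a(se(Q))$, and the two agree by the induction hypothesis applied to $Q$. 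The only delicate point to check is that the root symbol and branching carried by $se$ correctly correspond to the central condition of the basic form under the $f_a$/$\f_a$ rewrite, i.e.\ that the guard ``$b=a$'' in the tree-level definition fires exactly when it fires at the term level; this holds because $se$ faithfully records the central condition of a basic form as the root label of the corresponding subtree, which is precisely the content of the syntactic correspondence noted after Theorem~\ref{thm:1} and formalized through Lemma~\ref{la:2.6}.
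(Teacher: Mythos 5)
Your proof is correct, but it takes a genuinely different route from the paper's. The paper establishes the composite auxiliary result $\rpt(\f_a(se(P)))=se(\rp(f_a(P)))$ (and its $\g_a/g_a$ counterpart) by a single structural induction on $P$; in the case where the central condition matches $a$, the left-hand side unfolds to $\rpt(\f_a(\f_a(se(Q))))\unlhd a\unrhd \rpt(\g_a(\f_a(se(Q))))$, so the argument must invoke the idempotence and absorption identities of Lemmas~\ref{la:3.13} and~\ref{la:3.3} to collapse the nested applications before the induction hypothesis can fire; the main statement then follows by a mere case distinction on $P$. You instead factor the problem into the simpler commutation claims $\f_a(se(P))=se(f_a(P))$ and $\g_a(se(P))=se(g_a(P))$ --- which go through by plain structural induction with no idempotence needed --- and then run the main induction on the depth $d(P)$, using Lemma~\ref{la:3.4} to justify applying the induction hypothesis to $f_a(P_1)$ and $g_a(P_2)$, which are not structural subterms of $P$. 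You correctly identify this well-foundedness issue as the crux and resolve it properly. What your decomposition buys is independence from Lemmas~\ref{la:3.3} and~\ref{la:3.13} altogether, at the cost of trading structural induction for induction on depth in the outer argument; the paper's version keeps a single structural induction but leans on the idempotence machinery. One cosmetic remark: the ``delicate point'' you flag at the end, that the guard $b=a$ fires at the tree level exactly when it fires at the term level, is immediate from $se(Q\lef b\rig R)=se(Q)\unlhd b\unrhd se(R)$ and does not need Lemma~\ref{la:2.6}, which concerns injectivity of $se$ on basic forms rather than this syntactic correspondence.
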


\begin{proof}
We first prove an auxiliary result:
\begin{equation}
\label{auxR4}
\text{For all $a\in A$ and $P\in\BF$, $\f_a(se(P))=se(f_a(P))$ and 
$\g_a(se(P))=se(g_a(P))$.}
\end{equation}
Fix some $a\in A$.
We prove~\eqref{auxR4} by structural induction on $P$.
The base cases $P\in\{\tr,\fa\}$ are trivial. 
For the inductive case $P= Q\lef b\rig R$ we have to distinguish
the cases $b=a$ and $b\ne a$. If $b=a$, then 
\begin{align*}
\f_a(se(Q\lef a\rig R))
&=\f_a(se(Q)\unlhd a\unrhd se(R))\\
&=\f_a(se(Q))\unlhd a\unrhd \f_a(se(Q))\\
&=se(f_a(Q))\unlhd a\unrhd se(f_a(Q))
&&\text{by IH}\\
&=se(f_a(Q\lef a\rig R)), 
\end{align*}
and if $b\ne a$, then
\begin{align*}
\f_a(se(Q\lef b\rig R))
&=\f_a(se(Q)\unlhd b\unrhd se(R))\\
&=se(Q)\unlhd b\unrhd se(R)\\
&=se(Q\lef b\rig R)\\
&=se(f_a(Q\lef b\rig R)).
\end{align*}
The second equality can be derived in a similar way, and
this finishes the proof of~\eqref{auxR4}.

\smallskip

We  prove the lemma's statement by induction on $d(P)$. 
The base cases $P\in\{\tr,\fa\}$
follow immediately. Assume $P= Q\lef a\rig R$, then
\begin{align*}
\rpt(se(Q\lef a\rig R))
&=\rpt(se(Q)\unlhd a\unrhd se(R))\\
&=\rpt(\f_a(se(Q)))\unlhd a\unrhd \rpt(\g_a(se(R)))\\
&=\rpt(se(f_a(Q)))\unlhd a\unrhd \rpt(se(g_a(R)))
&&\text{by \eqref{auxR4}}\\
&=se(\rp(f_a(Q)))\unlhd a\unrhd se(\rp(g_a(R)))
&&\text{by~IH (and Lemma~\ref{la:3.4})}\\
&=se(\rp(f_a(Q))\lef a\rig \rp(g_a(R)))\\
&=se(\rp(Q\lef a\rig R)).
\end{align*}
\end{proof}

Finally, we relate conditional statements by means of their  
repetition-proof evaluation trees.
\begin{definition}
\textbf{Repetition-proof valuation congruence}, notation $=_\rpse$, is defined 
on \PS\ as follows:
\[P=_\rpse Q~\iff~\rpse(P)=\rpse(Q).\]
\end{definition}

The following characterization result immediately implies that $=_\rpse$ is a 
congruence relation on $\PS$
(and hence justifies calling it a congruence).

\begin{proposition}
\label{prop:3.16}
For all $P,Q\in\PS$, $P=_\rpse Q~\iff~ P=_\rpbf Q$.
\end{proposition}

\begin{proof}
($\Rightarrow$) Assume $\rpse(P)=\rpse(Q)$, thus $\rpt(se(P))=\rpt(se(Q))$.
By Corollary~\ref{cor:1}, 
\\
$\rpt(se(\baf(P)))=\rpt(se(\baf(Q)))$,
so by Lemma~\ref{la:3.14},
$se(\rp(\baf(P)))=se(\rp(\baf(Q)))$.
\\
By Lemma~\ref{la:2.8} and auxiliary result~\eqref{aux1} 
(see the proof of Lemma~\ref{la:3.5}), it follows that $\rp(\baf(P))=\rp(\baf(Q))$,
that is, $P=_\rpbf Q$. 

\smallskip

($\Leftarrow$) Assume $P=_\rpbf Q$, thus $\rp(\baf(P))=\rp(\baf(Q))$ and
$se(\rp(\baf(P)))=se(\rp(\baf(Q)))$. By Lemma~\ref{la:3.14}, 
$\rpt(se(\baf(P)))=\rpt(se(\baf(Q)))$.
By Corollary~\ref{cor:1}, $se(\baf(P))=se(P)$ and $se(\baf(Q))=se(Q)$,
so $\rpt(se(P))=\rpt(se(Q))$, 
that is, $P=_\rpse Q$.
\end{proof}

We end this section with the completeness result we were after.

\begin{theorem}[Completeness of $\CPrp$]
\label{thm:3.17}
For all $P,Q\in\PS$, 
\[\CPrp\vdash P=Q
~\iff~ 
P=_\rpse Q.\]
\end{theorem}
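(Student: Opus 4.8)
The plan is to obtain this completeness result by composing the two principal facts already established for repetition-proof valuation congruence, so that essentially no new work is required at this point.

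First I would rewrite the left-hand side using Theorem~\ref{thm:3.11}: for all $P,Q\in\PS$ we have $\CPrp\vdash P=Q$ if and only if $P=_\rpbf Q$, that is, $\rpbf(P)=\rpbf(Q)$. This reduces derivability in $\CPrp$ to the purely syntactic equality of the rp-basic forms produced by the normalization function $\rpbf$. Next I would rewrite the right-hand side using Proposition~\ref{prop:3.16}: for all $P,Q\in\PS$ we have $P=_\rpse Q$ if and only if $P=_\rpbf Q$. Both biconditionals share the middle term $P=_\rpbf Q$, so chaining them gives $\CPrp\vdash P=Q \iff P=_\rpse Q$, which is exactly the statement of the theorem.

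The step I expect to carry the real weight is not this final assembly but the two results being cited, whose proofs have already been discharged. In particular, Proposition~\ref{prop:3.16} rests on Lemma~\ref{la:3.14}, the bridging identity $\rpt(se(P))=se(\rp(P))$, which shows that the tree-level transformation $\rpt$ and the syntactic transformation $\rp$ commute through $se$; together with the injectivity of $se$ on (rp-)basic forms (Lemma~\ref{la:2.6} and auxiliary result~\eqref{aux1}), this is what forces the tree-based relation $=_\rpse$ and the normal-form relation $=_\rpbf$ to coincide. Theorem~\ref{thm:3.11} in turn supplies the axiomatic half, grounding $=_\rpbf$ in $\CPrp$ via the external completeness result of~\cite{BP10} and the normalization property of $\rpbf$ from Proposition~\ref{prop:3.7}.

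Since both ingredients are in hand, the proof of the theorem itself is a one-line composition of biconditionals through the common term $P=_\rpbf Q$, and I would present it as such rather than re-proving anything.
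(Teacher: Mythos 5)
Your proposal is correct and matches the paper exactly: the paper's proof of Theorem~\ref{thm:3.17} is precisely the one-line combination of Theorem~\ref{thm:3.11} and Proposition~\ref{prop:3.16} through the common term $P=_\rpbf Q$. Your additional commentary on where the real work lies (Lemma~\ref{la:3.14} and the injectivity of $se$ on basic forms) accurately reflects the structure of the preceding development.
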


\begin{proof}
Combine Theorem~\ref{thm:3.11} and Proposition~\ref{prop:3.16}. 
\end{proof}

\section{Evaluation trees for contractive valuation congruence}
\label{sec:cr}
In~\cite{BP10} we introduced $\CPcon$, 
\emph{contractive} \CP, as the extension of \CP\ with the following two axiom schemes,
where $a$ ranges over $A$: 
\begin{align*}
\label{CPcr1}
\tag{CPcr1}
(x\lef a\rig y)\lef a\rig z&=x\lef a\rig z,\\
\label{CPcr2}
\tag{CPcr2}
x\lef a\rig (y\lef a\rig z)&=x\lef a\rig z.
\end{align*}
These schemes prescribe contraction for each atom $a$ for
respectively the \emph{true}-case and the 
\emph{false}-case, and are each others dual.
It easily follows that the axiom schemes~\eqref{CPrp1}
and~\eqref{CPrp2} are derivable from $\CPcon$,
so $\CPcon$ is also an axiomatic extension
of $\CPrp$. 

Again, we define a proper subset of basic forms with the property that each
propositional statement can be proved equal to such a basic form.

\begin{definition}
\label{def:4.1}
\textbf{Cr-basic forms} are inductively defined:
\begin{itemize}\setlength\itemsep{-.1em}
\item
 \tr\ and \fa\ are cr-basic forms, and
\item 
 $P_1\lef a\rig P_2$ is a cr-basic form if $P_1$
 and $P_2$ are cr-basic forms, and if $P_i$ is not equal to \tr\ or \fa,
 the central condition in $P_i$ is different from $a$.
\end{itemize}
\end{definition}

It will turn out useful to define a function that transforms conditional statements
into cr-basic forms,
and that is comparable to the function $\baf$ (see Definition~\ref{def:bf}).

\begin{definition}
\label{def:4.2}
The \textbf{cr-basic form function}
$\crbf:\PS\to \PS$
is defined by
\begin{align*}
\crbf(P)&=\con(\baf(P)).
\end{align*}
The auxiliary function $\con:\BF\to\BF$ is defined as follows:
\begin{align*}
\con(\tr)&=\tr\\
\con(\fa)&=\fa,\\
\con(P\lef a\rig Q)&=\con(\ci_a(P))\lef a\rig\con(\cj_a(Q)).
\end{align*}
For $a\in A$, the auxiliary functions
$\ci_a: \BF\to\BF$ and $\cj_a: \BF\to\BF$ 
are defined by
\begin{align*}
\ci_a(\tr)&=\tr,\\
\ci_a(\fa)&=\fa,\\
\ci_a(P\lef b\rig Q)&=\begin{cases}
\ci_a(P)&\text{if } b=a,\\
P\lef b\rig Q&\text{otherwise},
\end{cases}
\end{align*}
and
\begin{align*}
\cj_a(\tr)&=\tr,\\
\cj_a(\fa)&=\fa,\\
\cj_a(P\lef b\rig Q)&=\begin{cases}
\cj_a(Q)&\text{if } b=a,\\
P\lef b\rig Q&\text{otherwise}.
\end{cases}
\end{align*}
\end{definition}

Thus, $\crbf$ maps a conditional statement $P$ to $\baf(P)$ and then
transforms $\baf(P)$ according to the auxiliary functions $\con$, $\ci_a$, and $\cj_a$.

\begin{lemma}
\label{la:4.3}
For all $a\in A$ and $P\in\BF$, $d(P)\geq d(\ci_a(P))$ and $d(P)\geq d(\cj_a(P))$.
\end{lemma}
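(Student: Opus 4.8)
The plan is to prove the inequalities $d(P)\geq d(\ci_a(P))$ and $d(P)\geq d(\cj_a(P))$ by structural induction on $P$, exactly mirroring the proof of Lemma~\ref{la:3.4} for the repetition-proof case. First I would fix an arbitrary $a\in A$ and proceed by induction on the structure of $P\in\BF$. The base cases $P\in\{\tr,\fa\}$ are immediate, since $\ci_a(\tr)=\tr$, $\ci_a(\fa)=\fa$ (and likewise for $\cj_a$), so depth is preserved exactly.

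For the inductive case $P=Q\lef b\rig R$, I would distinguish $b=a$ and $b\ne a$. When $b\ne a$, both auxiliary functions act as the identity ($\ci_a(P)=\cj_a(P)=P$), so the inequalities hold trivially with equality. The interesting case is $b=a$, where the contraction functions differ from their repetition-proof analogues: here $\ci_a(Q\lef a\rig R)=\ci_a(Q)$, which simply discards the right subtree $R$ and recurses into the left subtree. I would then compute
\begin{align*}
d(Q\lef a\rig R)
&=1+\max\{d(Q),d(R)\}\\
&\geq 1+d(Q)\\
&> d(Q)\\
&\geq d(\ci_a(Q))
\end{align*}
by the induction hypothesis, and since $\ci_a(Q\lef a\rig R)=\ci_a(Q)$ this yields $d(Q\lef a\rig R)\geq d(\ci_a(Q\lef a\rig R))$. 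The dual inequality for $\cj_a$, where $\cj_a(Q\lef a\rig R)=\cj_a(R)$, follows symmetrically using $d(Q\lef a\rig R)\geq 1+d(R)\geq d(R)\geq d(\cj_a(R))$.

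There is no real obstacle here: the argument is routine and in fact slightly cleaner than Lemma~\ref{la:3.4}, because in the contractive setting the $b=a$ case strictly decreases depth (we throw away a subtree rather than duplicating one), so the inequality is even easier to maintain. The only point to note is that, unlike in the repetition-proof functions $f_a,g_a$ where the $b=a$ clause produces $f_a(P)\lef a\rig f_a(P)$ and keeps the depth-$1$ wrapper, here the contraction genuinely removes a layer, so the bound $d(P)\geq d(\ci_a(P))$ holds with strict inequality in the recursive case. This lemma serves, as its repetition-proof counterpart did, to justify a subsequent induction on depth when proving that $\con$ always yields a cr-basic form.
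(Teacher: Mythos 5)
Your proof is correct and follows essentially the same route as the paper: structural induction on $P$ with the case split $b=a$ versus $b\ne a$, using $d(Q\lef a\rig R)\geq 1+d(Q)$ and the induction hypothesis in the former case and the identity behaviour of $\ci_a,\cj_a$ in the latter. Your side remark that the bound is strict when $b=a$ is also implicit in the paper's computation, which ends with $d(Q\lef a\rig R)\geq 1+d(\ci_a(Q\lef a\rig R))$.
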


\begin{proof}
Fix some $a\in A$.
We  prove these inequalities by structural induction on $P$.  
The base cases $P\in\{\tr,\fa\}$
are trivial. For the inductive case $P= Q\lef b\rig R$ 
we have to distinguish the cases $b=a$ and $b\ne a$. 
If $b=a$, then 
\begin{align*}
d(Q\lef a\rig R)
&=1+\max\{d(Q),d(R)\}\\
&\geq 1+ d(Q)\\
&\geq 1+ d(\ci_a(Q))
&&\text{by IH}\\
&=1+d(\ci_a(Q\lef a\rig R)),
\end{align*}
and $d(Q\lef a\rig R)\geq d(\cj_a(Q\lef a\rig R))$ follows in a similar way.

If $b\ne a$, then
$\ci_a(P)=\cj_a(P)=P$, and hence $d(P)\geq d(\ci_a(P))$
and $d(P)\geq d(\cj_a(P))$.
\end{proof}

\begin{lemma}
\label{la:4.4}
For all $P\in\PS$, $\crbf(P)$ is a cr-basic form.
\end{lemma}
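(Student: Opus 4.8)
The plan is to follow exactly the template of the proof of Lemma~\ref{la:3.5}, of which this statement is the contractive analogue, but with a simpler central-condition check. The key is to first isolate the behaviour of $\con$ on basic forms and only afterwards lift it to arbitrary conditional statements. So I would first establish the auxiliary result: for all $P\in\BF$, $\con(P)$ is a cr-basic form. As in the rp-case, I would prove this by induction on the depth $d(P)$ rather than on the structure of $P$, because the auxiliary functions $\ci_a$ and $\cj_a$ may replace a subterm by one of its parts (so they do not decrease structural size in an obvious way), whereas Lemma~\ref{la:4.3} guarantees that they do not increase the depth. The base case $d(P)=0$, i.e.\ $P\in\{\tr,\fa\}$, is immediate since $\con(P)=P$.

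For the inductive case $d(P)=n+1$ we have $P=Q\lef a\rig R$ and, by Definition~\ref{def:4.2}, $\con(Q\lef a\rig R)=\con(\ci_a(Q))\lef a\rig\con(\cj_a(R))$. By Lemma~\ref{la:4.3}, $\ci_a(Q)$ and $\cj_a(R)$ are basic forms of depth at most $n$, so the induction hypothesis yields that $\con(\ci_a(Q))$ and $\con(\cj_a(R))$ are cr-basic forms. It then remains to verify the extra condition in Definition~\ref{def:4.1}: whenever $\con(\ci_a(Q))$ is not $\tr$ or $\fa$, its central condition must differ from $a$, and likewise for $\con(\cj_a(R))$. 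This is precisely where the cr-case is easier than the rp-case. The decisive observation is that, directly from the defining clauses of $\ci_a$, for every $P'\in\BF$ the term $\ci_a(P')$ is either $\tr$, $\fa$, or of the form $Q_1\lef b\rig Q_2$ with $b\ne a$ (the clause for $b=a$ recurses into the true-branch until a leaf or a non-$a$ condition is exposed); the symmetric statement holds for $\cj_a$. Since $\con$ leaves the top-level central condition unchanged, $\con(\ci_a(Q))$ and $\con(\cj_a(R))$ inherit a central condition different from $a$ whenever they are not leaves, which is exactly the required property. This completes the auxiliary result.

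Given the auxiliary result, the lemma itself follows by a routine structural induction on $P$, just as in Lemma~\ref{la:3.5}. The base cases $P\in\{\tr,\fa,a\}$ are trivial, and for the inductive case I would write $\crbf(P\lef Q\rig R)=\con(\baf(P\lef Q\rig R))=\con(S)$, where $S=\baf(P\lef Q\rig R)$ is a basic form by Lemma~\ref{la:bf}; the auxiliary result then gives that $\con(S)$ is a cr-basic form.

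The only genuinely nontrivial step is the central-condition verification in the inductive case of the auxiliary result, and I expect it to be the main (though mild) obstacle. Unlike the rp-case, there is no need to track the additional ``equal outer arguments'' invariant that complicated the proof of Lemma~\ref{la:3.5}; here one single fact — that $\ci_a$ and $\cj_a$ expose a top-level condition distinct from $a$ and that $\con$ preserves the top-level condition — discharges the requirement. I would state that fact explicitly (by a one-line structural induction on the argument of $\ci_a$, respectively $\cj_a$) and then invoke it, keeping the rest of the argument parallel to, but strictly simpler than, the proof of Lemma~\ref{la:3.5}.
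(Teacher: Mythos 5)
Your proposal is correct and follows essentially the same route as the paper's own proof: the same auxiliary result that $\con(P)$ is a cr-basic form for all basic forms $P$, proved by induction on $d(P)$ using Lemma~\ref{la:4.3}, followed by the same lifting via $\baf$ and Lemma~\ref{la:bf}. Your explicit justification that $\ci_a$ and $\cj_a$ expose a top-level condition distinct from $a$ and that $\con$ preserves the top-level condition only spells out what the paper compresses into ``by definition of the auxiliary functions''.
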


\begin{proof}
We first prove an auxiliary result:
\begin{equation}
\label{aux2}
\text{For all $P\in\BF$, $\con(P)$ is a cr-basic form.}
\end{equation}
This follows by induction on the depth $d(P)$ of $P$. 
If $d(P)=0$, then $P\in\{\tr,\fa\}$, and hence $\con(P)=P$ is a cr-basic form.
For the inductive case $d(P)=n+1$ it must be the case that $P= Q\lef a\rig R$.
We find
\begin{align*}
\con(Q\lef a\rig R)
&=\con(\ci_a(Q))\lef a\rig \con(\cj_a(R)),
\end{align*}
which is a cr-basic form because
\begin{itemize}\setlength\itemsep{-.1em}
\item[$-$] 
 by Lemma~\ref{la:4.3}, $\ci_a(Q)$ and $\cj_a(R)$ are basic forms with depth smaller 
than or equal to $n$,
so by the induction hypothesis, $\con(\ci_a(Q))$ and $\con(\cj_a(R))$ are cr-basic forms,
\item[$-$] 
by definition of the auxiliary functions $\ci_a$ and $\cj_a$, the central condition of
$\ci_a(Q)$ and $\cj_a(R)$ is not equal to $a$, hence 
$\con(\ci_a(Q))\lef a\rig \con(\cj_a(R))$ is a cr-basic form.
\end{itemize}
This completes the proof of~\eqref{aux2}. 

\smallskip

The lemma's statement now follows by structural induction: 
the base cases (comprising a single atom $a$) are again trivial, 
and for the inductive case, 
\[\crbf(P\lef Q\rig R)=\con(\baf(P\lef Q\rig R))= \con(S)\]
for some basic form $S$
by Lemma~\ref{la:2.13}, and by~\eqref{aux2}, $\con(S)$
is a cr-basic form.
\end{proof}

The following, somewhat technical lemma is used in Proposition~\ref{prop:4.6} and 
Lemma~\ref{la:4.7}. 

\begin{lemma}
\label{la:4.5}
If $Q\lef a\rig R$ is a cr-basic form, then $Q=\con(Q)=\con(\ci_a(Q))$ and
$R=\con(R)=\con(\cj_a(R))$.
\end{lemma}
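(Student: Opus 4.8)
The plan is to reduce the statement to a single subsidiary fact, namely that $\con$ acts as the identity on every cr-basic form, and then to read off all four equalities. The crucial observation that makes the cr-case clean is that the auxiliary functions $\ci_a$ and $\cj_a$ of Definition~\ref{def:4.2} collapse to the identity on cr-basic forms with the appropriate central condition: if $P$ is a cr-basic form that is either \tr, \fa, or has central condition different from $a$, then $\ci_a(P)=P$, and symmetrically $\cj_a(P)=P$. This is immediate from the defining clauses, since the base cases fix \tr\ and \fa, and for $P=P_1\lef b\rig P_2$ with $b\ne a$ the ``otherwise'' branch returns $P$ unchanged.

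First I would prove, by structural induction on $P$, that $\con(P)=P$ for every cr-basic form $P$. The base cases $P\in\{\tr,\fa\}$ are immediate. For the inductive case $P=P_1\lef a\rig P_2$, Definition~\ref{def:4.1} guarantees that $P_1$ and $P_2$ are themselves cr-basic forms whose central conditions (when present) differ from $a$. Hence the observation above yields $\ci_a(P_1)=P_1$ and $\cj_a(P_2)=P_2$, so unfolding $\con$ gives $\con(P)=\con(P_1)\lef a\rig\con(P_2)$, and the induction hypothesis applied to $P_1$ and $P_2$ closes the case.

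The lemma then follows directly. Assuming $Q\lef a\rig R$ is a cr-basic form, both $Q$ and $R$ are cr-basic forms, so the induction just established gives $Q=\con(Q)$ and $R=\con(R)$. Moreover, since $Q$ and $R$ are \tr, \fa, or carry a central condition different from $a$, the collapsing observation yields $\ci_a(Q)=Q$ and $\cj_a(R)=R$, and therefore $\con(\ci_a(Q))=\con(Q)=Q$ and $\con(\cj_a(R))=\con(R)=R$. This establishes all four equalities at once.

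I expect no serious obstacle here: the cr-case is strictly simpler than the repetition-proof analogue in Lemma~\ref{la:3.6}, because the more permissive rp-basic forms there (which admit subterms of the shape $Q_1\lef a\rig Q_1$) forced an auxiliary reconciliation of $f_a$ and $g_a$, whereas the stricter cr-basic forms make $\ci_a$ and $\cj_a$ outright identities on the relevant arguments. The only point requiring care is to invoke the cr-basic-form constraint of Definition~\ref{def:4.1} at exactly the step where $\ci_a$ and $\cj_a$ are unfolded, so that they simplify to the identity.
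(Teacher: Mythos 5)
Your proof is correct and follows essentially the same route as the paper's: both hinge on the observation that $\ci_a$ and $\cj_a$ act as the identity on \tr, \fa, and on cr-basic forms whose central condition differs from $a$, combined with a structural induction showing that $\con$ fixes cr-basic forms. The only difference is organizational: you isolate $\con(P)=P$ for cr-basic $P$ as a standalone induction and read all four equalities off from it, whereas the paper runs a simultaneous induction on $Q$ and $R$ directly on the lemma's statement and only later extracts $\con(P)=P$ in Proposition~\ref{prop:4.6}.
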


\begin{proof}
By simultaneous induction
on the structure of $Q$ and $R$. 
The base case, thus $Q,R\in\{\tr,\fa\}$,
is again trivial.
If $Q= Q_1\lef b\rig Q_2$ and $R= R_1\lef c\rig R_2$, then $b\ne a\ne c$ and
thus $\ci_a(Q)=Q$ and $\cj_a(R)=R$.
Moreover, $Q_1$ and $Q_2$ have no central condition $b$, hence $\ci_b(Q_1)=Q_1$ 
and $\cj_b(Q_2)=Q_2$, and thus
\begin{align*}
\con(Q)&=\con(\ci_b(Q_1))\lef b\rig \con(\cj_b(Q_2))\\
&=\con(Q_1)\lef b\rig \con(Q_2)\\
&=Q_1\lef b\rig Q_2.
&&\text{by IH}
\end{align*} 
The equalities for $R$ follow in a similar way.

\smallskip

If $Q= Q_1\lef b\rig Q_1$ and $R\in\{\tr,\fa\}$,
the lemma's equalities follow in a similar way, and this is also the
case if $Q\in\{\tr,\fa\}$ and $R= Q_1\lef b\rig Q_1$.
\end{proof}

With Lemma~\ref{la:4.5} we can easily prove the following result.

\begin{proposition}[{\normalfont $\crbf$} is a normalization function]
\label{prop:4.6}
For each $P\in\PS$, $\crbf(P)$ is a cr-basic form, and
for each cr-basic form $P$, $\crbf(P)= P$.
\end{proposition}

\begin{proof}
The first statement is Lemma~\ref{la:4.4}. For the second statement, it suffices by
Lemma~\ref{la:2.14} to prove that
$\con(P)=P$. We prove this by case distinction on $P$.
The cases $P\in\{\tr,\fa\}$ follow immediately, and otherwise $P= Q\lef a\rig R$, 
and thus  $\con(P)=\con(\ci_a(Q))\lef a\rig\con(\cj_a(R))$. By Lemma~\ref{la:4.5}, 
$\con(\ci_a(Q))=Q$ and $\con(\cj_a(R))=R$, hence $\con(P)=P$.
\end{proof}

\begin{lemma}
\label{la:4.7}
For all $P\in\BF$, $\CPcr\vdash P=\con(P)$.
\end{lemma}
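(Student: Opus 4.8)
The plan is to follow the template of Lemma~\ref{la:3.8}, but with the contraction schemes~\eqref{CPcr1} and~\eqref{CPcr2} in place of the repetition-proof schemes. Since contraction is cleaner than the repetition-proof identities, I would not perform a four-way case split on the shapes of the two outer arguments; instead I would isolate one small auxiliary claim that captures exactly what the two contraction axioms buy us, which also lets the argument sidestep the more technical Lemma~\ref{la:4.5}.

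First I would establish the auxiliary claim: for all $a\in A$ and all $P,Q\in\BF$,
\[\CPcr\vdash P\lef a\rig Q=\ci_a(P)\lef a\rig Q\quad\text{and}\quad\CPcr\vdash P\lef a\rig Q=P\lef a\rig\cj_a(Q).\]
Each equality is proved by structural induction on the argument being transformed (on $P$ for the first, on $Q$ for the second), keeping $a$ and the other argument fixed. For the first equality the only nontrivial case is $P=R_1\lef a\rig R_2$, where $\ci_a(P)=\ci_a(R_1)$: here~\eqref{CPcr1} gives $\CPcr\vdash(R_1\lef a\rig R_2)\lef a\rig Q=R_1\lef a\rig Q$, and the induction hypothesis applied to the proper subterm $R_1$ rewrites $R_1$ into $\ci_a(R_1)$. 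The second equality is entirely symmetric, using~\eqref{CPcr2}. Chaining the two (apply the first to $P$, then the second to $\ci_a(P)$) yields $\CPcr\vdash P\lef a\rig Q=\ci_a(P)\lef a\rig\cj_a(Q)$.

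I would then prove the lemma itself by induction on the depth $d(P)$ rather than on the structure of $P$. The reason, and the one point that needs care, is that $\con$ does not recurse into subterms of $P$ but into $\ci_a(P_1)$ and $\cj_a(P_2)$, which are in general not subterms; Lemma~\ref{la:4.3} supplies the depth bounds $d(\ci_a(P_1)),d(\cj_a(P_2))\le d(P)-1$, so the induction hypothesis does apply to them. For $P=P_1\lef a\rig P_2$ the definition gives $\con(P)=\con(\ci_a(P_1))\lef a\rig\con(\cj_a(P_2))$, and the induction hypothesis on $\ci_a(P_1)$ and $\cj_a(P_2)$ yields $\CPcr\vdash\ci_a(P_1)\lef a\rig\cj_a(P_2)=\con(P)$; the chained auxiliary claim (instantiated at $P_1,P_2$) yields $\CPcr\vdash P=\ci_a(P_1)\lef a\rig\cj_a(P_2)$, and transitivity closes the case. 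The base cases $P\in\{\tr,\fa\}$ are immediate since $\con$ fixes the constants. The main obstacle is thus not any hard calculation but the bookkeeping of setting up the induction on depth so that the hypothesis reaches the contracted arguments; everything else is routine equational reasoning analogous to Lemma~\ref{la:3.8}.
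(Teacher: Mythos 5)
Your proof is correct, but it takes a genuinely different route from the paper's. The paper proves Lemma~\ref{la:4.7} by structural induction on $P$: for $P=P_1\lef a\rig P_2$ the induction hypothesis gives $\CPcr\vdash P_i=\con(P_i)$, and the remaining gap between $\con(P_1)\lef a\rig\con(P_2)$ and $\con(P)=\con(\ci_a(P_1))\lef a\rig\con(\cj_a(P_2))$ is closed by the identity~\eqref{eq:cr}, justified via Lemma~\ref{la:4.5}. You instead induct on $d(P)$, apply the induction hypothesis to $\ci_a(P_1)$ and $\cj_a(P_2)$ themselves (which Lemma~\ref{la:4.3} licenses), and concentrate all of the equational content in the auxiliary claim $\CPcr\vdash P\lef a\rig Q=\ci_a(P)\lef a\rig\cj_a(Q)$, proved by iterating \eqref{CPcr1} and \eqref{CPcr2}. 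Your version buys two things. First, it is independent of Lemma~\ref{la:4.5}. Second, it is arguably more watertight: read literally as a syntactic identity about arbitrary basic forms, the claim $\con(\ci_a(P_1))=\con(P_1)$ underlying~\eqref{eq:cr} fails (take $P_1=\tr\lef a\rig\fa$, so that $\con(\ci_a(P_1))=\tr$ while $\con(P_1)=\tr\lef a\rig\fa$); Lemma~\ref{la:4.5} yields it only when $P_1\lef a\rig P_2$ is already a cr-basic form, so the paper's middle step really has to be read as a \emph{derivable} equality obtained from \eqref{CPcr1}, \eqref{CPcr2} and the shape of the cr-basic forms $\con(P_i)$ --- which is exactly what your auxiliary claim makes explicit. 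The price is the switch from structural induction to induction on depth, which you correctly identify as the one point needing care and correctly discharge with Lemma~\ref{la:4.3}.
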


\begin{proof}
We first prove two auxiliary results:
\begin{align}
\label{eq:C1}
\text{For all $a\in A$ and $P,Q\in\BF$,~}
&\CPcr\vdash P\lef a\rig Q=P\lef a\rig \cj_a(Q),\\
\label{eq:C2}
&\CPcr\vdash P\lef a\rig Q=\ci_a(P)\lef a\rig Q.
\end{align}
Fix some $a\in A$.
We prove~\eqref{eq:C1} by structural induction on $Q$. 
The base cases $Q\in\{\tr,\fa\}$ are trivial.
For the inductive case $Q= Q_1\lef b\rig Q_2$ we have to distinguish
the cases $b=a$ and $b\ne a$. If $b=a$, then
$\cj_a(Q)=\cj_a(Q_2)$ and
\begin{align*}
\CPcr\vdash
P\lef a\rig(Q_1\lef a\rig Q_2)
&=P\lef a\rig Q_2
&&\text{by \eqref{CPcr2}}\\
&=P\lef a\rig \cj_a(Q_2)
&&\text{by IH}\\
&=P\lef a\rig \cj_a(Q).
\end{align*}
If $b\ne a$ then 
$\cj_a(Q)=Q$, hence
$\CPcr\vdash
P\lef a\rig Q=P\lef a\rig \cj_a(Q)$.

\smallskip

Auxiliary result~\eqref{eq:C2} follows in a similar way by structural
induction on $P$ and with help of axiom scheme \eqref{CPcr1}.

\medskip

The lemma's statement follows by induction on $d(P)$.
The base cases $P\in\{\tr,\fa\}$ are trivial.
For the inductive case, assume $P= Q\lef a\rig R$. We derive
\begin{align*}
\CPcr\vdash
Q\lef a\rig R
&=\ci_a(Q)\lef a\rig \cj_a(R)
&&\text{by \eqref{eq:C1}, \eqref{eq:C2}}\\
&=\con(\ci_a(Q))\lef a\rig \con(\cj_a(R))
&&\text{by~IH (and Lemma~\ref{la:4.3})}\\
&=\con(Q\lef a\rig R).
\end{align*}
\end{proof}

\begin{theorem}
\label{thm:4.8}
For all $P\in\PS$, $\CPcr\vdash P=\crbf(P)$.
\end{theorem}

\begin{proof}
By Theorem~\ref{thm:1a} and Corollary~\ref{cor:1}, $\CPcr\vdash P=\baf(P)$, 
and by Lemma~\ref{la:4.7},
\\
$\CPcr\vdash \baf(P)=\con(\baf(P))$, and $\con(\baf(P))=\crbf(P)$.
\end{proof}

\begin{definition}
\label{def:4.9}
The binary relation $=_\crbf$ on \PS\
is defined as follows: 
\[P=_\crbf Q~\iff~\crbf(P)=\crbf(Q).\]
\end{definition}

\newpage

\begin{theorem}
\label{thm:4.10}
For all $P,Q\in\PS$, $\CPcr\vdash P=Q~\iff P=_\crbf Q$.
\end{theorem}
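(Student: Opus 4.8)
The plan is to mirror the proof of Theorem~\ref{thm:3.11} step for step, replacing each repetition-proof ingredient by its contractive counterpart. The two implications are handled separately, and both rest on the normalization result Theorem~\ref{thm:4.8} ($\CPcr\vdash P=\crbf(P)$) together with Lemma~\ref{la:4.4}, which guarantees that $\crbf(P)$ is always a cr-basic form.

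For the implication $\CPcr\vdash P=Q\Rightarrow P=_\crbf Q$, I would first apply Theorem~\ref{thm:4.8} to both sides, obtaining $\CPcr\vdash\crbf(P)=\crbf(Q)$. The remaining task is to upgrade this \emph{provable} equality between two cr-basic forms to literal syntactic equality. For this I would import the corresponding results from~\cite{BP10}: that $\CPcr$ axiomatizes the contractive valuation congruence $=_\crt$, i.e.\ $\CPcr\vdash P=Q\iff P=_\crt Q$, and the separation property that on cr-basic forms $=_\crt$ coincides with identity, i.e.\ $P=_\crt Q\Rightarrow P=Q$ whenever $P,Q$ are cr-basic forms. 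Since $\crbf(P)$ and $\crbf(Q)$ are cr-basic forms by Lemma~\ref{la:4.4}, these two facts together force $\crbf(P)=\crbf(Q)$, which is precisely $P=_\crbf Q$.

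For the converse, assume $P=_\crbf Q$, that is, $\crbf(P)=\crbf(Q)$. As both sides are basic forms, Lemma~\ref{la:2.12} yields $\baf(\crbf(P))=\baf(\crbf(Q))$, hence $\crbf(P)=_\baf\crbf(Q)$; by Theorem~\ref{thm:1a} this gives $\CP\vdash\crbf(P)=\crbf(Q)$, and a fortiori $\CPcr\vdash\crbf(P)=\crbf(Q)$. Applying Theorem~\ref{thm:4.8} twice then closes the chain $\CPcr\vdash P=\crbf(P)=\crbf(Q)=Q$.

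The only genuine obstacle lies in the forward direction, and it is imported rather than reproved here: the separation property for cr-basic forms under $=_\crt$ is the delicate ingredient, exactly as in the repetition-proof case. Everything else is routine rewriting via the already-established normalization facts, so I expect the write-up to be short, provided the two cited~\cite{BP10} statements are available in the stated form.
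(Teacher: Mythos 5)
Your proposal is correct and follows essentially the same route as the paper: the forward direction normalizes both sides via Theorem~\ref{thm:4.8} and then invokes the two cited facts from~\cite{BP10} (the axiomatization of contractive valuation congruence by $\CPcr$ and the separation property on cr-basic forms) together with Lemma~\ref{la:4.4}, while the converse uses Lemma~\ref{la:2.12}, Theorem~\ref{thm:1a}, and Theorem~\ref{thm:4.8} exactly as in the paper. The only cosmetic difference is your name $=_\crt$ for the imported relation, which the paper writes as $=_\con$.
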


\begin{proof}
($\Rightarrow$) Assume $\CPcr\vdash P=Q$. Then, by Theorem~\ref{thm:4.8},
$\CPcr\vdash \crbf(P)=\crbf(Q)$. 
In~\cite{BP10} the following two
statements are proved (Theorem 6.4 and an auxiliary result in its proof),
where $=_\textit{cr}$ is a binary relation on $\PS$:
\begin{enumerate}\setlength\itemsep{-.1em}
\item
For all $P,Q\in\PS$,\quad
$\CPcr\vdash P=Q ~\iff~ P=_\textit{cr} Q$.
\item
For all cr-basic forms $P$ and $Q,\quad P=_\textit{cr} Q ~\Rightarrow~ P= Q$.
\end{enumerate}
By Lemma~\ref{la:4.4},
these statements imply $\crbf(P)=\crbf(Q)$, that is, $P=_\crbf Q$.

\smallskip

($\Leftarrow$) Assume $P=_\crbf Q$. By Theorem~\ref{thm:4.8}, $\CPcr\vdash P=Q$.
\end{proof}

Hence, the relation $=_\crbf$ is axiomatized by $\CPcr$, and is thus a congruence.
We now define a transformation on evaluation trees that mimics the function
$\crbf$, and prove that equality of two such transformed trees characterizes 
the congruence that is axiomatized by $\CPcr$.

\begin{definition}
\label{def:4.11}
The unary 
\textbf{contractive evaluation function} 
\[\crse:\PS\to \T\]
yields \textbf{contractive evaluation trees} and is defined by
\begin{align*}
\crse(P)&=\crt(se(P)).
\end{align*}
The auxiliary function $\crt:\T\to\T$ is defined as follows ($a\in A$):
\begin{align*}
\crt(\tr)&=\tr,\\
\crt(\fa)&=\fa,\\
\crt(X\unlhd a\unrhd Y)&=\crt(\Fi_a(X))\unlhd a\unrhd\crt(\Fj_a(Y)).
\end{align*}
For $a\in A$, the auxiliary functions $\Fi_a: \T\to\T$ and $\Fj_a: \T\to\T$ 
are defined by
\begin{align*}
\Fi_a(\tr)&=\tr,\\
\Fi_a(\fa)&=\fa,\\
\Fi_a(X\unlhd b\unrhd Y)&=
\begin{cases}
\Fi_a(X)
&\text{if $b=a$},\\
X\unlhd b\unrhd Y
&\text{otherwise},
\end{cases}
\end{align*}
and
\begin{align*}
\Fj_a(\tr)&=\tr,\\
\Fj_a(\fa)&=\fa,\\
\Fj_a(X\unlhd b\unrhd Y)&=
\begin{cases}
\Fj_a(Y)
&\text{if }b= a,\\
X\unlhd b\unrhd Y
&\text{otherwise}.
\end{cases}
\end{align*}
\end{definition}

As a simple example we depict $se((a\lef a\rig\fa)\lef a\rig\fa)$
and the contractive evaluation tree $\crse((a\lef a\rig\fa)\lef a\rig\fa)$:
\[
\begin{array}[t]{ll}
\begin{array}[t]{l}
\\[-4mm]
\begin{tikzpicture}[%
level distance=7.5mm,
level 1/.style={sibling distance=30mm},
level 2/.style={sibling distance=15mm},
level 3/.style={sibling distance=7.5mm}
]
\node (a) {$a$}
  child {node (b1) {$a$}
    child {node (c1) {$a$}
      child {node (d1) {$\tr$}} 
      child {node (d2) {$\fa$}}
    }
    child {node (c2) {$\fa$}
    }
  }
  child {node (b2) {$\fa$} 
  };
\end{tikzpicture}
\end{array}
&\qquad
\begin{array}[t]{l}
\\[-4mm]
\qquad
\begin{tikzpicture}[%
level distance=7.5mm,
level 1/.style={sibling distance=30mm},
level 2/.style={sibling distance=15mm},
level 3/.style={sibling distance=7.5mm}
]
\node (a) {$a$}
    child {node (c1) {$\tr$}
  }
  child {node (b2) {$\fa$}
  };
\end{tikzpicture}
\\[8mm]
\end{array}\end{array}
\]

The similarities between the evaluation function
$\crse$ and the function $\crbf$ can be exploited,
and we use the following lemma in the proof of the next completeness result.

\begin{lemma}
\label{la:4.12}
For all $P\in\BF$, $\crt(se(P))=se(\con(P))$.
\end{lemma}

\begin{proof}
We first prove the following auxiliary result:
\begin{equation}
\label{auxC4}
\text{For all $a\in A$ and $P\in\BF$, $\Fi_a(se(P))=se(\ci_a(P))$ and 
$\Fj_a(se(P))=se(\cj_a(P))$.}
\end{equation}
Fix some $a\in A$.
We prove~\eqref{auxC4} by structural induction on $P$.
The base cases $P\in\{\tr,\fa\}$ are trivial. 
For the inductive case $P= Q\lef b\rig R$ we have to distinguish
the cases $b=a$ and $b\ne a$. If $b=a$, then 
\begin{align*}
\Fi_a(se(Q\lef a\rig R))
&=\Fi_a(se(Q)\unlhd a\unrhd se(R))\\
&=\Fi_a(se(Q))\\
&=se(\ci_a(Q))
&&\text{by IH}\\
&=se(\ci_a(Q\lef a\rig R)), 
\end{align*}
and if $b\ne a$, then
\begin{align*}
\Fi_a(se(Q\lef b\rig R))
&=\Fi_a(se(Q)\unlhd b\unrhd se(R))\\
&=se(Q)\unlhd b\unrhd se(R)\\
&=se(Q\lef b\rig R)\\
&=se(\ci_a(Q\lef b\rig R)).
\end{align*}
The second equality can be derived in a similar way, and
this finishes the proof of~\eqref{auxC4}.

\smallskip

We  prove the lemma's statement by induction on $d(P)$. 
The base cases $P\in\{\tr,\fa\}$
follow immediately. Assume $P= Q\lef a\rig R$, then
\begin{align*}
\crt(se(Q\lef a\rig R))
&=\crt(se(Q)\unlhd a\unrhd se(R))\\
&=\crt(\Fi_a(se(Q)))\unlhd a\unrhd \crt(\Fj_a(se(R)))\\
&=\crt(se(\ci_a(Q)))\unlhd a\unrhd \crt(se(\cj_a(R)))
&&\text{by \eqref{auxC4}}\\
&=se(\con(\ci_a(Q)))\unlhd a\unrhd se(\con(\cj_a(R)))
&&\text{by~IH (and Lemma~\ref{la:4.3})}\\
&=se(\con(\ci_a(Q))\lef a\rig \con(\cj_a(R)))\\
&=se(\con(Q\lef a\rig R)).
\end{align*}
\end{proof}

Finally, we relate conditional statements by means of their contractive evaluation trees.
\begin{definition}
\label{def:4.13}
\textbf{Contractive valuation congruence}, notation $=_\crse$, is defined on \PS\
as follows:
\[P=_\crse Q~\iff~\crse(P)=\crse(Q).\]
\end{definition}

The following characterization result immediately implies that $=_\crse$ is a congruence 
relation on $\PS$ (and hence justifies calling it a congruence).

\begin{proposition}
\label{prop:4.14}
For all $P,Q\in\PS$, $P=_\crse Q~\iff~ P=_\crbf Q$.
\end{proposition}

\begin{proof}
($\Rightarrow$) Assume $\crse(P)=\crse(Q)$, thus $\crt(se(P))=\crt(se(Q))$.
By Corollary~\ref{cor:1}, 
\\
$\crt(se(\baf(P)))=\crt(se(\baf(Q)))$, so by Lemma~\ref{la:4.12},
$se(\con(\baf(P)))=se(\con(\baf(Q)))$.
\\
By Lemma~\ref{la:2.8} and auxiliary result~\eqref{aux1} 
(see the proof of Lemma~\ref{la:3.5}), it follows that $\con(\baf(P))=\con(\baf(Q))$,
that is, $P=_\crbf Q$. 

\smallskip

($\Leftarrow$) Assume $P=_\crbf Q$, thus $\con(\baf(P))=\con(\baf(Q))$ and
$se(\con(\baf(P)))=se(\con(\baf(Q)))$. By Lemma~\ref{la:4.12}, 
$\crt(se(\baf(P)))=\crt(se(\baf(Q)))$.
By Corollary~\ref{cor:1},
$se(\baf(P))=se(P)$ and $se(\baf(Q))=se(Q)$,
so $\crt(se(P))=\crt(se(Q))$, that is, $P=_\crse Q$.
\end{proof}

Our final result in this section is a completeness result for 
contractive valuation congruence.

\begin{theorem}[Completeness of $\CPcr$]
\label{thm:4.15}
For all $P,Q\in\PS$, 
\[\CPcr\vdash P=Q
~\iff~ 
P=_\crse Q.\]
\end{theorem}

\begin{proof}
Combine Theorem~\ref{thm:4.10} and Proposition~\ref{prop:4.14}.
\end{proof}

\section{Evaluation trees for memorizing valuation congruence}
\label{sec:mem}
In~\cite{BP10} we introduced $\CPmem$, \emph{memorizing \CP}, as the 
extension of \CP\ with the following axiom:
\begin{align*}
\label{CPmem}
\tag{CPmem} 
\qquad
x\lef y\rig(z\lef u\rig(v\lef y\rig w))
&= x\lef y\rig(z\lef u\rig w).
\end{align*}
Axiom~\eqref{CPmem} 
expresses that the first evaluation value of $y$
is memorized. More precisely, a 
``memorizing evaluation" is one with the property that
upon the evaluation of a compound propositional statement,
the first evaluation value of each atom is memorized throughout the 
evaluation.
We write $\CPmem$
for the set $\CP\cup\{\eqref{CPmem}\}$ of axioms. 

Replacing the variable $y$ in axiom~\eqref{CPmem} by 
$\fa\lef y\rig\tr$ and/or the variable $u$ by $\fa\lef u\rig\tr$
yields all other memorizing patterns:
\begin{align}
\label{CPmem'}
\tag{CPm1}
\qquad
(z\lef u\rig(w\lef y\rig v))\lef y\rig x&=
(z\lef u\rig w)\lef y\rig x,\\
\label{CPmem''}
\tag{CPm2}
x\lef y\rig((v\lef y\rig w)\lef u\rig z)
&= x\lef y\rig(w\lef u\rig z),\\
\label{CPmem'''}
\tag{CPm3}
((w\lef y\rig v)\lef u\rig z)\lef y\rig x&=
(w\lef u\rig z)\lef y\rig x.
\end{align}
Hence, the duality priciple also holds in $\CPmem$.
Furthermore, if we replace in axiom~\eqref{CPmem} $u$ by $\fa$,
we find the \emph{contraction law}
\label{p:contr}
\begin{equation}
\label{eq:contr}
\qquad
x\lef y\rig(v\lef y\rig w)=x\lef y\rig w,
\end{equation} 
and replacing $y$ by $\fa\lef y\rig\tr$ 
then yields the dual contraction law 
\begin{equation}
\label{eq:contr2}
\qquad
(w\lef y\rig v)\lef y\rig x= w\lef y\rig x.
\end{equation}
Hence, $\CPmem$ is an axiomatic extension of $\CPcr$.

We define a proper subset of basic forms with the property that each
propositional statement can be proved equal to such a basic form. 

\begin{definition}
\label{def:5.1}
Let $A'$ be a subset of A. \textbf{Mem-basic
forms over $A'$} are inductively defined:
\begin{itemize}\setlength\itemsep{-.1em}
\item
 \tr\ and \fa\ are mem-basic forms over $A'$, and
\item
 $P\lef a\rig Q$ is a mem-basic form over $A'$ if $a\in A'$ and $P$
 and $Q$ are mem-basic forms over $A'\setminus \{a\}$.
\end{itemize}
$P$ is a \textbf{mem-basic form} if for some $A'\subset A$, $P$ is a mem-basic form
over $A'$.
\end{definition}

Note that if $A$ is finite, the number of mem-basic forms is also finite.
It will turn out useful to define a function that transforms conditional statements
into mem-basic forms.

\begin{definition}
\label{def:5.2}
The \textbf{mem-basic form function}
$\membf:\PS\to \PS$
is defined by
\begin{align*}
\membf(P)&=\mem(\baf(P)).
\end{align*}
The auxiliary function $\mem:\BF\to\BF$ is defined as follows:
\begin{align*}
\mem(\tr)&=\tr\\
\mem(\fa)&=\fa,\\
\mem(P\lef a\rig Q)&=\mem(\ell_a(P))\lef a\rig\mem(\ri_a(Q)).
\end{align*}
For $a\in A$, the auxiliary functions
$\ell_a: \BF\to\BF$ and $\ri_a: \BF\to\BF$ 
are defined by
\begin{align*}
\ell_a(\tr)&=\tr,\\
\ell_a(\fa)&=\fa,\\
\ell_a(P\lef b\rig Q)&=\begin{cases}
\ell_a(P)&\text{if } b=a,\\
\ell_a(P)\lef b\rig \ell_a(Q)&\text{otherwise},
\end{cases}
\end{align*}
and
\begin{align*}
\ri_a(\tr)&=\tr,\\
\ri_a(\fa)&=\fa,\\
\ri_a(P\lef b\rig Q)&=\begin{cases}
\ri_a(Q)&\text{if } b=a,\\
\ri_a(P)\lef b\rig \ri_a(Q)&\text{otherwise}.
\end{cases}
\end{align*}
\end{definition}

Thus, $\membf$ maps a conditional statement $P$ to $\baf(P)$ and then
transforms $\baf(P)$ according to the auxiliary functions $\mem$,
$\ell_a$, and $\ri_a$.
We will use the following inequalities.

\begin{lemma}
\label{la:5.3}
For all $a\in A$ and $P\in\BF$, $d(P)\geq d(\ell_a(P))$ and $d(P)\geq d(\ri_a(P))$.
\end{lemma}

\begin{proof}
Fix some $a\in A$.
We  prove these inequalities by structural induction on $P$.  
The base cases $P\in\{\tr,\fa\}$
are trivial. For the inductive case $P= Q\lef b\rig R$ 
we have to distinguish the cases $b=a$ and $b\ne a$. 
If $b=a$, then 
\begin{align*}
d(Q\lef a\rig R)
&=1+\max\{d(Q),d(R)\}\\
&\geq 1+ d(Q)\\
&\geq 1+ d(\ell_a(Q))
&&\text{by IH}\\
&=1+d(\ell_a(Q\lef a\rig R)),
\end{align*}
and $d(Q\lef a\rig R)\geq d(\ri_a(Q\lef a\rig R))$ follows in a similar way.

If $b\ne a$, then
\begin{align*}
d(Q\lef b\rig R)
&=1+\max\{d(Q),d(R)\}\\
&\geq 1+ \max\{d(\ell_a(Q)),d(\ell_a(R))\}
&&\text{by IH}\\
&=d(\ell_a(Q)\lef b\rig \ell_a(R))\\
&=d(\ell_a(Q\lef b\rig R)),
\end{align*}
and $d(Q\lef b\rig R)\geq d(\ri_a(Q\lef b\rig R))$ follows in a similar way.
\end{proof}

\begin{lemma}
\label{la:5.4}
For all $P\in\PS$, $\membf(P)$ is a mem-basic form.
\end{lemma}

\begin{proof}
We first prove an auxiliary result:
\begin{equation}
\label{aux3}
\text{For all $P\in\BF$, $\mem(P)$ is a mem-basic form.}
\end{equation}
This follows by induction on the depth $d(P)$ of $P$. 
If $d(P)=0$, then $P\in\{\tr,\fa\}$, and hence $\mem(P)=P$ is a mem-basic form.
For the inductive case $d(P)=n+1$ it must be the case that $P= Q\lef a\rig R$.
We find
\begin{align*}
\mem(Q\lef a\rig R)
&=\mem(\ell_a(Q))\lef a\rig \mem(\ri_a(R)),
\end{align*}
which is a mem-basic form because
by Lemma~\ref{la:5.3}, $\ell_a(Q)$ and $\ri_a(R)$ are basic forms with depth smaller 
than or equal to $n$,
so by the induction hypothesis, $\mem(\ell_a(Q))$ is a mem-basic form over $A_Q$ 
and $\mem(\ri_a(R))$ is a mem-basic form over $A_R$ for suitable subsets $A_Q$ and $A_R$ of $A$.
Notice that by definition of $\ell_a$ and $\ri_a$ we can assume that 
the atom $a$ does not occur in $A_Q\cup A_R$.
Hence, $\mem(\ell_a(Q))\lef a\rig \mem(\ri_a(R))$ is a mem-basic form over $A_Q\cup A_R\cup\{a\}$,
which completes the proof of~\eqref{aux3}. 

\smallskip

The lemma's statement now follows by structural induction: 
the base cases (comprising a single atom $a$) are again trivial, 
and for the inductive case, 
\[\membf(P\lef Q\rig R)=\mem(\baf(P\lef Q\rig R))= \mem(S)\]
for some basic form $S$
by Lemma~\ref{la:2.13}, and by~\eqref{aux3}, $\mem(S)$
is a mem-basic form.
\end{proof}

With Lemma~\ref{la:5.4} we can easily prove the following result.

\begin{proposition}[{\normalfont $\membf$} is a normalization function]
\label{prop:5.5}
For each $P\in\PS$, $\membf(P)$ is a mem-basic form, and
for each mem-basic form $P$, $\membf(P)= P$.
\end{proposition}

\begin{proof}
The first statement is Lemma~\ref{la:5.4}. 
For the second statement, it suffices by
Lemma~\ref{la:2.14} to prove that
$\mem(P)=P$. We prove this by induction on $d(P)$.
The base cases $P\in\{\tr,\fa\}$ are trivial, and 
for the inductive case, assume $P= Q\lef a\rig R$, 
thus  $\mem(P)=\mem(\ell_a(Q))\lef a\rig\mem(\ri_a(R))$. 
Because $P$ is a mem-basic form, $Q$
and $R$ are mem-basic forms in which $a$ does not occur, and thus
$\ell_a(Q)=Q$ and $\ri_a(R)=R$. By induction, $\mem(Q)=Q$ and
$\mem(R)=R$, and thus $\mem(P)=P$.
\end{proof}

\begin{lemma}
\label{la:5.6}
For all $P\in\BF$, $\CPmem\vdash P=\mem(P)$.
\end{lemma}

\begin{proof}
We first prove an auxiliary result:
\begin{align}
\label{eq:11}
\text{For all $a\in A$ and $P,Q\in\BF,~\CPmem\vdash~$}
&P\lef a\rig Q=P\lef a\rig \ri_a(Q),\\
\label{eq:12}
&P\lef a\rig Q=\ell_a(P)\lef a\rig Q.
\end{align}
Fix some $a\in A$.
We prove~\eqref{eq:11} by structural induction on $Q$. 
The base cases $Q\in\{\tr,\fa\}$ are trivial.
For the inductive case $Q= Q_1\lef b\rig Q_2$ we have to distinguish
the cases $b=a$ and $b\ne a$. If $b=a$, then
$\ri_a(Q)=\ri_a(Q_2)$ and
\begin{align*}
\CPmem\vdash
P\lef a\rig(Q_1\lef a\rig Q_2)
&=P\lef a\rig Q_2
&&\text{by \eqref{eq:contr}}\\
&=P\lef a\rig \ri_a(Q_2)
&&\text{by IH}\\
&=P\lef a\rig \ri_a(Q).
\end{align*}
If $b\ne a$, then 
$\ri_a(Q)=\ri_a(Q_1)\lef b\rig\ri_a(Q_2)$ and
\begin{align*}
\CPmem&\vdash
P\lef a\rig(Q_1\lef b\rig Q_2)\\
&=P\lef a\rig ((\tr\lef a\rig Q_1)\lef b\rig(\tr\lef a\rig Q_2))
&&\text{by \eqref{CPmem}, \eqref{CPmem''}}\\
&=P\lef a\rig ((\tr\lef a\rig \ri_a(Q_1))\lef b\rig(\tr\lef a\rig \ri_a(Q_2)))
&&\text{by IH (twice)}\\
&=P\lef a\rig (\ri_a(Q_1)\lef b\rig\ri_a(Q_2))
&&\text{by \eqref{CPmem}, \eqref{CPmem''}}\\
&=P\lef a\rig \ri_a(Q).
\end{align*}
Auxiliary result~\eqref{eq:12} follows in a similar way with help of
axioms~\eqref{CPmem'} and~\eqref{CPmem'''}.

\smallskip

The lemma's statement follows by induction on $d(P)$.
The base cases $P\in\{\tr,\fa\}$ are trivial.
For the inductive case, assume $P= Q\lef a\rig R$. We derive
\begin{align*}
\CPmem\vdash
Q\lef a\rig R
&=\ell_a(Q)\lef a\rig \ri_a(R)
&&\text{by \eqref{eq:11}, \eqref{eq:12}}\\
&=\mem(\ell_a(Q))\lef a\rig \mem(\ri_a(R))
&&\text{by~IH (and Lemma~\ref{la:5.3})}\\
&=\mem(Q\lef a\rig R).
\end{align*}
\end{proof}

\begin{theorem}
\label{thm:5.7}
For all $P\in\PS$, $\CPmem\vdash P=\membf(P)$.
\end{theorem}

\begin{proof}
By Theorem~\ref{thm:1a} and Corollary~\ref{cor:1}, $\CPmem\vdash P=\baf(P)$, 
and by Lemma~\ref{la:5.6},
$\CPmem\vdash \baf(P)=\mem(\baf(P))$, and $\mem(\baf(P))=\membf(P)$.
\end{proof}

\begin{definition}
\label{def:5.8}
The binary relation $=_\membf$ on \PS\
is defined as follows: 
\[P=_\membf Q~\iff~\membf(P)=\membf(Q).\]
\end{definition}

\begin{theorem}
\label{thm:5.9}
For all $P,Q\in\PS$, $\CPmem\vdash P=Q~\iff P=_\membf Q$.
\end{theorem}

\begin{proof}
($\Rightarrow$) Assume $\CPmem\vdash P=Q$. Then, by Theorem~\ref{thm:5.7},
$\CPmem\vdash \membf(P)=\membf(Q)$. 
In~\cite{BP10} the following two
statements are proved (Theorem~8.1 and Lemma~8.4),
where $=_\textit{mem}$ is a binary relation on $\PS$:
\begin{enumerate}\setlength\itemsep{-.1em}
\item
For all $P,Q\in\PS$,\quad
$\CPmem\vdash P=Q ~\iff~ P=_\textit{mem} Q$.
\item
For all mem-basic forms $P$ and $Q,\quad P=_\textit{mem} Q ~\Rightarrow~ P= Q$.
\end{enumerate}
By Lemma~\ref{la:5.4} 
these statements imply $\membf(P)=\membf(Q)$, that is, $P=_\membf Q$.

\smallskip

($\Leftarrow$) Assume $P=_\membf Q$. By Theorem~\ref{thm:5.7},
$\CPmem\vdash P=Q$.
\end{proof}

Hence, the relation $=_\membf$ is axiomatized by $\CPmem$ and is thus a congruence. 
We define a transformation on evaluation trees that mimics the function
$\membf$, and prove that equality of two such transformed trees characterizes 
the congruence that is axiomatized by $\CPmem$.

\begin{definition}
\label{def:5.10}
The unary 
\textbf{memorizing evaluation function} 
\[\memse:\PS\to \T\]
yields \textbf{memorizing evaluation trees} and is defined by
\begin{align*}
\memse(P)&=\memt(se(P)).
\end{align*}
The auxiliary function $\memt:\T\to\T$ is defined as follows ($a\in A$):
\begin{align*}
\memt(\tr)&=\tr,\\
\memt(\fa)&=\fa,\\
\memt(X\unlhd a\unrhd Y)&=\memt(\Le_a(X))\unlhd a\unrhd\memt(\Ri_a(Y)).
\end{align*}
For $a\in A$, the auxiliary functions $\Le_a: \T\to\T$ and $\Ri_a: \T\to\T$ 
are defined by
\begin{align*}
\Le_a(\tr)&=\tr,\\
\Le_a(\fa)&=\fa,\\
\Le_a(X\unlhd b\unrhd Y)&=
\begin{cases}
\Le_a(X)
&\text{if $b=a$},\\
\Le_a(X)\unlhd b\unrhd \Le_a(Y)
&\text{otherwise},
\end{cases}
\end{align*}
and
\begin{align*}
\Ri_a(\tr)&=\tr,\\
\Ri_a(\fa)&=\fa,\\
\Ri_a(X\unlhd b\unrhd Y)&=
\begin{cases}
\Ri_a(Y)
&\text{if }b= a,\\
\Ri_a(X)\unlhd b\unrhd \Ri_a(Y)
&\text{otherwise}.
\end{cases}
\end{align*}
\end{definition}

As a simple example we depict $se((a\lef b\rig\fa)\lef a\rig\fa)$
and the memorizing evaluation tree $\memse((a\lef b\rig\fa)\lef a\rig\fa)$:
\[
\begin{array}{ll}
\begin{array}{l}
\begin{tikzpicture}[%
level distance=7.5mm,
level 1/.style={sibling distance=30mm},
level 2/.style={sibling distance=15mm},
level 3/.style={sibling distance=7.5mm}
]
\node (a) {$a$}
  child {node (b1) {$b$}
    child {node (c1) {$a$}
      child {node (d1) {$\tr$}} 
      child {node (d2) {$\fa$}}
    }
    child {node (c2) {$\fa$}
    }
  }
  child {node (b2) {$\fa$}
  };
\end{tikzpicture}
\end{array}
&\qquad
\begin{array}{l}
\qquad
\begin{tikzpicture}[%
level distance=7.5mm,
level 1/.style={sibling distance=30mm},
level 2/.style={sibling distance=15mm},
level 3/.style={sibling distance=7.5mm}
]
\node (a) {$a$}
  child {node (b1) {$b$}
    child {node (c1) {$\tr$}
    }
    child {node (c2) {$\fa$}
    }
  }
  child {node (b2) {$\fa$}
  };
\end{tikzpicture}
\\[8mm]
\end{array}\end{array}
\]
The similarities between $\memse$ and the function $\membf$ will of course be exploited.

\begin{lemma}
\label{la:5.11}
For all $P\in\BF$, $\memt(se(P))=se(\mem(P))$.
\end{lemma}

\begin{proof}
We first prove an auxiliary result:
\begin{equation}
\label{aux4}
\text{For all $a\in A$ and $P\in\BF$, $\Le_a(se(P))=se(\ell_a(P))$ and 
$\Ri_a(se(P))=se(\ri_a(P))$.}
\end{equation}
Fix some $a\in A$.
We prove~\eqref{aux4} by structural induction on $P$.
The base cases $P\in\{\tr,\fa\}$ are trivial. 
For the inductive case $P= Q\lef b\rig R$ we have to distinguish
the cases $b=a$ and $b\ne a$. If $b=a$, then 
\begin{align*}
\Le_a(se(Q\lef a\rig R))
&=\Le_a(se(Q)\unlhd a\unrhd se(R))\\
&=\Le_a(se(Q))\\
&=se(\ell_a(Q))
&&\text{by IH}\\
&=se(\ell_a(Q\lef a\rig R)), 
\end{align*}
and if $b\ne a$, then
\begin{align*}
\Le_a(se(Q\lef b\rig R))
&=\Le_a(se(Q)\unlhd b\unrhd se(R))\\
&=\Le_a(se(Q))\unlhd b\unrhd \Le_a(se(R))\\
&=se(\ell_a(Q))\lef b\rig se(\ell_a(R))
&&\text{by IH}\\
&=se(\ell_a(Q\lef b\rig R)).
\end{align*}
The second equality can be derived in a similar way, and
this finishes the proof of~\eqref{aux4}.

\smallskip

We  prove the lemma's statement by induction on $d(P)$. 
The base cases $P\in\{\tr,\fa\}$
follow immediately. Assume $P= Q\lef a\rig R$, then
\begin{align*}
\memt(se(Q\lef a\rig R))
&=\memt(se(Q)\unlhd a\unrhd se(R))\\
&=\memt(\Le_a(se(Q)))\unlhd a\unrhd \memt(\Ri_a(se(R)))\\
&=\memt(se(\ell_a(Q)))\unlhd a\unrhd \memt(se(\ri_a(R)))
&&\text{by \eqref{aux4}}\\
&=se(\mem(\ell_a(Q)))\unlhd a\unrhd se(\mem(\ri_a(R)))
&&\text{by~IH (and Lemma~\ref{la:5.3})}\\
&=se(\mem(\ell_a(Q))\lef a\rig \mem(\ri_a(R)))\\
&=se(\mem(Q\lef a\rig R)).
\end{align*}
\end{proof}

\begin{definition}
\label{def:5.12}
\textbf{Memorizing valuation congruence}, notation $=_\memse$, is defined on \PS\
as follows:
\[P=_\memse Q~\iff~\memse(P)=\memse(Q).\]
\end{definition}

The following characterization result immediately implies that $=_\memse$ is 
indeed a congruence relation on $\PS$.

\begin{proposition}
\label{prop:5.13}
For all $P,Q\in\PS$, $P=_\memse Q~\iff~ P=_\membf Q$.
\end{proposition}

\begin{proof}
($\Rightarrow$) Assume $\memse(P)=\memse(Q)$, thus 
$\memt(se(P))=\memt(se(Q))$.
By Corollary~\ref{cor:1},  
\\
\(\memt(se(\baf(P)))=\memt(se(\baf(Q))),\)
so by Lemma~\ref{la:5.11},
\[se(\mem(\baf(P)))=se(\mem(\baf(Q))).\] 
By Lemma~\ref{la:2.8}, it follows that $\mem(\baf(P))=\mem(\baf(Q))$,
that is, $P=_\membf Q$. 

\smallskip

($\Leftarrow$) If $P=_\membf Q$, then 
$se(\mem(\baf(P)))=se(\mem(\baf(Q)))$,
and by Lemma~\ref{la:5.11}, 
\[\memt(se(\baf(P)))=\memt(se(\baf(Q))).\]
By Corollary~\ref{cor:1}, 
$\memt(se(P))=\memt(se(Q))$, that is, $P=_\memse Q$.
\end{proof}

We end this section with a completeness result for 
memorizing valuation congruence.

\begin{theorem}[Completeness of $\CPmem$]
\label{thm:5.14}
For all $P,Q\in\PS$, 
\[\CPmem\vdash P=Q
~\iff~ 
P=_\memse Q.\]
\end{theorem}

\begin{proof}
Combine Theorem~\ref{thm:5.9} and Proposition~\ref{prop:5.13}. 
\end{proof}

\section{Evaluation trees for static valuation congruence}
\label{sec:stat}
The most identifying axiomatic extension of \CP\ we consider 
can be defined by adding the following axiom to $\CPmem$:
\begin{equation}
\label{eq:Hoare}
\tag{CPs}
\fa\lef x\rig \fa=\fa.
\end{equation}
So, the evaluation value of each atom in a conditional statement is memorized, and
by axiom~\eqref{eq:Hoare}, no  atom $a$ can have a side effect because
$\tr\lef(\fa\lef a\rig\fa)\rig P=\tr\lef\fa\rig P=P$ for all $P\in\PS$. 
We write $\CPstat$ for the set of these axioms, thus
\[
\CPstat=\CPmem\cup\{\eqref{eq:Hoare}\}=\CP\cup\{\eqref{CPmem},\eqref{eq:Hoare}\}.
\]
Observe that 
\(\CPstat\vdash\tr=\tr\lef(\fa\lef x\rig\fa)\rig\tr=
(\tr\lef\fa\rig\tr)\lef x\rig(\tr\lef\fa\rig\tr)=\tr\lef x\rig\tr,
\) 
so the duality principle holds in $\CPstat$.
The following lemma is a direct consequence of axiom~\eqref{eq:Hoare}. 

\begin{lemma}
\label{la:6.1}
For all $P,Q\in\PS$, $\CPstat\vdash P=P\lef Q\rig P$.
\end{lemma}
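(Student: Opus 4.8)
The plan is to prove $\CPstat\vdash P=P\lef Q\rig P$ for all $P,Q\in\PS$ by showing that the central condition $Q$ is irrelevant when both branches agree, which should follow from the ``side-effect free'' character of $\CPstat$ captured by axiom~\eqref{eq:Hoare}. The cleanest route is to reduce $P\lef Q\rig P$ to a form where \eqref{eq:Hoare} can be applied directly. First I would rewrite $P\lef Q\rig P$ using \eqref{cp3} to expose $P$ as $\tr\lef P\rig\fa$ is not quite what we want; instead the key observation is that $P\lef Q\rig P$ can be massaged so that $Q$ sits in a position guarded on both sides by the \emph{same} value.

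Concretely, I would start from the contraction laws already derived in this section, namely \eqref{eq:contr} and \eqref{eq:contr2}, together with the commutation equation~\eqref{eq:comm}. The idea is that $P\lef Q\rig P$ evaluates $Q$ only to decide between two identical outcomes, so the evaluation of $Q$ should be absorbable. A natural first step is to write $P\lef Q\rig P = (P\lef Q\rig P)$ and use $\tr\lef Q\rig\tr = \tr$ (the dual of \eqref{eq:Hoare}, noted in the text) to build a ``trivial'' conditional on $Q$, then distribute via \eqref{cp4} so that the $Q$-test factors out against a constant. Alternatively, and perhaps more smoothly, I would observe that by~\eqref{cp3} we have $P = \tr\lef P\rig\fa$, and then compute $P\lef Q\rig P = (\tr\lef P\rig\fa)\lef Q\rig(\tr\lef P\rig\fa)$; applying \eqref{cp4} in reverse (i.e.\ pulling the common outer structure out) should collapse the $Q$-test, since the two branches of $Q$ are syntactically identical.

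The main obstacle I anticipate is that $P$ and $Q$ are arbitrary conditional statements, not atoms, so the axiom~\eqref{eq:Hoare} (which is stated for a general central condition $x$) must be combined with distributivity in just the right order to eliminate $Q$ without introducing new $Q$-dependencies. The cleanest argument likely proceeds by first reducing to the case where the central condition is a single atom or a constant, invoking Lemma~\ref{la:2.5} to replace $P$ and $Q$ by basic forms and then applying structural induction on the basic form of $Q$. In the base cases $Q\in\{\tr,\fa\}$ the claim is immediate from \eqref{cp1} and \eqref{cp2}, since $P\lef\tr\rig P = P$ and $P\lef\fa\rig P = P$. For the inductive step $Q = Q_1\lef a\rig Q_2$, I would use \eqref{cp4} to rewrite
\[
P\lef(Q_1\lef a\rig Q_2)\rig P
=(P\lef Q_1\rig P)\lef a\rig(P\lef Q_2\rig P),
\]
then apply the induction hypothesis to each side to obtain $P\lef a\rig P$, and finally reduce $P\lef a\rig P$ to $P$ using the contraction machinery of this section, namely Lemma~\ref{la:6.1}'s special case for atoms which itself follows from \eqref{eq:Hoare} via $P\lef a\rig P = \tr\lef(\fa\lef a\rig\fa)\rig P$-style rewriting or, more directly, from~\eqref{eq:contr} and~\eqref{eq:contr2}. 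The delicate point is closing this last reduction $P\lef a\rig P = P$ without circularity; I expect it to follow by writing $P\lef a\rig P = (\tr\lef P\rig\fa)\lef a\rig(\fa\lef P\rig\tr)$ is wrong, so instead one uses $P = \tr\lef P\rig\fa$ on both occurrences and then collapses the $a$-test against the constant $\fa$-guard provided by~\eqref{eq:Hoare}, which is exactly the ``no side effect'' computation $\tr\lef(\fa\lef a\rig\fa)\rig P = P$ exhibited in the text preceding the lemma.
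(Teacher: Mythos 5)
Your final concrete plan (replace $Q$ by a basic form via Lemma~\ref{la:2.5}, induct on its structure using \eqref{cp4}, and close the atomic case $P\lef a\rig P=P$ via the computation $\tr\lef(\fa\lef a\rig\fa)\rig P=P$) is correct, but it is a substantially longer route than the paper's, and the detour is unnecessary. The key point you miss is that axiom~\eqref{eq:Hoare} is $\fa\lef x\rig\fa=\fa$ with $x$ a \emph{variable}, so it instantiates to an arbitrary conditional statement $Q$, not just an atom. Your base-case argument is therefore already the whole proof: the paper simply derives
\[
P=\tr\lef\fa\rig P=\tr\lef(\fa\lef Q\rig\fa)\rig P=(\tr\lef\fa\rig P)\lef Q\rig(\tr\lef\fa\rig P)=P\lef Q\rig P
\]
using \eqref{cp2}, \eqref{eq:Hoare}, \eqref{cp4}, and \eqref{cp2} again --- three lines, no normalization, no induction, and no appeal to \eqref{eq:comm} or the contraction laws. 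What your approach buys is nothing extra here; what it costs is the machinery of Lemma~\ref{la:2.5} and the congruence bookkeeping needed to transfer the result from $Q'$ back to $Q$. One of your exploratory routes is also a genuine dead end worth flagging: rewriting $P\lef Q\rig P$ as $(\tr\lef P\rig\fa)\lef Q\rig(\tr\lef P\rig\fa)$ and ``applying \eqref{cp4} in reverse'' yields $\tr\lef(P\lef Q\rig P)\rig\fa$, which by \eqref{cp3} is just $P\lef Q\rig P$ again, so nothing collapses. The productive direction is the one where the \emph{outer} arguments of the $Q$-conditional are the constants $\fa$ (so that \eqref{eq:Hoare} applies), not the one where the branches happen to be syntactically equal.
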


\begin{proof}
\begin{align*}
\CPstat\vdash P&=\tr\lef(\fa\lef Q\rig\fa)\rig P
&&\text{by~\eqref{eq:Hoare}, \eqref{cp2}}\\
&=(\tr\lef\fa\rig P)\lef Q\rig(\tr\lef\fa\rig P)
&&\text{by~\eqref{cp4}}\\
&=P\lef Q\rig P.
&&\text{by~\eqref{cp2}}
\end{align*}
\end{proof}

Recall that the contraction laws~\eqref{eq:contr}, that is
$x\lef y\rig(v\lef y\rig w)=x\lef y\rig w$,
and~\eqref{eq:contr2}, that is
$(w\lef y\rig v)\lef y\rig x=w\lef y\rig x$, are derivable from $\CPmem$. 
A simple example on $\CPstat$ illustrates how the order of evaluation of 
$x$ and $y$ in $x\lef y\rig\fa$ can be swapped:
\begin{align}
\label{eq:comm}
x\lef y\rig\fa
&=
y\lef x\rig\fa.
\end{align}
Equation~\eqref{eq:comm} can be derived as follows: 
\begin{align*}
\CPstat\vdash x\lef y\rig\fa
&=((\tr\lef x\rig\fa)\lef y\rig\fa)\lef x\rig((\tr\lef x\rig\fa)\lef y\rig\fa)
&&\text{by \eqref{cp3}, Lemma~\ref{la:6.1}}\\
&=(\tr\lef y\rig\fa)\lef x\rig(\fa\lef y\rig\fa)
&&\text{by~\eqref{eq:contr} and~\eqref{eq:contr2}}\\
&=y\lef x\rig\fa.
&&\text{by \eqref{cp3}, \eqref{eq:Hoare}}
\end{align*}

\medskip

In~\cite{BP10} we defined $\CP_{st}$ as the extension of \CP\ with the 
following two axioms:
\begin{align*}
\label{CPstat}\tag{CPstat} 
(x\lef y\rig z)\lef u\rig v
&=(x\lef u\rig v)\lef y\rig (z\lef u\rig v),\qquad\\
\tag{the contraction law~\eqref{eq:contr2}}
(x\lef y\rig z)\lef y\rig u&=x\lef y\rig u.
\end{align*}
Axiom \eqref{CPstat} expresses how the order of evaluation of 
$u$ and $y$ can be swapped.
Because we will rely on results for $\CP_{st}$ recorded in~\cite{BP10},
we first prove the following result.

\begin{proposition}
\label{prop:6.2}
The axiom sets
$\CPst$ and $\CPstat$ are equally strong.
\end{proposition}

\begin{proof} We show that all axioms in the one set are 
derivable from the other set.
We first prove that
the axiom~\eqref{CPmem} is derivable
from $\CPst$:
\begin{align*}
\CPst&\vdash 
x\lef y\rig(z\lef u\rig(v\lef y\rig w))
\\
&=x\lef y\rig((v\lef y\rig w)\lef(\fa\lef u\rig\tr)\rig z)
&&\text{by \eqref{cp4}, \eqref{cp2}, \eqref{cp1}}\\
&=x\lef y\rig
((v\lef(\fa\lef u\rig\tr)\rig z)\lef y\rig
(w\lef(\fa\lef u\rig\tr)\rig z))
&&\text{by \eqref{CPstat}}\\
&= x\lef y\rig(w\lef(\fa\lef u\rig\tr)\rig z)
&&\text{by \eqref{eq:contr}}\\
&= x\lef y\rig(z\lef u\rig w),
&&\text{by \eqref{cp4}, \eqref{cp2}, \eqref{cp1}}
\end{align*}
where the contraction law~\eqref{eq:contr}
is derivable from $\CPst$: replace $y$ by $\fa\lef y\rig\tr$ in~\eqref{eq:contr2}.
Hence $\CPstat\vdash \eqref{CPmem}$.
If $u=v=\fa$ in axiom~\eqref{CPstat},
we find $\fa\lef x\rig \fa=\fa$, hence $\CPst\vdash\CPstat$.

\smallskip

In order to show that $\CPstat\vdash\CPst$ we have to derive
\(\CPstat\vdash \eqref{CPstat}\):
\begin{align*}
\CPstat\vdash~
(x\lef y\rig z)\lef u\rig v
&=(x\lef y\rig (z\lef u\rig v))\lef u\rig v
&&\text{by \eqref{CPmem'}}\\
&=(x\lef y\rig (z\lef u\rig v))\lef u\rig (z\lef u\rig v)
&&\text{by \eqref{eq:contr}}\\
&=x\lef (y\lef u\rig\fa)\rig (z\lef u\rig v)
&&\text{by \eqref{cp4}, \eqref{cp2}}\\
&=x\lef (u\lef y\rig\fa)\rig (z\lef u\rig v)
&&\text{by \eqref{eq:comm}}\\
&=(x\lef u\rig(z\lef u\rig v))\lef y\rig (z\lef u\rig v)
&&\text{by \eqref{cp4}, \eqref{cp2}}\\
&=(x\lef u\rig v)\lef y\rig (z\lef u\rig v).
&&\text{by \eqref{eq:contr}}
\end{align*}
\end{proof}

Given a finite, ordered subset of atoms we
define a proper subset of basic forms with the property that each
propositional statement over these atoms can be proved equal to such a basic form. 

\begin{definition}
\label{def:6.3}
Let $\Au\subset A^\ast$ be the set of strings over $A$ with the property that 
each $\sigma\in \Au$ contains no multiple occurrences of the same atom.\footnote{Recall
  that we write $\epsilon$ for the empty string, thus $\epsilon\in\Au$. If $A$ is finite,
  say $|A|=n$,
  then $|\Au|=a(n)$, where $a(0)=1$ and $a(k+1)=1+(k+1)\cdot a(k)$.}
\textbf{St-basic forms over $\sigma\in\Au$} are defined as follows:
\begin{itemize}\setlength\itemsep{-.1em}
\item
\tr\ and \fa\ are st-basic forms over $\epsilon$, and
\item
$P\lef a\rig Q$ is an st-basic form over $\rho a\in\Au$ if $P$
 and $Q$ are st-basic forms over $\rho$.
\end{itemize}
$P$ is an \textbf{st-basic form} if for some $\sigma\in \Au$, $P$ is an st-basic form
over $\sigma$.
\end{definition}

For example, an st-basic form over $ab\in \Au$ has the following form:
\[(B_1\lef a\rig B_2)\lef b\rig (B_3\lef a\rig B_4)\]
with $B_i\in\{\tr,\fa\}$. For $\sigma=a_1a_2\cdots a_n\in\Au$, 
there exist $2^{2^n}$ different st-basic forms over $\sigma$. 

It will turn out useful to define a function that transforms conditional statements
to st-basic forms. Therefore, given $\sigma\in\Au$
we consider terms in $\PSf{A'}$, where $A'$ is the finite subset of $A$ that contains the
elements of $\sigma$. 

\begin{definition}
\label{def:6.4}
The \textbf{alphabet function} $\ALPHA:\Au\to 2^A$ returns the set of atoms
of a string in $\Au$:
\[
\text{$\ALPHA(\epsilon)=\emptyset$,\quad and \quad 
$\ALPHA(\sigma a)=\ALPHA(\sigma)\cup\{a\}$. }\]
\end{definition}

\begin{definition}
\label{def:6.5}
Let $\sigma\in\Au$. 
The conditional statement $E^\sigma\in\BFf{\ALPHA(\sigma)}$ 
is defined as 
\[
E^\epsilon=\fa\quad
\text{and, if $\sigma=\rho a$,}\quad 
E^\sigma=E^\rho\lef a\rig E^\rho.\]
\end{definition}
So, for each $\sigma\in\Au$, $E^\sigma$ is an st-basic form over $\sigma$
in which the constant
\tr\ does not occur, e.g.,
\[E^{ab}=(\fa\lef a\rig\fa)\lef b\rig(\fa\lef a\rig\fa).\]
\begin{definition}
The \textbf{st-basic form function}
$\stbf_\sigma:\PSf{\ALPHA(\sigma)}\to \PSf{\ALPHA(\sigma)}$ is defined by
\[\stbf_\sigma(P)=\membf(\tr\lef E^\sigma\rig P),\]
where $\membf$ is defined in Definition~\ref{def:5.2}.
\end{definition}

For example, 
$\stbf_{ab}(a)=(\tr\lef a\rig\fa)\lef b\rig(\tr\lef a\rig\fa)$
and
$\stbf_{ba}(a)=(\tr\lef b\rig\tr)\lef a\rig(\fa\lef b\rig\fa)$.

\medskip

The reason that $\stse_\sigma(P)$ is defined relative to some $\sigma\in\Au$ that covers 
the alphabet of $P$ is that in order to prove completeness of $\CPstat$ (and $\CP_{st}$), 
we need to be able to relate conditional statements that contain different sets of atoms, 
but have equal st-basic forms for all appropriate $\sigma$, such as
\[\text{$\stbf_{ba}(\fa)=\stbf_{ba}(\fa\lef a\rig\fa)=\stbf_{ba}(\fa\lef b\rig\fa)
=(\fa\lef b\rig\fa)\lef a\rig(\fa\lef b\rig\fa)$.}
\]

\begin{lemma}
\label{la:6.7}
Let $\sigma\in\Au$.
For all $P\in\PS$,
$\CPstat\vdash P=\tr\lef E^\sigma\rig P$.
\end{lemma}

\begin{proof}
By induction on the structure of $\sigma$.
If $\sigma=\epsilon$, then $E^\sigma=\fa$ and by axiom~\eqref{cp2},
$\CPstat\vdash P=\tr\lef E^\epsilon\rig P$.
If $\sigma=\rho a$ for some $\rho\in\Au$ and $a\in A$, then
$E^\sigma=E^\rho\lef a\rig E^\rho$, and hence
\begin{align*}
\CPstat\vdash P
&=P\lef a\rig P
&&\text{by Lemma~\ref{la:6.1}}\\
&=(\tr\lef E^\rho\rig P)\lef a\rig(\tr\lef E^\rho\rig P)
&&\text{by IH}\\
&=\tr\lef (E^\rho\lef a\rig E^\rho)\rig P.
&&\text{by~\eqref{cp4}}
\end{align*}
\end{proof}

\begin{lemma} 
\label{la:6.8}
Let $\sigma\in\Au$.
For all $P\in\PSf{\ALPHA(\sigma)}$, $\stbf_\sigma(P)$ is an st-basic form.
\end{lemma}

\begin{proof}
We first prove an auxiliary result:
\begin{align}
\label{aux:s2}
&\begin{array}[t]{l}
\text{For all $a\in A$ and $P,Q\in\BF$, }
\ell_a(P[\fa\mapsto Q])=(\ell_a(P))[\fa\mapsto\ell_a(Q)]\\[1pt]
\text{and~}~\ri_a(P[\fa\mapsto Q])=(\ri_a(P))[\fa\mapsto\ri_a(Q)],
\end{array}
\end{align}
Both equalities follow easily by induction on the structure of $P$ and
we only show the inductive case for the first one. Choose $a\in A$.
If $P=P_1\lef a\rig P_2$, then $\ell_a(P)=\ell_a(P_1)$ and
\begin{align*}
&\ell_a(P[\fa\mapsto Q])=\ell_a(P_1[\fa\mapsto Q]\lef a\rig P_2[\fa\mapsto Q])
=\ell_a(P_1[\fa\mapsto Q])\hspace{30mm}\\
&~\stackrel{\text{IH}}=(\ell_a(P_1))[\fa\mapsto\ell_a(Q)]
=(\ell_a(P))[\fa\mapsto\ell_a(Q)],
\end{align*}
and if
$P=P_1\lef b\rig P_2$ with $b\ne a$, then $\ell_a(P)=\ell_a(P_1)\lef b\rig \ell_a(P_2)$ and
\begin{align*}
&\ell_a(P[\fa\mapsto Q])=\ell_a(P_1[\fa\mapsto Q]\lef b\rig P_2[\fa\mapsto Q])
=\ell_a(P_1[\fa\mapsto Q])\lef b\rig\ell_a(P_1[\fa\mapsto Q])\\
&~\stackrel{\text{IH}}=
(\ell_a(P_1))[\fa\mapsto\ell_a(Q)]\lef b\rig(\ell_a(P_2))[\fa\mapsto\ell_a(Q)]
=(\ell_a(P))[\fa\mapsto\ell_a(Q)].
\end{align*}

\smallskip

We prove the lemma's statement by induction on the structure of $\sigma$.
If $\sigma=\epsilon$, then each $P\in\PSf{\ALPHA(\sigma)}$ contains 
no atoms. Hence, $\baf(P)\in\{\tr,\fa\}$.
If $\baf(P)=\tr$ then
\[\stbf_\epsilon(P)=\membf(\tr\lef\fa\rig P)=\mem(\baf(\tr\lef\fa\rig P))
=\mem(\baf(P))=\tr,\]
which is an st-basic form over $\epsilon$. The case for $\baf(P)=\fa$ is similar.

If $\sigma=\rho a$ for some $\rho\in\Au$ and $a\in A$, then for each 
$P\in\PSf{\ALPHA(\sigma)}$,
\begin{align}
\nonumber
&\stbf_\sigma(P)\\
\nonumber
&=\membf(\tr\lef E^\sigma\rig P)\\
\nonumber
&=\mem(\baf(\tr\lef E^\sigma\rig P))\\
\nonumber
&=\mem(E^\sigma[\fa\mapsto\baf(P)])
&&\text{by Lemma~\ref{la:2.14}}\\
\nonumber
&=\mem(E^\rho[\fa\mapsto\baf(P)]\lef a\rig E^\rho[\fa\mapsto\baf(P)])\\
\nonumber
&=\mem(\ell_{a}(E^\rho[\fa\mapsto\baf(P)]))\lef a\rig 
  \mem(\ri_{a}(E^\rho[\fa\mapsto\baf(P)]))
\\
\nonumber
&=\mem(\ell_a(E^\rho)[\fa\mapsto\ell_{a}(\baf(P))])\lef a\rig 
  \mem(\ri_a(E^\rho)[\fa\mapsto\ri_{a}(\baf(P))])
&&\text{by~\eqref{aux:s2}}\\
\nonumber
&=\mem(E^\rho[\fa\mapsto\ell_{a}(\baf(P))])\lef a\rig 
  \mem(E^\rho[\fa\mapsto\ri_{a}(\baf(P))])
&&\text{by $a\not\in\ALPHA(\rho)$}\\
\nonumber
&=\mem(E^\rho[\fa\mapsto\baf(\ell_{a}(\baf(P)))])\lef a\rig 
  \mem(E^\rho[\fa\mapsto\baf(\ri_{a}(\baf(P)))])
&&\text{by Lemma~\ref{la:2.14}}\\
\nonumber
&=\mem(\baf(\tr\lef E^\rho\rig\ell_{a}(\baf(P))))\lef a\rig 
  \mem(\baf(\tr\lef E^\rho\rig\ri_{a}(\baf(P))))
\\
\label{aux:s1}
&=\stbf_\rho(\ell_{a}(\baf(P)))\lef a\rig \stbf_\rho(\ri_{a}(\baf(P))),
&&\text{by~IH}
\end{align}
where~\eqref{aux:s1} follows because 
$\ell_{a}(\baf(P))$ and $\ri_{a}(\baf(P))$ are conditional statements in 
$\PSf{\ALPHA(\rho)}$ (thus, not containing $a$), so by induction,
$\stbf_\rho(\ell_{a}(\baf(P)))$ and $\stbf_\rho(\ri_{a}(\baf(P)))$
are st-basic forms over $\rho$. Hence,
$\stbf_\sigma(P)$ is an st-basic form over $\sigma$.
\end{proof}

With Lemma~\ref{la:6.8} we can easily prove the following result.

\begin{proposition}[{\normalfont $\stbf_\sigma$} is a normalization function]
\label{prop:stat}
Let $\sigma\in \Au$. For each $P\in\PSf{\ALPHA(\sigma)}$, $\stbf_\sigma(P)$ is an st-basic form
over $\sigma$, and
for each st-basic form $P$ over $\sigma$, $\stbf_\sigma(P)= P$.
\end{proposition}

\begin{proof}
The first statement is Lemma~\ref{la:6.8}. 
We prove the second statement by induction on the structure of $\sigma$. 
If $\sigma=\epsilon$, $\stbf_\sigma(P)= P$ by definition.

If $\sigma=\rho a$, then $P=P_1\lef a\rig P_2$
with $P_i$ st-basic forms over $\rho$, thus $\ell_a(P_1)=P_1$ and $\ri_a(P_2)=P_2$. 
For brevity, we identify below $\baf(Q)$ and $Q$ for all $Q\in\BF$:
\begin{align*}
\stbf_\sigma(P)
&=\membf(\tr\lef E^{\rho a}\rig (P_1\lef a\rig P_2))\\
&=\mem(E^{\rho a}[\fa\mapsto P_1\lef a\rig P_2])
&&\text{as above}\\
&=\mem(E^\rho[\fa\mapsto P_1\lef a\rig P_2] \lef a\rig E^\rho[\fa\mapsto P_1\lef a\rig P_2])\\
&=\mem(\ell_a(E^\rho[\fa\mapsto P_1\lef a\rig P_2])\lef a\rig 
  \mem(\ri_a(E^\rho[\fa\mapsto P_1\lef a\rig P_2]))\\
&=\mem(\ell_a(E^\rho)[\fa\mapsto \ell_a(P_1\lef a\rig P_2]))\lef a\rig \mem(\ri_a(E^\rho)
  [\fa\mapsto \ri_a(P_1\lef a\rig P_2]))
&&\text{by~\eqref{aux:s2}}\\
&=\mem(E^\rho[\fa\mapsto P_1])\lef a\rig \mem(E^\rho[\fa\mapsto P_2])
&&\text{by $a\not\in\ALPHA(\rho)$}\\
&=\stbf_\rho(P_1)\lef a\rig \stbf_\rho(P_2)
\\
&=P_1\lef a\rig P_2.
&&\text{by IH}
\end{align*}
\end{proof}

\begin{lemma}
\label{la:6.10}
Let $\sigma\in\Au$. 
For all $P\in\PSf{\ALPHA(\sigma)}$, $\CPstat\vdash P=\stbf_\sigma(P)$.
\end{lemma}

\begin{proof}
By Lemma~\ref{la:6.7}, $\CPstat\vdash P=\tr\lef E^\sigma\rig P$.
By Theorem~\ref{thm:5.7},
$\CPstat\vdash \tr\lef E^\sigma\rig P = 
\\ 
\membf(\tr\lef E^\sigma\rig P)$, 
hence
$\CPstat\vdash P=\stbf_\sigma(P)$.
\end{proof}

\begin{definition}
Let $\sigma\in\Au$. The binary relation $\vcstbf{\sigma}$ on $\PSf{\ALPHA(\sigma)}$
is defined as follows: 
\[
P\vcstbf{\sigma} Q~\iff~\stbf_{\sigma}(P)=\stbf_{\sigma}(Q).
\]
\end{definition}

\begin{theorem}
\label{thm:6.12}
Let $\sigma\in\Au$. For all $P,Q\in\PSf{\ALPHA(\sigma)}$, 
$\CPstat\vdash P=Q~\iff P\vcstbf{\sigma} Q$.
\end{theorem}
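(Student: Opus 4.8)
The plan is to recognize that Theorem~\ref{thm:6.12} is a purely notational restatement of Theorem~\ref{thm:6.9} and to discharge it by unfolding Definition~\ref{def:6.11}. That definition introduces the relation $\vcstbf{\sigma}$ as nothing more than a name for equality of st-basic forms: for $P,Q\in\PSf{\ALPHA(\sigma)}$ we have $P\vcstbf{\sigma} Q$ exactly when $\stbf_\sigma(P)=\stbf_\sigma(Q)$. Hence the right-hand side of the equivalence to be proven coincides, symbol for symbol, with the right-hand side of Theorem~\ref{thm:6.9}.

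Concretely, I would fix $\sigma\in\Au$ and $P,Q\in\PSf{\ALPHA(\sigma)}$ and chain two equivalences. By Definition~\ref{def:6.11}, $P\vcstbf{\sigma} Q$ is equivalent to $\stbf_\sigma(P)=\stbf_\sigma(Q)$; by Theorem~\ref{thm:6.9}, $\CPstat\vdash P=Q$ is equivalent to $\stbf_\sigma(P)=\stbf_\sigma(Q)$. Composing these gives $\CPstat\vdash P=Q \iff P\vcstbf{\sigma} Q$, which is the claim. This is the same ``combine a normalization theorem with the definition of the associated relation'' move that closes the completeness results of the earlier sections (cf.\ the proofs of Theorems~\ref{thm:3.17}, \ref{thm:4.16}, and~\ref{thm:5.18}).

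I expect no genuine obstacle at the level of Theorem~\ref{thm:6.12}: its entire mathematical content has already been spent in Theorem~\ref{thm:6.9}, and Definition~\ref{def:6.11} merely repackages $\stbf_\sigma(P)=\stbf_\sigma(Q)$ under a new symbol. Were one to insist on a self-contained argument, the difficulty would be precisely that of re-establishing Theorem~\ref{thm:6.9}: the forward (soundness) direction leans on Lemma~\ref{la:6.8}, hence on Lemma~\ref{la:6.6} and Theorem~\ref{thm:5.9}, while the backward (completeness) direction transfers the problem through Proposition~\ref{prop:3} to the axiom set $\CPst$ and then invokes the two imported facts from~\cite{BP10} about the relation $=_{st}$, together with Lemma~\ref{la:6.7} (which guarantees that $\stbf_\sigma$ indeed lands in st-basic forms, so that injectivity on st-basic forms applies). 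None of this must be redone here, so the proof reduces to the single step of combining Theorem~\ref{thm:6.9} with Definition~\ref{def:6.11}.
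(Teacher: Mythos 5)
Your proposal is correct and matches the paper's own proof: Theorem~\ref{thm:6.12} is indeed discharged by unfolding Definition~\ref{def:6.11} and invoking Theorem~\ref{thm:6.9}, exactly as you describe (the paper's forward direction takes a small, inessential detour through Lemma~\ref{la:6.8} before citing Theorem~\ref{thm:6.9}). Your accounting of where the real mathematical work lives |in Theorem~\ref{thm:6.9} via Lemmas~\ref{la:6.6}--\ref{la:6.8}, Proposition~\ref{prop:3}, and the imported results on $=_{st}$| is also accurate.
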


\begin{proof}
($\Rightarrow$) Assume $\CPstat\vdash P=Q$. Then, by Lemma~\ref{la:6.10},
$\CPstat\vdash \stbf_\sigma(P)=\stbf_\sigma(Q)$, 
and by Proposition~\ref{prop:6.2},
$\CP_{st}\vdash \stbf_\sigma(P)=\stbf_\sigma(Q)$.
In~\cite{BP10} the following two
statements are proved (Theorem~9.1 and an auxiliary result in its proof),
where $=_{st}$ is a binary relation on $\PS$:
\begin{enumerate}\setlength\itemsep{-.1em}
\item
For all $P,Q\in\PS$,\quad
$\CP_{st}\vdash P=Q ~\iff~ P=_{st} Q$.
\item
For all st-basic forms $P$ and $Q,\quad P=_{st} Q ~\Rightarrow P= Q$.
\end{enumerate}
By Lemma~\ref{la:6.8} 
these statements imply $\stbf_\sigma(P)=\stbf_\sigma(Q)$, and thus $P\vcstbf{\sigma} Q$.

\smallskip

($\Leftarrow$)
Assume $P\vcstbf{\sigma} Q$, thus 
$\tr\lef E^\sigma\rig P=_\membf \tr\lef E^\sigma\rig Q$. By Theorem~\ref{thm:5.9}, 
$\CPmem\vdash \tr\lef E^\sigma\rig P= \tr\lef E^\sigma\rig Q$, and by Lemma~\ref{la:6.7}
this implies $\CPstat\vdash P=Q$.
\end{proof}

Hence, the relation $\vcstbf{\sigma}$ is
a congruence on $\PSf{\ALPHA(\sigma)}$ that is axiomatized by $\CPstat$. 
We define a transformation on evaluation trees that mimics the function
$\stbf_\sigma$ and prove that equality of two such transformed trees characterizes 
the congruence that is axiomatized by $\CPstat$.

\begin{definition}
Let $\sigma\in\Au$. The unary \textbf{static evaluation function} 
\[\stse_\sigma:\PSf{\ALPHA(\sigma)}\to \T\]
yields \textbf{static evaluation trees} and is defined as follows:
\[\stse_\sigma(P)=\memse(\tr\lef E^\sigma\rig P),\]
where $E^\sigma$ is defined in Definition~\ref{def:6.5} and $\memse$ in 
Definition~\ref{def:5.12}.
\end{definition}

As an example, let $P=(a\lef b\rig\fa)\lef a\rig\tr$. We depict 
$se(P)$ at the left-hand side. 
The static evaluation tree $\stse_{ba}(P)$
is depicted in the middle, and the static evaluation tree
$\stse_{ab}(P)$
is depicted at the right-hand side:
\[
\begin{array}{lll}
\begin{array}{l}
\begin{tikzpicture}[%
level distance=7.5mm,
level 1/.style={sibling distance=30mm},
level 2/.style={sibling distance=15mm},
level 3/.style={sibling distance=7.5mm}
]
\node (a) {$a$}
  child {node (b1) {$b$}
    child {node (c1) {$a$}
      child {node (d1) {$\tr$}} 
      child {node (d2) {$\fa$}}
    }
    child {node (c2) {$\fa$}
    }
  }
  child {node (b2) {$\tr$}
  };
\end{tikzpicture}
\end{array}
&\quad
\begin{array}{l}
\qquad
\begin{tikzpicture}[%
level distance=7.5mm,
level 1/.style={sibling distance=15mm},
level 2/.style={sibling distance=7.5mm},
level 3/.style={sibling distance=7.5mm}
]
\node (a) {$a$}
  child {node (b1) {$b$}
    child {node (c1) {$\tr$}
    }
    child {node (c2) {$\fa$}
    }
  }
  child {node (b2) {$b$}
    child {node (c3) {$\tr$}
    }
    child {node (c4) {$\tr$}
    }
  };
\end{tikzpicture}
\\[8mm]
\end{array}
&\quad
\begin{array}{l}
\qquad
\begin{tikzpicture}[%
level distance=7.5mm,
level 1/.style={sibling distance=15mm},
level 2/.style={sibling distance=7.5mm},
level 3/.style={sibling distance=7.5mm}
]
\node (a) {$b$}
  child {node (b1) {$a$}
    child {node (c1) {$\tr$}
    }
    child {node (c2) {$\tr$}
    }
  }
  child {node (b2) {$a$}
    child {node (c3) {$\fa$}
    }
    child {node (c4) {$\tr$}
    }
  };
\end{tikzpicture}
\\[8mm]
\end{array}
\end{array}
\]
The two different static evaluation trees correspond to the different ways in which one 
can present truth tables for $P$, that is, the different possible
orderings of the valuation values of the atoms occurring in $P$:
\[
\renewcommand*{\arraystretch}{1.2}
\begin{array}{ll|c}
a&b&~(a\lef b\rig\fa)\lef a\rig\tr~\\\hline
\tr&\tr~&\tr\\
\tr&\fa&\fa\\
\fa&\tr&\tr\\
\fa&\fa&\tr
\end{array}
\hspace{2cm}
\begin{array}{ll|c}
b&a&~(a\lef b\rig\fa)\lef a\rig\tr~\\\hline
\tr&\tr~&\tr\\
\tr&\fa&\tr\\
\fa&\tr&\fa\\
\fa&\fa&\tr
\end{array}
\]

The similarities between $\stse_\sigma$ and the function $\stbf_\sigma$ 
can be exploited and lead to our final completeness result.

\begin{definition}
Let $\sigma\in \Au$.
\textbf{Static valuation congruence over $\sigma$}, notation $\vcstse\sigma$, is defined on 
$\PSf{\ALPHA(\sigma)}$ as follows:
\[P\vcstse\sigma Q~\iff~
\stse_{\sigma}(P)=\stse_{\sigma}(Q).
\]
\end{definition}

The following characterization result immediately implies that for all $\sigma\in\Au$,
$\vcstse\sigma$ is indeed a congruence relation on $\PSf{\ALPHA(\sigma)}$.

\begin{proposition}
\label{prop:6.16}
Let $\sigma\in\Au$. For all $P,Q\in\PSf{\ALPHA(\sigma)}$,
\(P\vcstse{\sigma} Q\iff P\vcstbf\sigma Q.
\)
\end{proposition}

\begin{proof}
This follows by Proposition~\ref{prop:5.13}.
\end{proof}

We end this section with a completeness result for 
static valuation congruence.

\begin{theorem}[Completeness of $\CPstat$]
Let $\sigma\in \Au$.
For all $P,Q\in\PSf{\ALPHA(\sigma)}$, 
\[\CPstat\vdash P=Q
~\iff~ 
P\vcstse\sigma Q.\]
\end{theorem}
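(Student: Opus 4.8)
The plan is to obtain this final completeness result purely by assembly, exactly as the three preceding completeness theorems (Theorems~\ref{thm:3.17}, \ref{thm:4.16}, and~\ref{thm:5.18}) were obtained: each chains a syntactic characterization of provable equality with a semantic bridge identifying the tree-based congruence with the corresponding basic-form congruence. Here both ingredients are already in place for the static case, so no new argument is needed beyond linking them.

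First I would invoke Theorem~\ref{thm:6.12}, which supplies the syntactic half: for $P,Q\in\PSf{\ALPHA(\sigma)}$ we have $\CPstat\vdash P=Q$ if and only if $P\vcstbf{\sigma}Q$, that is, if and only if $\stbf_\sigma(P)=\stbf_\sigma(Q)$. This is where essentially all the work lives, resting on Lemma~\ref{la:6.8} (provability of $P=\stbf_\sigma(P)$), the normalization property of $\stbf_\sigma$ from Lemma~\ref{la:6.7}, Proposition~\ref{prop:3} (that $\CPst$ and $\CPstat$ are equally strong), and the imported completeness and injectivity facts for $\CPst$ from~\cite{BP10}. Second I would invoke Proposition~\ref{prop:6.14}, the semantic bridge: $P\vcstse{\sigma}Q$ if and only if $P\vcstbf{\sigma}Q$. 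Via the definitions $\stse_\sigma(P)=\memse(\tr\lef E^\sigma\rig P)$ and $\stbf_\sigma(P)=\membf(\tr\lef E^\sigma\rig P)$, this reduces directly to Proposition~\ref{prop:5.17} relating memorizing evaluation trees to mem-basic forms. Concatenating the two biconditionals then yields
\[\CPstat\vdash P=Q~\iff~P\vcstbf{\sigma}Q~\iff~P\vcstse{\sigma}Q,\]
which is exactly the claim.

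Since both halves are already proved, there is no genuine obstacle remaining at this stage. If one asks where the difficulty was concentrated, it lies in the device of prepending $\tr\lef E^\sigma\rig P$: the extra atoms recorded in $\sigma$ (encoded by $E^\sigma$) are precisely what allows conditional statements over different alphabets, such as $\fa\lef a\rig\fa$ and $\fa\lef b\rig\fa$, to be compared on a common tree, with correctness guaranteed by Lemma~\ref{la:6.6}. Relative to that preparatory work, the present theorem is a one-line corollary, and I would state it simply as a combination of Theorem~\ref{thm:6.12} and Proposition~\ref{prop:6.14}.
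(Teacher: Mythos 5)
Your proof is correct and matches the paper's own argument exactly: the paper also obtains Theorem~\ref{thm:6.15} by combining Theorem~\ref{thm:6.12} with Proposition~\ref{prop:6.14}. Your additional remarks on where the real work lies (the $\tr\lef E^\sigma\rig P$ device and the imported results for $\CP_{st}$) are accurate but not part of the proof itself.
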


\begin{proof}
Combine Theorem~\ref{thm:6.12} and Proposition~\ref{prop:6.16}.
\end{proof}

\section{Conclusions}
\label{sec:Conc}

In~\cite{BP10} we introduced proposition 
algebra using Hoare's conditional $x\lef y\rig z$
and the constants $\tr$ and $\fa$.
We defined a number of varieties of so-called
\emph{valuation algebras} in order to capture different semantics for the 
evaluation of conditional statements, and provided axiomatizations for
the resulting valuation congruences: 
$\CP$ (four axioms) characterizes the least identifying valuation congruence
we consider, and the extension $\CPmem$
(one extra axiom) characterizes the
most identifying valuation congruence below ``sequential propositional
logic'' (SPL), while static valuation congruence, axiomatized by adding the simple 
axiom $\fa\lef x\rig\fa=\fa$
to $\CPmem$, can be seen as a characterization of SPL.
In~\cite{HMA,BP12a} we introduced an alternative valuation semantics
for proposition algebra in the form of \emph{Hoare-McCarthy algebras} (HMA's)
that is more elegant than the semantical framework
provided in~\cite{BP10}: HMA-based semantics
has the advantage that one can define a valuation congruence
without first defining the 
valuation \emph{equivalence} it is contained in. 

\medskip

In this paper, we use Staudt's evaluation trees~\cite{Daan} to define free 
valuation congruence as the relation $=_\fr$ (\S\ref{sec:free}), 
and this appears to be a relatively simple and stand-alone exercise, 
resulting in a semantics that is elegant and much simpler than 
HMA-based semantics~\cite{HMA,BP12a} and the semantics defined in~\cite{BP10}.
By Theorem~\ref{thm:1}, $=_\fr$ coincides with ``free valuation congruence as 
defined in~\cite{BP10}''
because both relations are axiomatized by $\CP$ (see \cite[Thm.4.4 and Thm.6.2]{BP10}).
The advantage of ``evaluation tree semantics'' is that for a given conditional statement $P$,
the evaluation tree $se(P)$ determines all relevant evaluations, so $P=_\fr Q$
is determined by evaluation trees that contain no more atoms 
than those that occur in 
$P$ and $Q$; this is comparable to how truth tables can be used in the setting of
propositional logic.

In \S\ref{sec:rp} we define repetition-proof valuation congruence $=_\rpse$
on $\PS$ by $P=_\rpse Q$ if, and only if, $\rpse(P)=\rpse(Q)$, where
$\rpse(P)=\rpt(se(P))$ and $\rpt$ is a transformation function
on evaluation trees. 
It is obvious that
this transformation is ``natural'', given the axiom schemes $\eqref{CPrp1}$ and 
\eqref{CPrp2} that are characteristic for $\CPrp$.
The equivalence on $\PS$ that we want to prove is
\begin{equation}
\label{finaleq}
\text{$\CPrp\vdash P=Q~\iff~ P=_\rpse Q$},
\end{equation}
by which $=_\rpse$ coincides with 
``repetition-proof valuation congruence
as defined in~\cite{BP10}'' because both are axiomatized by $\CPrp$ 
(see \cite[Thm.6.3]{BP10}).
So, by equivalence~\eqref{finaleq}, $=_\rpse$ 
is a \emph{congruence} relation on \PS. However, we could not find a direct proof
of this fact and we chose to simulate the transformation $\rpse$
by the transformation $\rpbf$ on conditional statements,
and to prove that the associated equivalence relation $=_\rpbf$ is 
axiomatized by $\CPrp$, and is hence a congruence. This is Theorem~\ref{thm:3.11},
the proof of which depends on~\cite[Thm.6.3]{BP10}\footnote{%
  This theorem requires that $|A|>1$, and so does the HMA approach in~\cite{HMA}.}
\emph{and} on Theorem~\ref{thm:3.9}, that is,
\[\text{For all $P\in\PS$, $\CPrp\vdash P=\rpbf(P)$}.\]
In order to prove~\eqref{finaleq} (which is Theorem~\ref{thm:3.17}), 
it is thus sufficient to prove
that $=_\rpbf$ and $=_\rpse$ coincide, and this is Proposition~\ref{prop:3.16}.
Although it remains a challenge to find a direct and elegant proof 
of equivalence~\eqref{finaleq},
we can conclude that repetition-proof evaluation trees and the valuation 
congruence $=_\rpse$ provide a full-fledged, simple and elegant semantics for $\CPrp$.

The structure of our completeness proofs of the axiomatizations for the other valuation 
congruences is very similar, although the case for static valuation congruence 
requires a slightly more complex proof (below we return to this point).
Moreover, these axiomatizations are incremental:
the axiom systems $\CPrp$ up to and including  
$\CPstat$ all share the axioms
of \CP, and each succeeding system is defined by 
the addition of either one or two axioms, in most cases making 
previously added axiom(s) redundant. Given some $\sigma\in\Au$,
this implies that in $\PSf{\ALPHA(\sigma)}$,
\[=_\fr~\subseteq~=_\rpse~\subseteq~=_\crse~\subseteq~ =_\memse~\subseteq~ \vcstse\sigma,\]
where all these inclusions are proper.
We conclude that for the valuation congruences $=_\fr$ up to $=_\memse$, 
the associated evaluation trees provide a full-fledged, simple and elegant semantics.

The case for static valuation congruence over $\PSf{\ALPHA(\sigma)}$ for some $\sigma\in\Au$ 
is somewhat more involved.
This semantics coincides with any standard semantics of propositional logic in the following sense: 
\[\text{$P\vcstse\sigma Q$ \quad
if, and only if,
\quad
$\overline P\leftrightarrow \overline Q$~ is a tautology in propositional logic,}\]
where $\overline P$ and $\overline Q$ refer to Hoare's definition~\cite{Hoa85}:
\[\overline{x\lef y\rig z}=
(\overline x\wedge\overline y)\vee(\neg\overline y\wedge\overline z),\quad\overline a=a,\quad 
\overline\tr=\tr,\quad\overline\fa=\fa.\]
Let $\sigma\in\Au$ and $a\in\ALPHA(\sigma)$. 
The fact that $\vcstse \sigma$ identifies more than $=_\memse$ is immediately clear: 
\[\fa\lef a\rig\fa\vcstse{\sigma}\fa,\]
while it is easy to see that $\fa\lef a\rig\fa\ne_\memse\fa$.
Our proof that $\CPstat$ (and thus $\CP_{st}$) is an axiomatization of
static valuation congruence is slightly more complex than those for the other
axiomatizations because in this case the evaluation of a conditional statement $P$ does not enforce
a canonical order for the evaluation of its atoms, and 
therefore such an ordering should be fixed 
beforehand in order to construct an adequate evaluation tree. 
To this purpose, we can use any $\sigma\in\Au$ that covers the atoms in $P$. 

\medskip

A spin-off of our
approach can be called ``normalization functions for proposition algebra'': 
for each valuation congruence C considered, 
two conditional statements are C-valuation congruent if, and only if, 
the basic form function for C returns the same images.\footnote{%
  We use the term ``basic form'' instead of ``normal form'' because  according to \CP,
  the obvious definition of a normal form $t\in\PS$ is that either $t\in A\cup\{\tr,\fa\}$, 
  or $t$ satisfies the property ``if $t_1 \lef t_2\rig t_3$ is a subterm of $t$, then 
  $t_2 \in A$ and it is not the case that $t_1 = \tr$ and $t_3=\fa$''. 
  Also with respect to the other axiomatizations, a single atom $a\in A$ would be a 
  typical normal form.}

\medskip

We conclude with a brief digression on \emph{short-circuit logic}, which we defined 
in~\cite{BPS13} (see~\cite{BP12a} for a quick introduction), and an example on the use
of $\CPrp$.
Familiar binary connectives that 
occur in the context of imperative programming and that prescribe 
short-circuit evaluation, such as 
\texttt{\&\&} (in C called ``logical AND''), are often defined in the following 
way:
\[
P\;\texttt{\&\&}\; Q ~=_{\text{def}}~ \texttt{if }P\texttt{ then }Q\texttt{ else }\false,\]
independent of the precise syntax of $P$ and $Q$,
hence, $P\;\texttt{\&\&}\;Q=_{\text{def}}Q\lef P\rig\fa$.
It easily follows that \texttt{\&\&} is associative (cf.\ Footnote~\ref{fn1}).
In a similarly way, negation can be defined by $\neg P=_{\text{def}}\fa\lef P\rig\tr$.
In~\cite{BPS13} we focus on this question:
\begin{question}
\label{qu:what}
Which are the logical laws that characterize short-circuit 
evaluation of binary propositional connectives?
\end{question}
A first approach to this question is to adopt the conditional as an auxiliary operator, 
as is done in~\cite{BPS13,BP12a}, and to answer Question~\ref{qu:what} using
definitions of the binary propositional connectives as above and the
axiomatization for the valuation congruence of interest in proposition algebra 
(or, if ``mixed conditional statements'' are at stake, axiomatizations for the appropriate
valuations congruences).
An alternative and more direct
approach to Question~\ref{qu:what} is to establish axiomatizations for 
short-circuited binary connectives in which the conditional is \emph{not} used. 
For free valuation congruence, an equational axiomatization of 
short-circuited binary propositional connectives 
is provided by Staudt in~\cite{Daan,PS17}, where 
$se(P\;\texttt{\&\&}\;Q)=_{\text{def}}se(P)[\tr\mapsto se(Q)]$ 
and $se(\neg P)=_{\text{def}}se(P)[\tr\mapsto\fa,\fa \mapsto\tr]$ 
(and where the function $se$ is also defined for short-circuited disjunction), 
and the associated completeness 
proof is based on decomposition properties of such evaluation trees. 
For repetition-proof and contractive valuation congruence we conjecture that a 
finite equational axiomatization of the short-circuited binary propositional 
connectives does not exist if $|A|>2$.  
We conclude with an example on the use of $\CPrp$ that is
based on~\cite[Ex.4]{BPS13}.

\begin{example}\rm
\label{ex:rp2}
Let $A$ be a set of atoms of the form \texttt{($e$==$e'$)} and \texttt{(n=$e$)}
with $\texttt n$ some initialized program variable and $e,e'$ arithmetical expressions over 
the integers that may contain $\texttt n$.
Assume that \texttt{($e$==$e'$)} evaluates to \true\ if $e$ and $e'$
represent the same value, and \texttt{(n=$e$)} always evaluates to 
\emph{true} with the effect that $e$'s value is assigned to $\texttt n$.
Then these atoms satisfy the axioms of $\CPrp$.\footnote{Of
  course, not all equations that are valid in the setting of Example~\ref{ex:rp2}
  follow from $\CPrp$, e.g., $\CPrp\not\vdash\texttt{(0==0)}=\tr$.
  We note that a particular consequence of $\CPrp$ in the setting of short-circuit logic
  is 
  $(\neg a\;\texttt{\&\&}\;a)\;\texttt{\&\&}\;x=\neg a\;\texttt{\&\&}\;a$
  (cf.\ Example~\ref{ex:rp}), and that Example~\ref{ex:rp2} is related to 
  the work of Wortel~\cite{Wortel},
  where an instance of \emph{Propositional Dynamic Logic}~\cite{DynLog1} 
  is investigated in which assignments can be turned into tests;
  the assumption that such tests always evaluate to \emph{true} is natural because 
  the assumption that assignments always succeed is natural.}
Notice that if $\texttt n$ has initial value 0 or 1,
$(\texttt{(n=n+1)}\;\texttt{\&\&}\;\texttt{(n=n+1)})\;\texttt{\&\&}\;\texttt{(n==2)}$ and
$\texttt{(n=n+1)}\;\texttt{\&\&}\;\texttt{(n==2)}$ evaluate to different results,
so the atom \texttt{(n=n+1)} does not satisfy the law $a\;\texttt{\&\&}\;a=a$,
by which this example is typical for the repetition-proof
characteristic of $\CPrp$.
\qedex\end{example}

We finally note that all valuation
congruences considered in this paper can be used as a basis for systematic analysis of 
the kind of \emph{side effects}
that may occur upon the evaluation of short-circuited connectives as in Example~\ref{ex:rp},
and we quote these words of Parnas~\cite{Par10}:
\begin{quote}\it
``Most mainline methods disparage side effects as a bad programming practice. 
Yet even in well-structured, reliable software, many components do have side 
effects; side effects are very useful in practice. It is time to investigate 
methods that deal with side effects as the normal case.''
\end{quote}

\end{document}